\crefname{equation}{}{}
\newcolumntype{$}{>{\global\let\currentrowstyle\relax}}
\newcolumntype{^}{>{\currentrowstyle}}
\newcommand{\rowstyle}[1]{\gdef\currentrowstyle{#1}%
  #1\ignorespaces
}
\newtheorem{theorem}{Theorem}[section]
\newtheorem{proposition}{Proposition}[section]
\newtheorem{lemma}{Lemma}[section]
\theoremstyle{definition}
\newtheorem{assumption}{Assumption}
\crefname{assumption}{assumption}{assumptions}
\theoremstyle{remark}
\newtheorem{remark}{Remark}[section]
\numberwithin{equation}{section}
\renewcommand{\theequation}{\thesection.\arabic{equation}}
\newcommand{\Alpha}{\mathrm{A}}
\newcommand{\Beta}{\mathrm{B}}
\newcommand{\NC}{\mathrm{NC}}
\newcommand{\calC}{\mathcal{C}}
\newcommand{\RR}{\mathrm{RR}}
\newcommand{\Prob}{\mathrm{P}}
\newcommand{\real}{\mathbb{R}}
\newcommand\independent{\protect\mathpalette{\protect\independenT}{\perp}}
\def\independenT#1#2{\mathrel{\rlap{$#1#2$}\mkern2mu{#1#2}}}
\begin{document}

\begin{frontmatter}

  \title{Confounder Adjustment in
    Multiple Hypothesis Testing
  }
  \runtitle{Confounder Adjustment}
  \thankstext{T1}{The
    authors thank Bhaswar Bhattacharya, Murat Erdogdu, Jian Li,
    Weijie Su and Yunting Sun for helpful discussion.}

  \begin{aug}
    \author{\fnms{Jingshu}  \snm{Wang}\thanksref{t2}\ead[label=e1]{jingshuw@stanford.edu}},
    \author{\fnms{Qingyuan} \snm{Zhao}\corref{}\thanksref{t2}\ead[label=e2]{qyzhao@stanford.edu}},
    \author{\fnms{Trevor}  \snm{Hastie}\thanksref{t3}\ead[label=e3]{hastie@stanford.edu}},
    \author{\fnms{Art B.}  \snm{Owen}\thanksref{t4} \ead[label=e4]{owen@stanford.edu}}

    \thankstext{t2}{The first two authors contributed equally to this
      paper.}
    \thankstext{t3}{Supported in part by NSF Grant DMS-1407548 and NIH Grant 5R01-EB-001988-21.}
    \thankstext{t4}{Supported in part by NSF Grant DMS-1521145.}

    \runauthor{J. Wang et al.}

    \affiliation{Stanford University}

    \address{Department of Statistics\\
      Stanford University\\
      390 Serra Mall\\
      Stanford, California 94305\\
      USA\\
      \printead{e1}\\
      \printead{e2}\\
      \printead{e3}\\
      \printead{e4}}

  \end{aug}

  \begin{abstract}
    We consider large-scale studies in which thousands of significance
tests are performed simultaneously. In some of these studies, the
multiple testing procedure can be severely biased by latent confounding factors such as batch
effects and unmeasured covariates that correlate with both primary
variable(s) of interest (e.g.\ treatment variable, phenotype) and the
outcome. Over the past decade, many statistical methods have been proposed
to adjust for the confounders in hypothesis testing.
We unify these methods in
the same framework, generalize them to include multiple primary variables and
multiple nuisance variables, and analyze their statistical properties. In
particular, we provide theoretical guarantees for RUV-4
\citep{gagnon2013} and LEAPP \citep{sun2012}, which correspond to two
different identification conditions in the framework:
the first requires a set of ``negative controls'' that are known
\textit{a priori} to follow the null distribution; the second requires
the true non-nulls to be sparse.
Two different estimators which are 
based on RUV-4 and
LEAPP are then applied to these two scenarios.
We show that if the confounding
factors are strong, the resulting estimators can be asymptotically as
powerful as the oracle estimator which observes the latent confounding
factors.
For hypothesis testing, we show the asymptotic $z$-tests based on the
estimators can control the type I error. Numerical experiments show
that the false discovery rate is also controlled by the
Benjamini-Hochberg procedure when the sample size is reasonably large.

  \end{abstract}

  \begin{keyword}[class=MSC]
    \kwd[primary ]{62J15}
    \kwd[; secondary ]{62H25}
  \end{keyword}

  \begin{keyword}
    \kwd{empirical null}
    \kwd{surrogate variable analysis}
    \kwd{unwanted variation}
    \kwd{batch effect}
    \kwd{robust regression}
  \end{keyword}

\end{frontmatter}

\section{Introduction}
\label{sec:introduction}


Multiple hypothesis testing has become
an important statistical problem for many scientific fields, where
tens of thousands of tests are typically performed simultaneously.
Traditionally the tests are assumed to be
independent of each other, so the
false discovery rate (FDR) can be easily controlled by e.g.,
the Benjamini-Hochberg procedure \citep{benjamini1995}. Recent years have witnessed an extensive investigation of
multiple hypothesis testing under dependence, ranging from
permutation tests \citep{korn2004,tusher2001},
positive dependence
\citep{benjamini2001}, weak dependence \citep{clarke2009,storey2004}, accuracy calculation under
dependence \citep{efron2007,owen2005}
 to mixture models \citep{efron2010,sun2009} and latent factor models \citep{fan2012,fan2013,lan2014}. Many of these works provide theoretical guarantees for
FDR control under the assumption that
the individual test statistics are valid and may even be
correlated.

In this paper, we investigate a more challenging setting.
The test statistics may be correlated with each other due to
latent factors and those latent factors may also be correlated
with the variable of interest.
As a result, the test statistics are not only correlated but are also confounded.
We use the phrase ``confounding'' to emphasize
that these latent factors can significantly bias the individual
p-values, therefore this problem is fundamentally different from the
literature in the previous paragraph and poses an immediate threat to
the reproducibility of the discoveries. Many confounder adjustment
methods have already been proposed for multiple testing over the last
decade \citep{gagnon2013,leek2008,price2006,sun2012}. Our goal is to
unify these methods in the same framework and study their statistical properties.

\vspace{3mm}

\textit{The confounding problem.~~} We start with three real data examples to illustrate the confounding problem. The first microarray data (\Cref{fig:COPD-naive}) is used by \citet{singh2011} to identify candidate genes
associated with a chronic lung disease called emphysema. The second (\Cref{fig:gender-naive,fig:gender-batch}) and
third (\Cref{fig:alzheimers-batch}) data are used by \citet{gagnon2013} to
study the performance of various confounder adjustment methods. For each dataset, we plot the histogram of
t-statistics of a simple linear model that regresses the gene
expression on the variable of interest (disease status for the first
and gender for the second and third datasets). These statistics are
commonly used in genome-wide association studies (GWAS) to find
potentially interesting genes. See
\Cref{sec:datasets} for more detail of these datasets.

The histograms of t-statistics in \Cref{fig:confounding-problem}
clearly depart from the approximate theoretical null distribution
$\mathrm{N}(0,1)$. The bulk of the test statistics can be skewed
(\Cref{fig:COPD-naive,fig:gender-naive}), overdispersed
(\Cref{fig:COPD-naive}), underdispersed
(\Cref{fig:gender-naive,fig:gender-batch}), or noncentered
(\Cref{fig:alzheimers-batch}). In these cases, neither the theoretical null
$\mathrm{N}(0,1)$, nor even the empirical null as shown in the histograms, look
appropriate for measuring significance.
\Citet{schwartzman2010comment} proved that
a largely overdispersed histogram like \Cref{fig:COPD-naive} cannot be explained by correlation alone,
and is possibly due to the presence of confounding factors.
For a sneak preview of the
confounder adjustment, the reader can find the histograms after our confounder
adjustment in \Cref{fig:output} at the end of this paper.
The p-values of our test of confounding
(\Cref{sec:test-confound}) in \Cref{tab:output} indicate that all the three datasets
suffer from confounding latent factors.

Other common sources of confounding in gene expression profiling include
systematic ancestry differences \citep{price2006}, environmental changes
\citep{fare2003,gasch2000} and surgical manipulation
\citep{lin2006}. See \citet{lazar2013} for a survey.
In many studies, especially for observational clinical research and human expression data,
the latent factors, either genetic or technical, are confounded with primary variables
of interest due to the observational nature of the studies and heterogeneity of samples
\citep{ransohoff2005,rhodes2005}.
Similar
confounding problems also occur in other high-dimensional datasets
such as brain imaging \citep{schwartzman2008} and metabonomics \citep{craig2006}.

\begin{figure}[hp]
  \begin{subfigure}[b]{0.45\textwidth}
  \includegraphics[width = \textwidth]{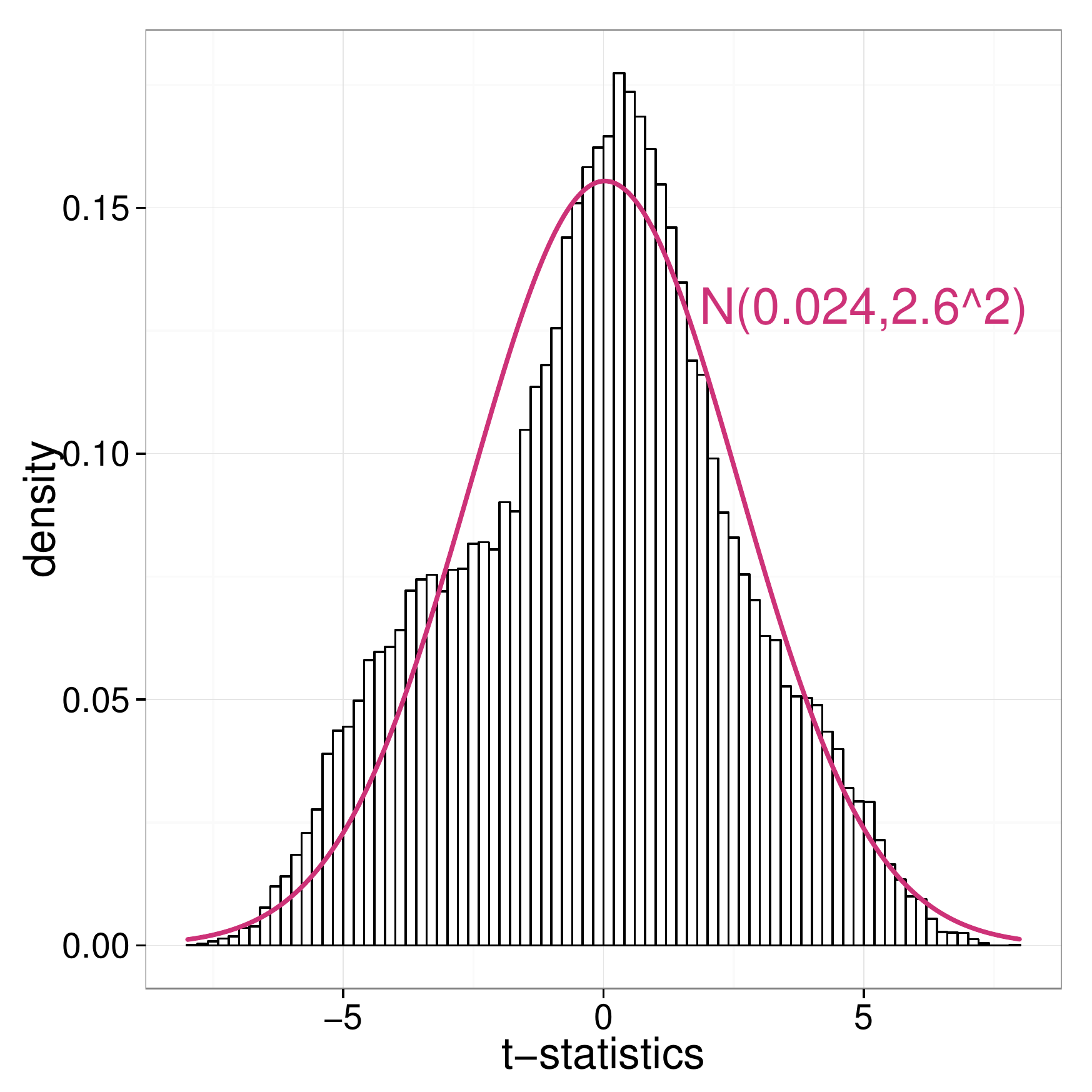}
  \caption{\small Dataset 1.}
  \label{fig:COPD-naive}
  \end{subfigure}
  ~
  \begin{subfigure}[b]{0.45\textwidth}
 \includegraphics[width = \textwidth]{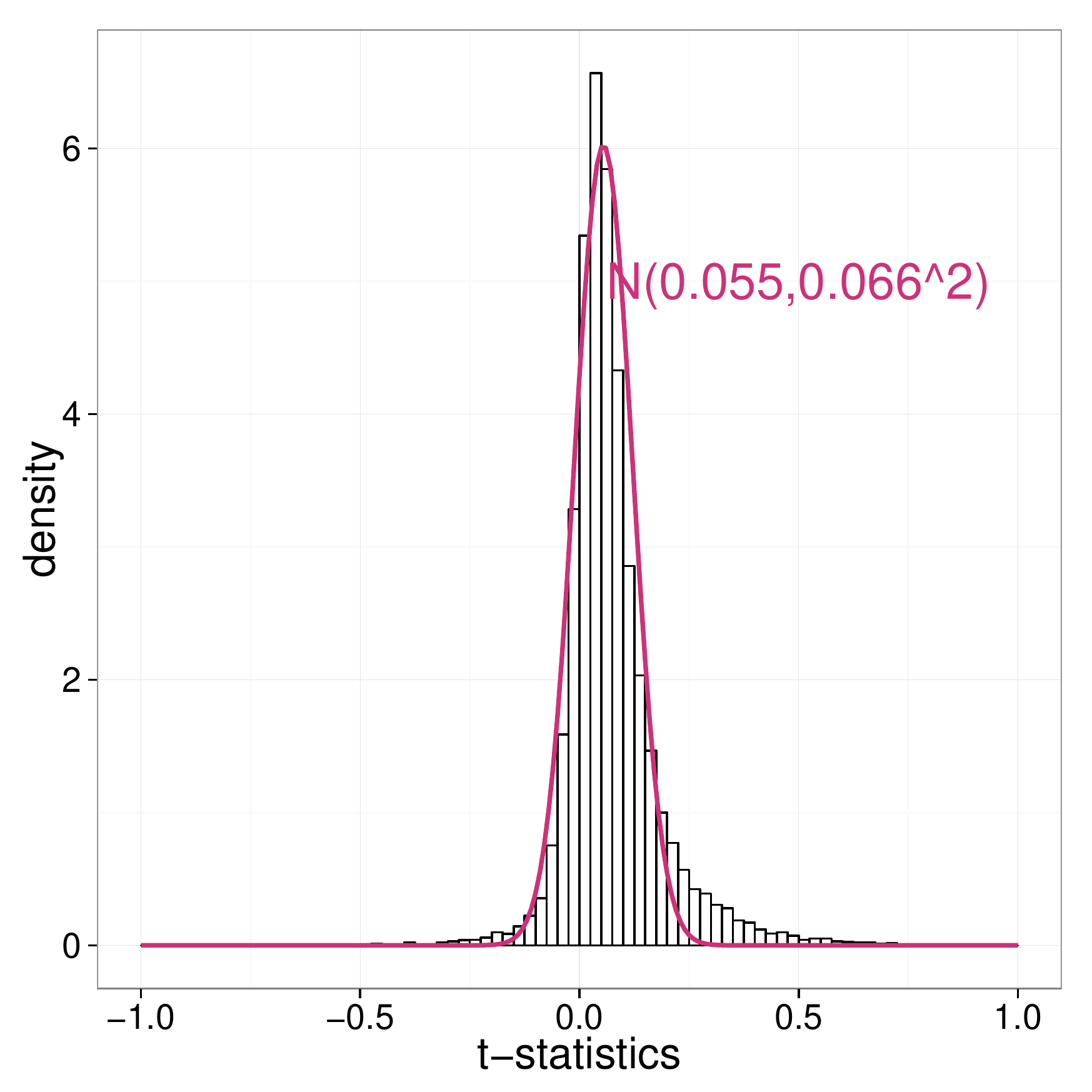}
  \caption{\small Dataset 2.}
  \label{fig:gender-naive}
  \end{subfigure}
  \\
  \centering
  \begin{subfigure}[b]{0.45\textwidth}
  \includegraphics[width = \textwidth]{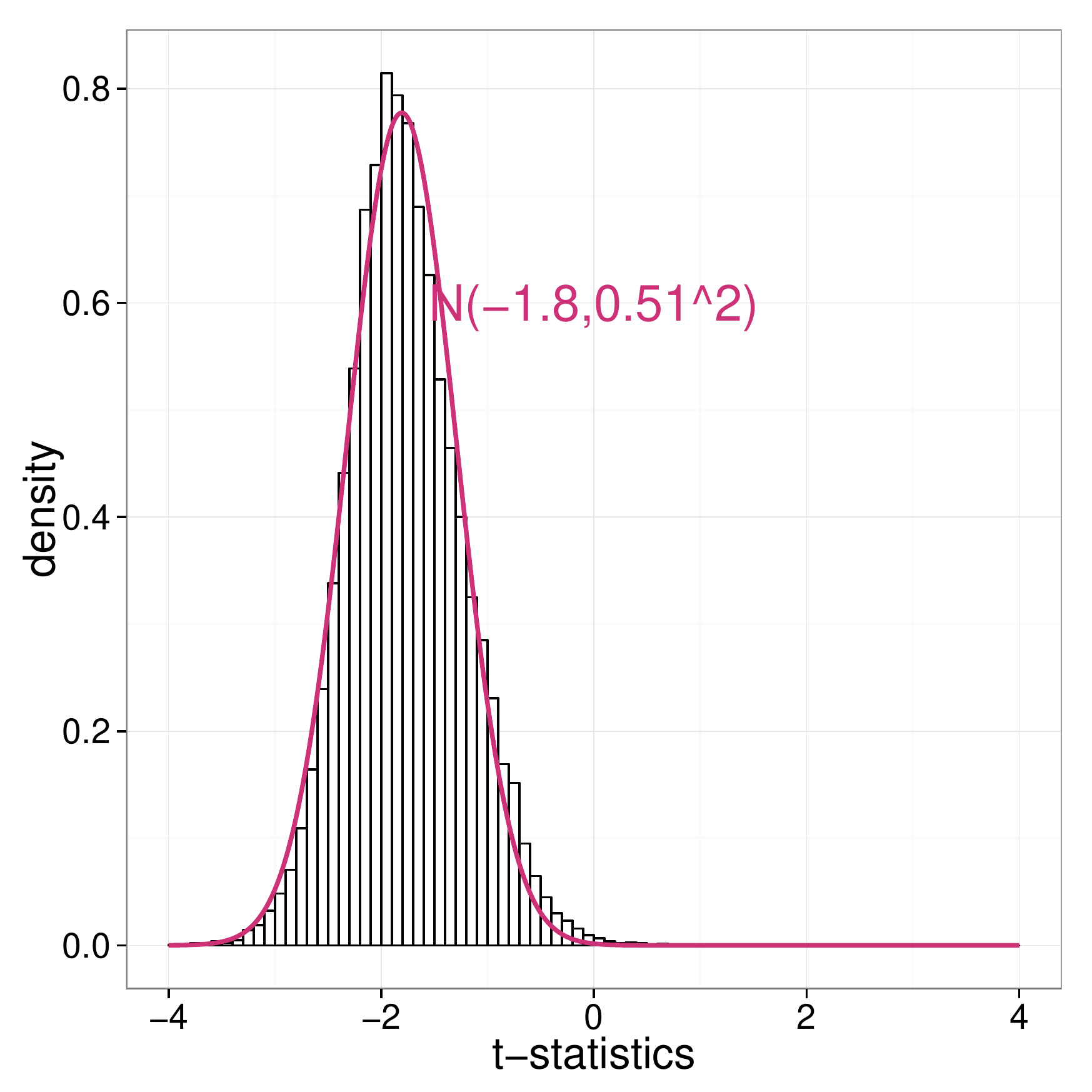}
  \caption{\small Dataset 3.}
  \label{fig:alzheimers-batch}
  \end{subfigure}
  ~
  \begin{subfigure}[b]{0.45\textwidth}
  \includegraphics[width = \textwidth]{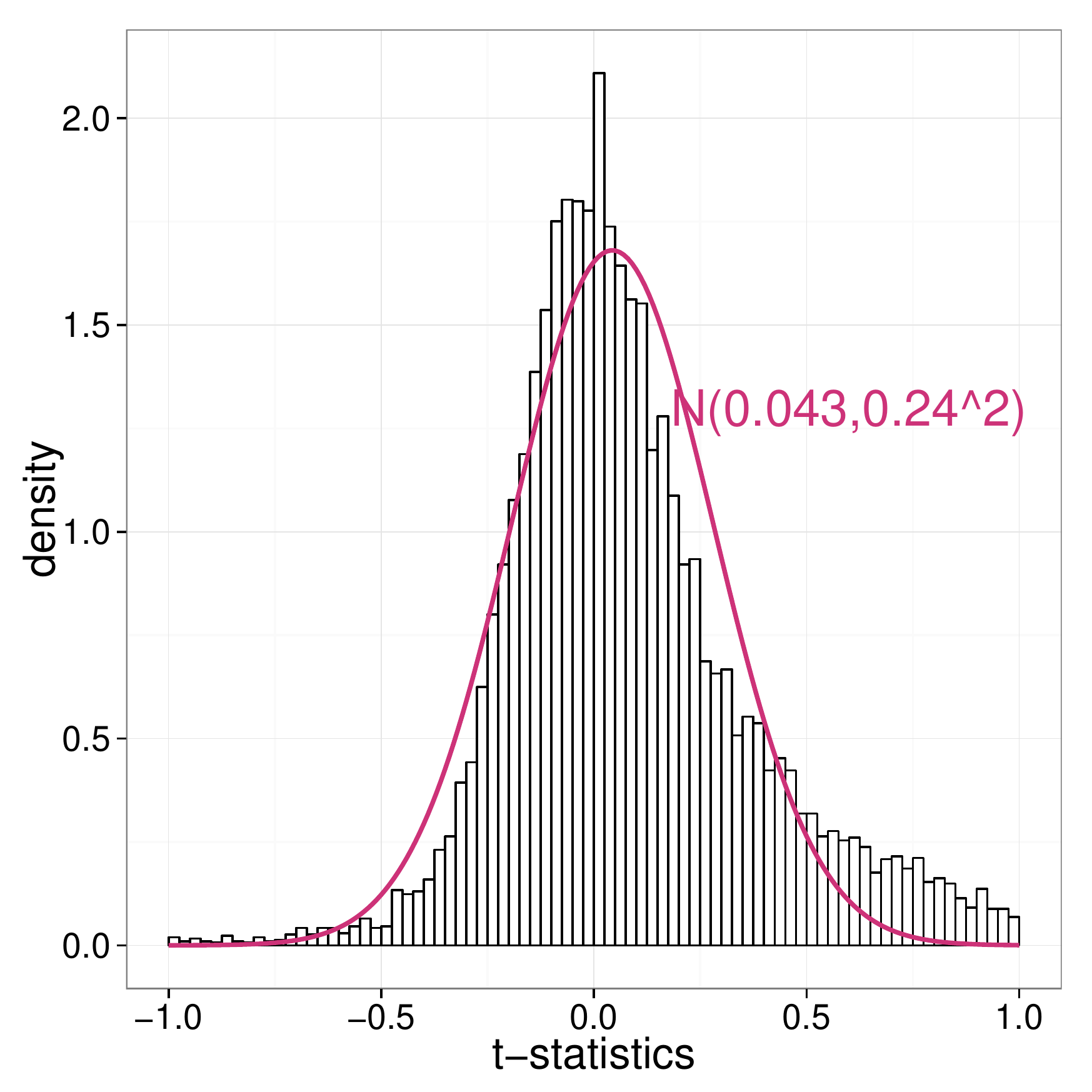}
  \caption{\small Dataset 2 (batch correction).}
  \label{fig:gender-batch}
  \end{subfigure}
  \caption{ \small
  Dataset 1 is the COPD dataset \citep{singh2011}. Dataset 2 and 3 are from
  \citet{gagnon2013}.
  Histograms of regression t-statistics in three microarray studies show clear departure from the
    theoretical null distribution $\mathrm{N}(0,1)$. The mean and
    standard deviation of the normal approximation are obtained from
    the median and median absolute deviation of the statistics. See
    \Cref{sec:gender-study} for the empirical distributions after confounder
    adjustment.}
  \label{fig:confounding-problem}
\end{figure}


\vspace{3mm}

\textit{Previous methods.~~} As early as \citet{alter2000},
principal component analysis has been suggested to estimate the confounding
factors. This approach can work reasonably well if the confounders
clearly stand out. For example, in population
genetics, \citet{price2006} proposed a procedure called EIGENSTRAT that
removes the largest few principal
components from their SNP genotype data, claiming they closely resemble the ancestry
difference. In gene expression data, however, it is often unrealistic to
assume they always represent the confounding factors. The largest
principal component may also correlate with the
primary effects of interest. Therefore, directly removing them can result in loss of statistical power.

More recently, an emerging literature considers the confounding problem
in similar statistical settings and a variety of methods have been
proposed for confounder adjustment
\citep{friguet2009,gagnon2013,gagnon2012,leek2007,leek2008,sun2012}. 
These statistical methods are shown to
work better than the EIGENSTRAT procedure for gene expression data. However,
little is known about their theoretical properties. Indeed, the
authors did not focus on model identifiability and rely on impressive heuristic calculations
to derive their estimators.
In this paper, we address the identifiability problem, rederive the
estimators in \citep{gagnon2013,sun2012} in a more principled way and provide
theoretical guarantees for them.

Before describing the modeling framework, we want to clarify our
terminology. The confounding factors or confounders considered in the
present paper are referred to by different names in the literature, such as
``surrogate variables'' \citep{leek2007}, ``latent factors'' \citep{friguet2009}, ``batch effects''
\citep{leek2010}, ``unwanted variation'' \citep{gagnon2012} and
``latent effects'' \citep{sun2012}. We believe they are all describing
the same phenomenon: that there exist some unobserved variables that
correlate with both the primary variable(s) of interest and the outcome
variables (e.g.\ gene expression). This problem is generally known
as confounding \citep{fisher1935,kish1959}. A famous example is Simpson's paradox.
The term  ``confounding'' has multiple
meanings in the literature.
We use the meaning from \cite{greenland1999}:
``a mixing of effects of extraneous factors (called confounders) with the effect of
interest''.

\vspace{3mm}

\textit{Statistical model of confounding.~~} Most of the confounder
adjustment methods mentioned above are built around the following model
\begin{equation} \label{eq:previous-model}
\bm{Y} = \bm{X} \bm{\beta}^T + \bm{Z} \bm{\Gamma}^T + \bm{E}
\end{equation}
Here $\bm{Y}$ is a $n \times p$ observed matrix (e.g.\ gene expression); $\bm{X}$ is an $n \times 1$ observed primary variable of
interest (e.g.\ treatment-control, phenotype, health trait); $\bm{Z}$
is an $n \times r$ latent confounding factor matrix; $\bm{E}$ is
often assumed to be a Gaussian noise matrix. The $p \times 1$ vector
$\bm{\beta}$ contains the primary effects we want to estimate.

Model \eqref{eq:previous-model} is very general for multiple
testing dependence. \citet[Proposition 1]{leek2008} suggest that
multiple hypothesis tests based on linear regression can always be
represented by \eqref{eq:previous-model} using sufficiently many
factors. However, equation \eqref{eq:previous-model} itself is not
enough to model confounded tests.  To
elucidate the concept of confounding, we need to characterize the
relationship between the latent variables $\bm{Z}$ and the primary
variable $\bm{X}$. To be more specific, we assume the
regression of $\bm{Z}$ on $\bm{X}$ also follows a linear relationship
\begin{equation} \label{eq:linear-model-z-sec1}
\bm{Z} = \bm{X} \bm{\alpha}^T + \bm{W},
\end{equation}
where $\bm{W}$ is a $n \times r$ random noise matrix
independent of $\bm{X}$ and $\bm{E}$ and the $r \times 1$ vector
$\bm{\alpha}$ characterizes the extent of confounding in this
data. By plugging \eqref{eq:linear-model-z-sec1} in
\eqref{eq:previous-model}, the linear regression of $\bm{Y}$ on
$\bm{X}$ gives an unbiased estimate of the marginal effects
\begin{equation} \label{eq:marginal-effect}
\bm{\tau} = \bm{\beta} + \bm{\Gamma} \bm{\alpha}
\end{equation}
When $\bm{\alpha} \ne \bm{0}$, $\bm{\tau}$ is not the same as
$\bm{\beta}$ by \eqref{eq:marginal-effect}. In this case, the data
$(\bm{X},\bm{Y})$ are confounded by $\bm{Z}$. Since the
confounding factors $\bm{Z}$ are data artifacts in this model, the
statistical inference of $\bm{\beta}$ is much more interesting
than that of $\bm{\tau}$. See \Cref{sec:marginal-effects-vs} for more
discussion on the marginal and the direct effects.


Following LEAPP \citep{sun2012}, we use
a QR decomposition to decouple the estimation of  $\bm{\Gamma}$ from $\bm{\beta}$.
The inference procedure splits into the following two steps:

\begin{description}
\item[Step 1] By regressing out $\bm{X}$ in \cref{eq:previous-model},
  $\bm{\Gamma}$ is the loading matrix in a factor analysis model and can be efficiently estimated by maximum likelihood.
\item[Step 2] Equation \eqref{eq:marginal-effect} can be viewed as a
  linear regression of the marginal effects $\bm{\tau}$ on the factor
  loadings $\bm{\Gamma}$. To estimate $\bm{\alpha}$ and $\bm{\beta}$,
  we replace $\bm{\tau}$ by its observed value and $\bm{\Gamma}$ by
  its estimate in Step 1.
\end{description}

As mentioned before,
other existing confounder adjustment methods including SVA
\citep{leek2008} and RUV-4 \citep{gagnon2013} can be unified in this two-step statistical procedure.
See \Cref{sec:comp-with-exist} for a detailed discussion of these methods.

\vspace{3mm}

\textit{Contributions.~~} Our first contribution in \Cref{sec:model}
is to establish identifiability for the confounded multiple testing
model. In the first step of estimating
factor loadings $\bm{\Gamma}$, identifiability is well studied in
classical multivariate statistics. However, the second step of
estimating the effects $\bm{\beta}$ is not identifiable
  without additional constraints. We consider two different sufficient
  conditions for global identifiability. The
  first condition assumes the researcher has a ``negative control''
  variable set for which there should be no direct effect. This
  negative control set often serves as a quality control precaution in microarray studies \citep{gagnon2012}, but they can also be used to adjust for the confounding factors. The second identification condition assumes at
  least half of the true effects are zero, i.e., the true alternative
  hypotheses are sparse. These two identification conditions
  correspond to the approaches of RUV-4 \citep{gagnon2013} and LEAPP \citep{sun2012}, respectively.

Our second contribution in \Cref{sec:stat-infer} is to derive valid and efficient statistical methods under these identification
conditions in the second step. In order to estimate the effects, it is essential to estimate the coefficients $\bm{\alpha}$
relating  the primary variable to the confounders. Under the two
different identification conditions, we study two
different regression methods which are analytically tractable and
equally well performing alternatives to RUV-4 and LEAPP. For the
negative control (NC) scenario, $\hat{\bm{\alpha}}^{\mathrm{NC}}$ and
$\hat{\bm{\beta}}^{\mathrm{NC}}$ are obtained
by generalized least squares using the negative
controls. For the sparsity scenario, $\hat{\bm{\alpha}}^{\mathrm{RR}}$ and
$\hat{\bm{\beta}}^{\mathrm{RR}}$ are obtained by using a simpler
and more analytically tractable robust regression (RR) than the one used in LEAPP.

When the factors are strong (as large as the noise magnitude),
for both scenarios we find that the
resulting estimators of $\bm{\beta}$ are asymptotically as efficient
as the oracle estimator which is allowed to observe the confounding
factors.  It is surprising that no essential loss of efficiency is incurred
by searching for the confounding variables.
Our asymptotic  analysis relies on some recent theoretical results for factor analysis due to \citet{bai2012}. The
asymptotic regime we consider has both $n$, the number of
observations, and $p$, the number of outcome variables (e.g.\ genes),
going to infinity. The most important condition that we require for asymptotic
efficiency in
the negative control scenario is that the number of negative controls increases to infinity;
in the sparsity scenario, we need the $L_1$ norm of the effects to satisfy $\|\bm{\beta}\|_1 \sqrt{n} / p \to
0$. The fact that $p \gg n$ in many multiple hypothesis
testing problems plays an important role in these asymptotics.

Next in \Cref{sec:stat-infer}, we show that the asymptotic $z$-statistics based
on the efficient estimators of $\bm{\beta}$ can control the type I
error. This is not a trivial corollary from the asymptotic
distribution of the test statistics because the size of $\bm{\beta}$ is growing and the
$z$-statistics are weakly correlated. Proving FDR control is more
technically demanding and is beyond the scope of this paper. Instead,
we use numerical simulations to study the empirical performance
(including FDR) of our tests. We also give a significance test of
confounding (null hypothesis $\bm{\alpha} = \bm{0}$) in
\Cref{sec:stat-infer}. This test can help the experimenter to
determine if there is any hidden confounder in the design or the
experiment process.

In \Cref{sec:extens-mult-regr}, we generalize the
confounder adjustment model to include multiple primary variables and multiple nuisance covariates.
We show the statistical methods and theory for the single
primary variable regression problem \cref{eq:previous-model} can be
smoothly extended to the multiple regression problem.


\vspace{3mm}

\textit{Outline.~~} \Cref{sec:model} introduces the model and
describes the two identification
conditions. \Cref{sec:stat-infer} studies the statistical inference.
\Cref{sec:extens-mult-regr} extends our framework to a
linear model with multiple primary variables and multiple known controlling
covariates. 
\Cref{sec:discussions}
discusses our theoretical analysis in the context of previous literature, including
the existing procedures for debiasing the confounders and existing
theoretical results of multiple hypothesis testing under dependence
(but no confounding). \Cref{sec:numerical-experiments} studies the
empirical behavior of our estimators in simulations and real data
examples. Technical proofs of the results are provided in Supplement
\citep{wang2015supplement}.

To help the reader follow this paper and compare our methods and theory
with existing approaches, \Cref{tab:literature} summarizes some
related publications with more detailed discussion in
\Cref{sec:discussions}.

\setlength{\tabcolsep}{3pt}
\begin{table}[t]
  \footnotesize
  \begin{center}
  \begin{tabular}{c|c|c|}
    \cline{2-3}
    & \multicolumn{2}{c|}{Noise conditional on latent
      factors} \\ \cline{2-3}
    & Independent 
    & Correlated \\ \cline{1-3}
    \multicolumn{1}{|c|}{\begin{tabular}[x]{@{}c@{}} Positive or weak \\ dependence\end{tabular}} &
    \multicolumn{2}{c|}{
    \begin{tabular}[x]{@{}c@{}}\citet{benjamini2001}\\\citet{storey2004}
    \\ \citet{clarke2009}\end{tabular} }
      \\ \cline{1-3}
    \multicolumn{1}{|c|}{\begin{tabular}[x]{@{}c@{}} Unconfounding \\
        factors \end{tabular}} &
    \multicolumn{1}{c|}{\begin{tabular}[x]{@{}c@{}}         \citet*{friguet2009} \\
    \citet{desai2012cross} \\ \end{tabular}} &
    \multicolumn{1}{c|}{ \begin{tabular}[x]{@{}c@{}}\citet*{fan2012}\\\citet{lan2014}\\\textit{Discussed in \Cref{sec:latent-but-unconf,sec:marginal-effects-vs}}\end{tabular}
    } \\
    \cline{1-3}
    \multicolumn{1}{|c|}{\begin{tabular}[x]{@{}c@{}}Confounding \\
    factors \\ \end{tabular}}
    & \begin{tabular}[x]{@{}c@{}} \citet{leek2007,leek2008}\\
    \citet{gagnon2012} \\ 
    \citet*{sun2012} \\ \textit{Studied in
      \Cref{sec:model,sec:stat-infer,sec:extens-mult-regr}} \\ \textit{Discussed in \Cref{sec:comp-with-exist}}\end{tabular}
    & \begin{tabular}[x]{@{}c@{}} \textit{Discussed in
      \Cref{sec:inference-when-sigma}} \\ \textit{(future research)} \\ \end{tabular} \\ \cline{1-3}
  \end{tabular}
  \end{center}
  \vspace{0.5em}
  \caption{\small Selected literature in multiple hypothesis testing under
    dependence.\\ The categorization is partially subjective as
some authors do not use exactly the same terminology.}
  \label{tab:literature}
\end{table}

\vspace{3mm}
\textit{Notation.~~} Throughout the article, we use bold upper-case
letters for matrices and lower-case letters for vectors. We use
Latin letters for random variables and Greek letters for model
parameters. Subscripts of matrices are used to indicate row(s)
whenever possible. For example, if $\mathcal{C}$ is a set of indices,
then $\bm{\Gamma}_{\mathcal{C}}$ is the corresponding rows of $\bm{\Gamma}$. The $L_0$ norm of a vector is defined as the number
of nonzero entries: $\|\bm{\beta}\|_0
= |\{ 1 \le j \le p: \beta_j \ne 0 \}|$. A random matrix $\bm{E} \in \mathbb{R}^{n \times
  p}$ is said to follow a \textit{matrix normal} distribution with mean $\bm{M}\in \mathbb{R}^{n \times p}$, row covariance
$\bm{U} \in \mathbb{R}^{n \times n}$ and
column covariance $\bm{V} \in \mathbb{R}^{p \times p}$, abbreviated as
$\bm{E} \sim \mathrm{MN}(\bm{M}, \bm{U}, \bm{V})$, if the
vectorization of $\bm{E}$ by column follows the multivariate normal distribution $\mathrm{vec}(\bm{E}) \sim
\mathrm{N}(\mathrm{vec}(\bm{M}), \bm{V} \otimes \bm{U})$. When $\bm{U}
= \bm{I}_n$, this means the rows of $\bm{E}$ are i.i.d.\
$\mathrm{N}(0, \bm{V})$. We use the
usual notation in asymptotic statistics that a random variable is
${O}_p(1)$ if it is bounded in probability, and
${o}_p(1)$ if it converges to $0$ in
probability. Bold symbols ${\bm{O}}_p(1)$ or
${\bm{o}}_p(1)$ mean each entry of the vector is ${O}_p(1)$ or ${o}_p(1)$.

\section{The Model}
\label{sec:model}

\subsection{Linear model with confounders}
\label{sec:linear-model-with}

We consider a single primary variable of interest in this
section. It is common to add intercepts and known confounder effects
(such as lab and batch effects) in the regression model. This extension to
multiple linear regression does not change the main theoretical
results in this paper and is discussed in
\Cref{sec:extens-mult-regr}.

For simplicity, all the variables in this section are assumed to have mean $0$ marginally. Our
model is built on equation \eqref{eq:previous-model} that is already widely
used in the existing literature and we rewrite it here:
\refstepcounter{equation}\label{eq:linear-model}
\begin{equation}
  \tag{\theequation a}\label{eq:linear-model-y}
   {\bm{Y}}_{n \times p} = {\bm{X}}_{n \times 1} \,
   {\bm{\beta}}_{p \times 1}^T + \bm{Z}_{n \times r} \,
   \bm{\Gamma}_{p \times r}^T +
   \bm{E}_{n \times p}.
\end{equation}
As mentioned earlier, it is also crucial to model the dependence
of the confounders $\bm{Z}$ and the primary variable $\bm{X}$. We
assume a linear relationship as in \cref{eq:linear-model-z-sec1}
\begin{equation} \tag{\theequation b} \label{eq:linear-model-z}
\bm{Z} = \bm{X} \bm{\alpha}^T + \bm{W},
\end{equation}
and in addition some distributional assumptions on $\bm{X}$, $\bm{W}$ and the noise matrix $\bm{E}$
\begin{align}
  \tag{\theequation c} \label{eq:distribution-x}
  &X_i \overset{\mathrm{i.i.d.}}{\sim} \mathrm{mean}~0,~
  \mathrm{variance}~1,~i=1,\dotsc,n, \\
 \tag{\theequation d} \label{eq:distribution-w}
&\bm{W} \sim
\mathrm{MN}(\bm{0}, \bm{I}_n, \bm{I}_r),~\bm{W} \independent \bm{X}, \\
  \tag{\theequation e} \label{eq:distribution-e}
  &\bm{E} \sim \mathrm{MN}(\bm{0},
  \bm{I}_n, \bm{\Sigma}),~\bm{E} \independent (\bm{X},\bm{Z}).
\end{align}

The parameters in the model \cref{eq:linear-model} are $\bm{\beta} \in
\mathbb{R}^{p \times 1}$ the primary effects we are most interested in,
$\bm{\Gamma} \in \mathbb{R}^{p \times r}$ the influence of confounding
factors on the outcomes, $\bm{\alpha} \in
\mathbb{R}^{r \times 1}$ the association of the primary variable with
the confounding factors, and $\bm{\Sigma} \in
\mathbb{R}^{p \times p}$ the noise covariance
matrix. We assume $\bm{\Sigma}$ is diagonal $\bm{\Sigma}
= \mathrm{diag}(\sigma_1^2,\dotsc,\sigma_p^2)$, so the noise
for different outcome variables is independent. We discuss possible
ways to relax this independence assumption in \Cref{sec:inference-when-sigma}.

In \eqref{eq:distribution-x}, $X_i$ is not required to be
Gaussian or even continuous. For example, a binary or categorical
variable after normalization also meets this assumption. As mentioned
in \Cref{sec:introduction}, the parameter vector $\bm{\alpha}$
measures how severely the data are confounded. For a more
intuitive interpretation, consider an oracle procedure of estimating
$\bm{\beta}$ when the confounders $\bm{Z}$ in
\cref{eq:linear-model-y} are observed. The best linear unbiased estimator in this
case is the ordinary least squares $(\hat{{\beta}}^{\mathrm{OLS}}_j,
\hat{\bm{\Gamma}}^{\mathrm{OLS}}_j)$, whose variance is $ \sigma_j^2
\mathrm{Var}(X_i, \bm{Z}_i)^{-1} / n$. Using \cref{eq:linear-model-z,eq:distribution-w},
it is easy to show that $\mathrm{Var}(\hat{\beta}_j^{\mathrm{OLS}}) =
(1 + \|\bm{\alpha}\|_2^2) \sigma_j^2/n$ and
$\mathrm{Cov}(\hat{\beta}_j^{\mathrm{OLS}}, \hat{\beta}_k^{\mathrm{OLS}}) = 0$ for $j \ne k$. In summary,
\begin{equation} \label{eq:var-beta-oracle}
\mathrm{Var}(\hat{\bm{\beta}}^{\mathrm{OLS}}) = \frac{1}{n} (1 + \| \bm{\alpha}
\|_2^2) \bm{\Sigma}.
\end{equation}
Notice that in the unconfounded linear model in which $\bm{Z} =
\bm{0}$, the variance of the OLS estimator of $\bm{\beta}$ is
$\bm{\Sigma}/n$. Therefore, $1 + \|\bm \alpha\|_2^2$ represents the
relative loss of efficiency when we add observed variables $\bm{Z}$ to the
regression which are correlated with $\bm{X}$. In
\Cref{sec:inference-beta-alpha}, we show that the oracle efficiency
\eqref{eq:var-beta-oracle} can be asymptotically achieved even when
$\bm{Z}$ is unobserved.



Let $\bm{\theta} =
(\bm{\alpha}, \bm{\beta}, \bm{\Gamma}, \bm{\Sigma})$ be all the
parameters and $\bm{\Theta}$ be the parameter space. Without any
constraint, the model \cref{eq:linear-model} is unidentifiable. In
\Cref{sec:identification-gamma,sec:identification-beta} we show how to restrict $\bm{\Theta}$ to ensure identifiability.


\subsection{Rotation}
\label{sec:rotation}

Following \citet{sun2012}, we introduce
a transformation of the data to make the
identification issues clearer. Consider the Householder rotation matrix $\bm{Q}^{T} \in
\mathbb{R}^{n \times n}$ such that
$\bm{Q}^T \bm{X} = \|\bm{X}\|_2\bm{e}_1 = (\|\bm{X}\|_2,0,0,\dotsc,0)^T$. Left-multiplying $\bm{Y}$ by
$\bm{Q}^T$, we get
$\tilde{\bm{Y}} = \bm{Q}^T \bm{Y} = \|\bm{X}\|_2 \bm{e}_1 \bm{\beta}^T
    + \tilde{\bm{Z}} \bm{\Gamma}^T +  \tilde{\bm{E}}$,
where
\begin{equation} \label{eq:Z-tilde}
\tilde{\bm{Z}} = \bm{Q}^T\bm{Z} = \bm{Q}^T (\bm{X \alpha}^T +
\bm{W}) = \|\bm{X}\|_2 \bm{e}_1 \bm{\alpha}^T + \tilde{\bm{W}},
\end{equation}
and $\tilde{\bm{W}} = \bm{Q}^T \bm{W} \overset{d}{=} \bm{W}$,
$\tilde{\bm{E}} = \bm{Q}^T \bm{E} \overset{d}{=} \bm{E}$. As a
consequence, the first and the rest of the rows of $\tilde{\bm{Y}}$ are
\begin{align}
  &\tilde{\bm{Y}}_1 = \|\bm{X}\|_2 \bm{\beta}^T + \tilde{\bm{Z}}_1 \bm{\Gamma}^T +
  \tilde{\bm{E}}_1  \sim \mathrm{N}(\|\bm{X}\|_2 (\bm{\beta} +
  \bm{\Gamma}\bm{\alpha})^T, \bm{\Gamma} \bm{\Gamma}^T + \bm{\Sigma}) , \label{eq:Y-tilde-1} \\
  &\tilde{\bm{Y}}_{-1} = \tilde{\bm{Z}}_{-1} \bm{\Gamma}^T +
  \tilde{\bm{E}}_{-1} \sim \mathrm{MN}(\bm{0},
  \bm{I}_{n-1}, \bm{\Gamma} \bm{\Gamma}^T + \bm{\Sigma}). \label{eq:Y-tilde-m1}
\end{align}
Here $\tilde{\bm{Y}}_1$ is a $1 \times p$ vector,
$\tilde{\bm{Y}}_{-1}$ is a $(n-1) \times p$ matrix, and the
distributions are conditional on $\bm{X}$.

The parameters $\bm{\alpha}$ and
$\bm{\beta}$ only appear in \cref{eq:Y-tilde-1}, so their inference
(step 1 in our procedure) can be completely separated from
the inference of $\bm{\Gamma}$ and $\bm{\Sigma}$ (step 2 in our
procedure). In fact, $\tilde{\bm{Y}}_1 \independent
\tilde{\bm{Y}}_{-1}|\bm{X}$ because $\tilde{\bm{E}}_1 \independent
\tilde{\bm{E}}_{-1}$, so the two steps use mutually independent
information. This in turn greatly simplifies the theoretical analysis.

We intentionally use the symbol $\bm{Q}$ to resemble the QR decomposition of $\bm{X}$. In
\Cref{sec:extens-mult-regr} we show how to use the QR decomposition to
separate the primary effects from confounder and nuisance effects when
$\bm{X}$ has multiple columns. Using the same notation, we discuss
how SVA and RUV decouple the problem in a slightly different manner in \Cref{sec:sva}.

\subsection{Identifiability of \texorpdfstring{$\bm \Gamma$}{TEXT}}
\label{sec:identification-gamma}

Equation \eqref{eq:Y-tilde-m1} is just the exploratory factor analysis
model, thus $\bm{\Gamma}$ can
be easily identified up to some rotation under some mild conditions.
 Here we assume a classical sufficient condition for
the identification of $\bm{\Gamma}$ \citep[Theorem 5.1]{anderson1956}.


\begin{lemma}
\label{lemma:identify-gamma}
Let $\bm{\Theta} = \bm{\Theta}_0$ be the parameter space such that
\begin{enumerate}
\item If any row of $\bm{\Gamma}$ is deleted, there remain two
  disjoint submatrices of $\bm{\Gamma}$ of rank $r$, and
\item $\bm{\Gamma}^T \bm{\Sigma}^{-1} \bm{\Gamma}/p$ is
  diagonal and the diagonal elements are distinct, positive, and
  arranged in decreasing order.
\end{enumerate}
Then $\bm{\Gamma}$ and $\bm{\Sigma}$ are identifiable in the model
\cref{eq:linear-model}.
\end{lemma}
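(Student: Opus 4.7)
The plan is to reduce the problem to a classical factor-analysis identifiability argument and then invoke the conditions of the lemma. The key observation is that equation \eqref{eq:Y-tilde-m1} shows $\tilde{\bm{Y}}_{-1} \sim \mathrm{MN}(\bm{0}, \bm{I}_{n-1}, \bm{\Gamma}\bm{\Gamma}^T + \bm{\Sigma})$, so the joint law of the observable data depends on $(\bm{\Gamma}, \bm{\Sigma})$ only through the $p \times p$ matrix $\bm{V} := \bm{\Gamma}\bm{\Gamma}^T + \bm{\Sigma}$. Thus identifiability of $(\bm{\Gamma}, \bm{\Sigma})$ is equivalent to showing two things: (i) the decomposition $\bm{V} = \bm{\Gamma}\bm{\Gamma}^T + \bm{\Sigma}$ into a rank-$r$ positive semidefinite piece plus a diagonal piece is unique, and (ii) among the factorizations $\bm{\Gamma}\bm{\Gamma}^T$ of the rank-$r$ piece, the particular $\bm{\Gamma}$ is uniquely pinned down.

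For (i), I would invoke the Anderson--Rubin identification criterion, which is exactly the content of hypothesis 1 of the lemma. If $\bm{V} = \bm{\Gamma}_1 \bm{\Gamma}_1^T + \bm{\Sigma}_1 = \bm{\Gamma}_2 \bm{\Gamma}_2^T + \bm{\Sigma}_2$ with both $\bm{\Sigma}_i$ diagonal and both $\bm{\Gamma}_i$ satisfying condition 1, then the difference $\bm{\Sigma}_1 - \bm{\Sigma}_2 = \bm{\Gamma}_2 \bm{\Gamma}_2^T - \bm{\Gamma}_1 \bm{\Gamma}_1^T$ is simultaneously diagonal and of rank at most $2r$; the two-disjoint-submatrices-of-rank-$r$ condition forces this difference to be zero, as in the proof of Theorem 5.1 of \citet{anderson1956}. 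Hence $\bm{\Sigma}$ is identified and so is $\bm{\Gamma}\bm{\Gamma}^T$.

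For (ii), I would fix the rotational ambiguity using hypothesis 2. Given $\bm{\Gamma}\bm{\Gamma}^T$ and $\bm{\Sigma}$, any other factor loading producing the same product must have the form $\bm{\Gamma}' = \bm{\Gamma}\bm{O}$ for some orthogonal $\bm{O} \in \mathbb{R}^{r \times r}$. The matrix $(\bm{\Gamma}')^T \bm{\Sigma}^{-1} \bm{\Gamma}'/p = \bm{O}^T (\bm{\Gamma}^T \bm{\Sigma}^{-1} \bm{\Gamma}/p) \bm{O}$ is required to again be diagonal with distinct, positive, strictly decreasing diagonal entries. This is precisely the eigendecomposition of $\bm{\Gamma}^T \bm{\Sigma}^{-1}\bm{\Gamma}/p$, and because its eigenvalues are distinct, the orthogonal transformation $\bm{O}$ is determined up to sign of each column; the decreasing-order constraint fixes the column order, so only the column-sign ambiguity remains. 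Without loss of generality one either absorbs signs into a fixed sign convention on $\bm{\Gamma}$ or treats $\bm{\Gamma}$ as identified up to these inessential sign flips, which do not affect $\bm{\Gamma}\bm{\alpha}$ or any testable quantity downstream.

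The main obstacle is not the logical structure above but making the Anderson--Rubin step rigorous: showing that a diagonal matrix which can also be written as the difference of two rank-$r$ positive semidefinite matrices whose column supports satisfy the disjoint-submatrix condition must vanish. Since the lemma is essentially a restatement of a classical theorem, I would cite \citet{anderson1956} for this step rather than reprove it, and simply verify that conditions 1 and 2 of the lemma match the hypotheses of that theorem as applied to the factor analysis implied by \eqref{eq:Y-tilde-m1}.
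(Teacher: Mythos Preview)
Your proposal is correct and matches the paper's approach: the paper does not give an independent proof of this lemma but simply attributes it to Theorem~5.1 of \citet{anderson1956}, which is exactly what you plan to invoke for step~(i). Your outline of step~(ii), using condition~2 to fix the orthogonal rotation via the eigendecomposition of $\bm{\Gamma}^T\bm{\Sigma}^{-1}\bm{\Gamma}/p$, is the standard argument and is implicit in the paper's citation; your observation that a column-sign ambiguity remains is accurate and, as the paper itself notes after the lemma, inessential for identifying $\bm{\beta}$ downstream.
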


  In \Cref{lemma:identify-gamma}, condition (1) requires that $p\geq 2r + 1$.
  Condition (1) identifies $\bm{\Gamma}$ up to a
  rotation which is sufficient to identify $\bm{\beta}$. To
  see this, we can reparameterize $\bm{\Gamma}$ and $\bm{\alpha}$ to
  $\bm{\Gamma}\bm{U}$ and $\bm{U}^T\bm{\alpha}$ using an $r \times r$
  orthogonal matrix $\bm{U}$. This reparameterization does not change
  the distribution of $\tilde{\bm{Y}}_1$ in \cref{eq:Y-tilde-1} if
  $\bm{\beta}$ remains the same. Condition (2) identifies the rotation
  uniquely but is not necessary for our theoretical analysis in later
  sections.

\subsection{Identifiability of \texorpdfstring{$\bm \beta$}{TEXT}}
\label{sec:identification-beta}

The parameters $\bm{\beta}$ and $\bm{\alpha}$ cannot be identified from
\eqref{eq:Y-tilde-1} because they have in total $p + r$
parameters while $\tilde{\bm{Y}}_1$ is a length $p$ vector. If we write
$\mathcal{P}_{\bm{\Gamma}}$ and $\mathcal{P}_{\bm{\Gamma}^{\perp}}$ as
the projection onto the column space and orthogonal space of $\bm{\Gamma}$
so that $\bm{\beta} = \mathcal{P}_{\bm{\Gamma}} \bm{\beta} +
\mathcal{P}_{\bm{\Gamma}^{\perp}} \bm{\beta}$, it is impossible to
identify $\mathcal{P}_{\bm{\Gamma}} \bm{\beta}$ from \cref{eq:Y-tilde-1}.

This suggests that we should further restrict the
parameter space $\bm{\Theta}$.
We will reduce the degrees of freedom by restricting at least $r$ entries of
$\bm{\beta}$ to equal $0$. We consider two different sufficient conditions to identify $\bm{\beta}$:

\begin{description}
\item[Negative control] $\bm{\Theta}_1 = \{ (\bm{\alpha},
    \bm{\beta}, \bm{\Gamma}, \bm{\Sigma}): \bm{\beta}_{\mathcal{C}} =
\bm{0},~\mathrm{rank}(\bm{\Gamma}_{\mathcal{C}}) = r\}$ for a known
negative control set $|\mathcal{C}| \ge
r$.
\item[Sparsity] $\bm{\Theta}_2(s) = \{ (\bm{\alpha},
    \bm{\beta}, \bm{\Gamma}, \bm{\Sigma}): \|\bm{\beta}\|_0 \le \lfloor (p-s)/2
  \rfloor,~\mathrm{rank}(\bm{\Gamma}_{\mathcal{C}}) = r,~\forall \mathcal{C} \subset
  \{1,\dotsc,p\},~|\mathcal{C}| = s \}$ for some $r \le s \le p$.
\end{description}

\begin{proposition} \label{prop:identifiable}
  If $\bm{\Theta} = \bm{\Theta}_0 \cap \bm{\Theta}_1$ or $\bm{\Theta}
  = \bm{\Theta}_0 \cap \bm{\Theta}_2(s)$ for some $r \le s \le p$, the parameters $\bm{\theta} =
  (\bm{\alpha}, \bm{\beta}, \bm{\Gamma}, \bm{\Sigma})$ in the model
  \cref{eq:linear-model}
  are identifiable.
\end{proposition}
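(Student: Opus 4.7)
The plan is to reduce the identifiability question to a linear algebra statement about recovering $\bm{\alpha}$ and $\bm{\beta}$ from their linear combination $\bm{\tau} = \bm{\beta} + \bm{\Gamma}\bm{\alpha}$, and then handle the two identification conditions separately.

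First I would use the rotation from \Cref{sec:rotation}. Because the vector $\bm{X}$ is observed, the rotated responses $\tilde{\bm{Y}}_1$ and $\tilde{\bm{Y}}_{-1}$ are determined by the data distribution. From \cref{eq:Y-tilde-m1} the law of $\tilde{\bm{Y}}_{-1}$ determines the covariance $\bm{\Gamma}\bm{\Gamma}^T + \bm{\Sigma}$, and by \Cref{lemma:identify-gamma} the restriction to $\bm{\Theta}_0$ identifies $\bm{\Gamma}$ and $\bm{\Sigma}$ uniquely. From \cref{eq:Y-tilde-1} the mean of $\tilde{\bm{Y}}_1$, divided by the known quantity $\|\bm{X}\|_2$, identifies the marginal effect $\bm{\tau} = \bm{\beta} + \bm{\Gamma}\bm{\alpha}$. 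It therefore suffices, in each of the two cases, to show that given the (now known) matrix $\bm{\Gamma}$ and vector $\bm{\tau}$, the pair $(\bm{\alpha},\bm{\beta})$ is uniquely determined.

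For the negative control condition $\bm{\Theta}_1$, I would restrict $\bm{\tau}$ to the rows indexed by $\mathcal{C}$. Since $\bm{\beta}_{\mathcal{C}} = \bm{0}$, we have $\bm{\tau}_{\mathcal{C}} = \bm{\Gamma}_{\mathcal{C}}\bm{\alpha}$, and because $\mathrm{rank}(\bm{\Gamma}_{\mathcal{C}}) = r$ this linear system has a unique solution $\bm{\alpha} = (\bm{\Gamma}_{\mathcal{C}}^T\bm{\Gamma}_{\mathcal{C}})^{-1}\bm{\Gamma}_{\mathcal{C}}^T\bm{\tau}_{\mathcal{C}}$. Then $\bm{\beta} = \bm{\tau} - \bm{\Gamma}\bm{\alpha}$ is determined.

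For the sparsity condition $\bm{\Theta}_2(s)$, I would argue by contradiction: suppose $(\bm{\alpha},\bm{\beta})$ and $(\bm{\alpha}',\bm{\beta}')$ both satisfy $\bm{\tau} = \bm{\beta} + \bm{\Gamma}\bm{\alpha} = \bm{\beta}' + \bm{\Gamma}\bm{\alpha}'$ with both $\bm{\beta},\bm{\beta}'$ having $L_0$ norm at most $\lfloor (p-s)/2\rfloor$. Then $\bm{\beta} - \bm{\beta}' = \bm{\Gamma}(\bm{\alpha}' - \bm{\alpha})$ has at most $2\lfloor(p-s)/2\rfloor \le p-s$ nonzero coordinates, so the set $\mathcal{C}_0$ of coordinates where it vanishes has cardinality at least $s$. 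Picking any $\mathcal{C}\subset\mathcal{C}_0$ of size exactly $s$, the defining property of $\bm{\Theta}_2(s)$ gives $\mathrm{rank}(\bm{\Gamma}_{\mathcal{C}}) = r$, and the equation $\bm{\Gamma}_{\mathcal{C}}(\bm{\alpha}'-\bm{\alpha}) = \bm{0}$ forces $\bm{\alpha} = \bm{\alpha}'$ and hence $\bm{\beta} = \bm{\beta}'$.

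The only conceptually nontrivial step is the sparsity argument, where the budget $\lfloor(p-s)/2\rfloor$ is chosen precisely so that the symmetric difference of the supports of $\bm{\beta}$ and $\bm{\beta}'$ leaves at least $s$ coordinates untouched, enabling the uniform rank hypothesis built into $\bm{\Theta}_2(s)$ to be invoked. Everything else is routine once the rotation decouples $(\bm{\Gamma},\bm{\Sigma})$ from $(\bm{\alpha},\bm{\beta})$.
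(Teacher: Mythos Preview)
Your proposal is correct and follows essentially the same approach as the paper's proof: identify $\bm{\Gamma}$ and $\bm{\Sigma}$ via \Cref{lemma:identify-gamma}, reduce to recovering $(\bm{\alpha},\bm{\beta})$ from $\bm{\tau}=\bm{\beta}+\bm{\Gamma}\bm{\alpha}$, and then in the sparsity case use a counting argument to find at least $s$ rows on which the full-rank condition forces $\bm{\alpha}=\bm{\alpha}'$. The only cosmetic difference is that the paper works with the common zero set of $\bm{\beta}^{(1)}$ and $\bm{\beta}^{(2)}$ whereas you work with the zero set of their difference, and the paper does not bother to invoke the rotation explicitly for the identifiability step; neither difference affects the argument.
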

\begin{proof}
  Since $\bm{\Theta} \subset \bm{\Theta}_0$, we know from \Cref{lemma:identify-gamma} that $\bm{\Gamma}$ and $\bm{\Sigma}$ are
  identifiable. Now consider two combinations of parameters
  $\bm{\theta}^{(1)} = (\bm{\alpha}^{(1)}, \bm{\beta}^{(1)}, \bm{\Gamma}, \bm{\Sigma})$ and
  $\bm{\theta}^{(2)} = (\bm{\alpha}^{(2)}, \bm{\beta}^{(2)}, \bm{\Gamma},
  \bm{\Sigma})$ both in the space $\bm{\Theta}$ and inducing the same
  distribution in the model \cref{eq:linear-model}, i.e.\
  $\bm{\beta}^{(1)} + \bm{\Gamma} \bm{\alpha}^{(1)} =  \bm{\beta}^{(2)} + \bm{\Gamma} \bm{\alpha}^{(2)}$.

  Let $\mathcal{C}$ be the set of indices such that $\bm{\beta}^{(1)}_{\mathcal{C}} =
  \bm{\beta}^{(2)}_{\mathcal{C}} = \bm{0}$. If $\bm{\Theta} = \bm{\Theta}_0 \cap
  \bm{\Theta}_1$, we already know $|\mathcal{C}| \ge r$. If $\bm{\Theta} =
  \bm{\Theta}_0 \cap \bm{\Theta}_2(s)$, it is easy to show that $|\mathcal{C}| \ge
  s$ is also true because both $\bm{\beta}^{(1)}$ and
  $\bm{\beta}^{(2)}$ have at most $\lfloor (p-s)/2
  \rfloor$ nonzero entries. Along with the rank constraint on
  $\bm{\Gamma}_\mathcal{C}$, this implies that $\bm{\Gamma}_{\mathcal{C}}
  \bm{\alpha}^{(1)} = \bm{\Gamma}_{\mathcal{C}} \bm{\alpha}^{(2)}$. However, the
  conditions in $\bm{\Theta}_1$ and $\bm{\Theta}_2$ ensure that
  $\bm{\Gamma}_{\mathcal{C}}$ has full rank, so $\bm{\alpha}^{(1)} =
  \bm{\alpha}^{(2)}$ and hence $\bm{\beta}^{(1)} =
  \bm{\beta}^{(2)}$.
\end{proof}


\begin{remark}
   The condition (2) in \Cref{lemma:identify-gamma} that uniquely
   identifies $\bm{\Gamma}$ is not necessary for the identification of
   $\bm{\beta}$. This is because for any set $|\mathrm{C}| \ge r$ and any
   orthogonal matrix $\bm{U} \in \mathbb{R}^{r \times r}$, we always
   have $\mathrm{rank}(\bm{\Gamma}_{\mathcal{C}}) =
   \mathrm{rank}(\bm{\Gamma}_{\mathcal{C}}) \bm{U}$. Therefore
   $\bm{\Gamma}$ only needs to be identified up to a rotation.
\end{remark}

\begin{remark}
Almost all dense matrices of $\bm{\Gamma} \in \mathbb{R}^{p \times r}$
satisfy the conditions. However, for $\bm{\Theta}_2(s)$ the sparsity of $\bm{\Gamma}$ allowed
depends on the sparsity of $\bm{\beta}$. The condition $\bm{\Theta}_2(s)$ rules
out some too sparse
$\bm{\Gamma}$. In this case, one may consider
using confirmatory factor analysis instead of exploratory factor
analysis to model the relationship between confounders and
outcomes. For some recent identification results in confirmatory
factor analysis, see \citet{grzebyk2004,kuroki2014}.
\end{remark}

\begin{remark}
The maximum allowed $\|\bm{\beta}\|_0$ in $\bm{\Theta}_2$, $\lfloor
(p-r)/2 \rfloor$, is exactly the maximum breakdown point of a robust
regression with $p$ observations and $r$ predictors
\citep{maronna2006}. Indeed we use a standard robust
regression method to estimate $\bm{\beta}$ in this case in \Cref{sec:unkn-zero-indic}.
\end{remark}

\begin{remark}
To the best of our
knowledge, the only existing literature that explicitly addresses the
identifiability issue is \citet[Chapter 4.2]{sun2011}, where
the author gives sufficient conditions
for \emph{local} identifiability of
$\bm{\beta}$ by viewing \cref{eq:linear-model-y} as a ``sparse plus low
rank'' matrix decomposition problem. See \citet[Section
3.3]{chandrasekaran2012} for a more general discussion of the local
and global identifiability for this problem. Local
identifiability refers to identifiability of the parameters
in a neighborhood of the true values.
In contrast, the conditions in \Cref{prop:identifiable} ensure that
$\bm{\beta}$ is \emph{globally} identifiable in the restricted
parameter space.
\end{remark}

\section{Statistical Inference}
\label{sec:stat-infer}

As mentioned earlier in \Cref{sec:introduction}, the statistical inference consists of two steps:
the factor analysis (\Cref{sec:inference-gamma-sigma}) and the linear
regression (\Cref{sec:inference-beta-alpha}).

\subsection{Inference for \texorpdfstring{$\bm\Gamma$}{TEXT} and \texorpdfstring{$\bm \Sigma$}{TEXT}}
\label{sec:inference-gamma-sigma}

The most popular approaches for factor analysis are principal
component analysis (PCA) and maximum likelihood (ML). \citet{bai2002} derived a class of estimators of
$r$ by principal component analysis using various information
criteria. The estimators are consistent under
\Cref{assumption:large-factor} in this section and some additional
technical assumptions in \citet{bai2002}. Due to this reason, we assume the number of confounding factors $r$ is known
in 
this section. See \citet[Section 3]{owen2015} for a comprehensive
literature review of choosing $r$ in practice.

We are most interested in the asymptotic behavior of factor analysis when both
$n,p \to \infty$. In this case, PCA cannot consistently estimate the
noise variance $\bm{\Sigma}$ \citep{bai2012}. For theoretical analysis, we
use the quasi maximum likelihood estimate in \citet{bai2012} to get $
\hat{\bm{\Gamma}}$ and $\hat{\bm{\Sigma}}$. This estimator is called
``quasi''-MLE because it treats the factors $\tilde{
  \bm{Z}}_{-1}$ as fixed quantities. Since the confounders $\bm{Z}$ in
our model \cref{eq:linear-model} are random variables, we introduce a
rotation matrix $\bm{R} \in \mathbb{R}^{r \times r}$ and let
$\tilde{\bm{Z}}_{-1}^{(0)} = \tilde{\bm{Z}}_{-1} (\bm{R}^{-1})^T$,
$\bm{\Gamma}^{(0)} = \bm{\Gamma} \bm{R}$ be the target factors and
factor loadings that are studied in \citet{bai2012}.

To make $\tilde{\bm{Z}}_{-1}^{(0)}$ and $\bm{\Gamma}^{(0)}$ identifiable, \citet{bai2012} consider five
different identification conditions. However, the parameter of
interest in model \cref{eq:linear-model} is $\bm{\beta}$ instead of
$\bm{\Gamma}$ or $\bm{\Gamma}^{(0)}$. As we have discussed in
\Cref{sec:identification-beta}, we only need the
column space of $\bm{\Gamma}$ to estimate $\bm{\beta}$, which gives us
some flexibility of choosing the identification condition. In our
theoretical analysis we use the third condition (IC3) in
\citet{bai2012},  which imposes the constraints that $(n-1)^{-1} (\tilde{\bm{Z}}_{-1}^{(0)})^T \tilde{\bm{Z}}_{-1}^{(0)} =
\bm{I}_r$ and $p^{-1} \tilde{\bm{\Gamma}}^{(0)T} \bm{\Sigma}^{-1}
\bm{\Gamma}^{(0)}$ is diagonal. Therefore, the
rotation matrix $\bm{R}$ satisfies $\bm{R}
\bm{R}^T = (n-1)^{-1} \tilde{\bm{Z}}_{-1}^T \tilde{\bm{Z}}_{-1}$.

The quasi-log-likelihood being maximized in \citet{bai2012} is
\begin{align}\label{eq:qmle}
  - \frac{1}{2p} \log \mathrm{det}\left(\bm{\Gamma}^{(0)} (\bm{\Gamma}^{(0)})^T + \bm{\Sigma}\right) - \frac{1}{2p}
  \mathrm{tr}\left\{\bm{S} \left[\bm{\Gamma}^{(0)} (\bm{\Gamma}^{(0)})^T + \bm{\Sigma}\right]^{-1}\right\}
\end{align}
where $\bm{S}$ is the sample covariance matrix of $\tilde{\bm{Y}}_{-1}$.

The theoretical results in this section rely heavily on
recent findings in \citet{bai2012}. They use these three assumptions.

\begin{assumption} \label{assumption:error}
  The noise matrix $\bm{E}$ follows the matrix normal distribution $\bm{E} \sim \mathrm{MN}(\bm{0}, \bm{I}_n,
  \bm{\Sigma})$ and $\bm{\Sigma}$ is a diagonal matrix.
\end{assumption}

\begin{assumption} \label{assumption:bounded}
  There exists a positive constant $D$ such that $\|\bm{\Gamma}_{j}\|_2
  \le D$, $D^{-2} \le \sigma_j^2 \le D^2$ for all $j$, and the estimated
  variances $\hat{\sigma}_j^2 \in [D^{-2}, D^2]$ for all $j$.
\end{assumption}

\begin{assumption} \label{assumption:large-factor}
  The limits $\lim_{p \to \infty} p^{-1} \bm{\Gamma}^T
  \bm{\Sigma}^{-1} \bm{\Gamma}$ and $\lim_{p \to \infty} \sum_{j=1}^p
  \sigma_j^{-4} (\bm{\Gamma}_j \otimes \bm{\Gamma}_{j})
  (\bm{\Gamma}_j^T \otimes \bm{\Gamma}_{j}^T)$ exist and are positive
  definite matrices.
\end{assumption}


\begin{lemma}[\citet{bai2012}]\label{lem:ml-baili}
  Under
  \Cref{assumption:error,assumption:bounded,assumption:large-factor},
the maximizer $(\hat{\bm{\Gamma}},\hat{\bm{\Sigma}})$
of the quasi-log-likelihood~\eqref{eq:qmle} satisfies
\[
\sqrt n(\hat{\bm{\Gamma}}_j - \bm{\Gamma}_j^{(0)}) \overset{d}{\to}
  \mathrm{N}(\bm{0}, \sigma_j^2 \bm{I}_r),
\quad\text{and}\quad\sqrt n(\hat\sigma^2_j -
  \sigma^2_j)  \overset{d}{\to} \mathrm{N}(0, 2\sigma_j^4).
\]
\end{lemma}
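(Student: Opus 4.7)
The plan is to treat this as a large-dimensional factor analysis problem in which both $n$ and $p$ diverge, and to follow the standard quasi-likelihood route: establish consistency, then derive asymptotic normality from a Taylor expansion of the score. Since the quasi-log-likelihood in \eqref{eq:qmle} treats the factors $\tilde{\bm{Z}}_{-1}^{(0)}$ as fixed unknowns, the target identification condition IC3 is essentially a normalization $(n-1)^{-1}(\tilde{\bm{Z}}_{-1}^{(0)})^T\tilde{\bm{Z}}_{-1}^{(0)} = \bm{I}_r$ that eliminates the rotational indeterminacy. Under this normalization, each row $\bm{\Gamma}_j^{(0)}$ looks like the population regression coefficient of $\tilde{Y}_{-1,j}$ on the factors, which motivates both the $\sqrt{n}$ rate and the limit variance $\sigma_j^2\bm{I}_r$; similarly, $\hat\sigma_j^2$ should behave like a residual variance estimator, with the usual normal-theory variance $2\sigma_j^4/n$.

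I would proceed in four steps. First, I would establish a preliminary consistency result: show that $\hat{\bm{\Gamma}}_j \to \bm{\Gamma}_j^{(0)}$ and $\hat\sigma_j^2 \to \sigma_j^2$ (uniformly in $j$, under \Cref{assumption:bounded}), using the concentration of the empirical covariance $\bm{S}$ toward $\bm{\Gamma}^{(0)}(\bm{\Gamma}^{(0)})^T + \bm{\Sigma}$ and the Kullback--Leibler-type structure of the Gaussian quasi-log-likelihood. Second, I would compute the score and Hessian of \eqref{eq:qmle} with respect to $(\bm{\Gamma}_j^{(0)},\sigma_j^2)$, using the Woodbury identity to avoid inverting a $p\times p$ matrix and to separate the contribution from row $j$ from those of the other rows. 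Third, I would perform a Taylor expansion of the score around the truth: the leading stochastic term is a linear functional of $\tilde{\bm{Y}}_{-1}$ that, by \Cref{assumption:error} and IC3, reduces to an average over $n-1$ i.i.d.\ Gaussian quantities, whose CLT yields the stated marginal limits. Fourth, I would verify that the expected Hessian converges to the information block $\sigma_j^{-2}\bm{I}_r$ for $\bm{\Gamma}_j^{(0)}$ and $(2\sigma_j^4)^{-1}$ for $\sigma_j^2$, by exploiting \Cref{assumption:large-factor} which guarantees $p^{-1}(\bm{\Gamma}^{(0)})^T\bm{\Sigma}^{-1}\bm{\Gamma}^{(0)}$ stabilizes at a positive definite limit.

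The main obstacle is the classic incidental parameters problem: the unobserved factors $\tilde{\bm{Z}}_{-1}^{(0)}$ are treated as $(n-1)r$ additional parameters, and their number grows with $n$, so the textbook MLE argument does not apply directly. Handling this requires showing that the bias induced by profiling out the factors is of smaller order than $n^{-1/2}$, which is where the condition $p\to\infty$ pays off: with many outcome dimensions, the factors can themselves be estimated at rate $\sqrt{p}$, fast enough that the profile score for $(\bm{\Gamma}_j^{(0)},\sigma_j^2)$ behaves asymptotically as if the factors were observed. Concretely, this step demands uniform control (in $j$) of the cross-terms between the estimation errors in the factors and those in the loadings/variances, and is the most delicate calculation.

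Finally, I would plug the resulting influence function into the CLT for independent Gaussian rows of $\tilde{\bm{E}}_{-1}$ to conclude, and note that the joint asymptotic normality of $(\hat{\bm{\Gamma}}_j,\hat\sigma_j^2)$ decouples into the two stated marginals because the off-diagonal block of the limiting information matrix vanishes under the Gaussian model. Since the statement is quoted verbatim from \citet{bai2012}, I would for practical purposes appeal to their Theorems 5.1--5.2 after confirming that \Cref{assumption:error,assumption:bounded,assumption:large-factor} imply their regularity conditions; the only nontrivial matching point is their requirement on the convergence of $p^{-1}\bm{\Gamma}^T\bm{\Sigma}^{-1}\bm{\Gamma}$, which is exactly \Cref{assumption:large-factor}.
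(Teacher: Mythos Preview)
Your final paragraph is essentially what the paper does: it does not reprove the lemma but simply invokes \citet[Theorem 5.2]{bai2012} under their identification condition IC3. The four-step score/Hessian expansion you sketch first is a reasonable outline of how Bai and Li's own argument proceeds, but the paper does not reproduce it.

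There is, however, one genuine gap in your assumption-matching. You assert that the only nontrivial checkpoint is the convergence of $p^{-1}\bm{\Gamma}^T\bm{\Sigma}^{-1}\bm{\Gamma}$. In fact Bai and Li also require the loading rows to be uniformly bounded (part of their Assumption D, mirrored here as \Cref{assumption:bounded}). But in this paper the quasi-MLE targets $\bm{\Gamma}^{(0)}=\bm{\Gamma}\bm{R}$, where $\bm{R}$ is the \emph{random} rotation defined by $\bm{R}\bm{R}^T=(n-1)^{-1}\tilde{\bm{Z}}_{-1}^T\tilde{\bm{Z}}_{-1}$. \Cref{assumption:bounded} bounds $\|\bm{\Gamma}_j\|_2$, not $\|\bm{\Gamma}_j^{(0)}\|_2$, so $\bm{\Gamma}^{(0)}$ is neither deterministic nor a priori uniformly bounded, and a literal application of Bai--Li does not go through. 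The paper flags exactly this point in \Cref{rmk:rotation} and closes the gap by showing, via \citet[Lemma A.1]{bai2012}, that $\bm{R}\overset{a.s.}{\to}\bm{I}_r$ with entries of $\bm{R}-\bm{I}_r$ of order $O_p(n^{-1/2})$, hence $\max_j|\bm{\Gamma}_j^{(0)}-\bm{\Gamma}_j|=O_p(n^{-1/2})$ and all of Bai--Li's conclusions carry over. You would need to add this step to complete the verification.
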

In \Cref{app:lem-ml}, we prove some strengthened
technical results of \Cref{lem:ml-baili} that are used in the proof of
subsequent theorems.

\begin{remark}
  Assumption~\ref{assumption:bounded} is Assumption D
from \cite{bai2012}.  It requires that the diagonal elements of the
quasi-MLE $\hat\Sigma$ be uniformly bounded away from zero and infinity.
We would prefer boundedness to be a consequence of some assumptions
on the distribution of the data,
but at present we are unaware of any other results like
Lemma~\ref{lem:ml-baili}
which do not use this assumption. In practice, the
quasi-likelihood problem \eqref{eq:qmle} is commonly solved by the
Expectation-Maximization (EM) algorithm. Similar to
\citet{bai2012,bai2014theory}, we do not find it necessary
to impose an upper or lower bound for the parameters in the EM
algorithm in the numerical experiments.

\end{remark}

\subsection{Inference for \texorpdfstring{$\bm \alpha$}{TEXT} and \texorpdfstring{$\bm \beta$}{TEXT}}
\label{sec:inference-beta-alpha}


The estimation of $\bm{\alpha}$ and $\bm{\beta}$ is based on the
first row of the rotated outcome $\tilde{\bm{Y}}_1$ in \eqref{eq:Y-tilde-1}, which can be
rewritten as
\begin{equation} \label{eq:y-tilde-1}
  \tilde{\bm{Y}}_1^T / \|\bm{X}\|_2 = \bm{\beta} + \bm{\Gamma}
  (\bm{\alpha} + \tilde{\bm{W}}_1 / \|\bm{X}\|_2) + \tilde{\bm{E}}_{1}^T / \|\bm{X}\|_2
\end{equation}
where $\tilde{\bm{W}}_1 \sim \mathrm{N}(0, \bm{I}_p)$ is from
\cref{eq:Z-tilde} and $\tilde{\bm{W}}_1$ is independent of $\tilde{\bm{E}}_1 \sim
\mathrm{N}(0, \bm{\Sigma})$. Note that $\tilde{\bm
  Y}_{1}/\|\bm{X}\|_2$ is proportional to the sample covariance
between $\bm Y$ and $\bm X$.
All the methods described in this section
first try to find a good estimator $\hat{\bm{\alpha}}$.
They then use $\hat{\bm{\beta}} = \tilde{\bm{Y}}_1^T/\|\bm{X}\|_2 -
\hat{\bm{\Gamma}} \hat{\bm{\alpha}}$ to estimate $\bm{\beta}$.

To reduce variance, we choose to estimate \eqref{eq:y-tilde-1} conditional on $\tilde{\bm{W}}_1$.
 Also, to use the results in \Cref{lem:ml-baili}, we replace $\bm{\Gamma}$ by $\bm{\Gamma}^{(0)}$.
Then, we can rewrite \eqref{eq:y-tilde-1} as
\begin{equation} \label{eq:y-tilde-1-0}
  \tilde{\bm{Y}}_1^T / \|\bm{X}\|_2 = \bm{\beta} + \bm{\Gamma}^{(0)}
  \bm{\alpha}^{(0)} + \tilde{\bm{E}}_1^T / \|\bm{X}\|_2
\end{equation}
where $\bm{\Gamma}^{(0)} = \bm{\Gamma} \bm{R}$ and $\bm{\alpha}^{(0)} =
\bm{R}^{-1} (\bm{\alpha} + \tilde{\bm{W}}_1 / \|\bm{X}\|_2)$. Notice that the random $\bm{R}$
only depends on $\tilde{\bm{Y}}_{-1}$ and
thus is independent of $\tilde{\bm{Y}}_1$.
In the proof of the results in this section, we first consider the estimation
of $\bm{\beta}$ for fixed $\tilde{\bm{W}}_1$, $\bm{R}$ and $\bm{X}$,
and then show the asymptotic distribution of $\hat{\bm{\beta}}$
indeed does not depend on $\tilde{\bm{W}}_1$, $\bm{R}$ or $\bm{X}$, and thus also holds unconditionally.

\subsubsection{Negative control scenario}
\label{sec:known-zero-indices}

If we know a set $\mathcal{C}$ such that $\bm{\beta}_{\mathcal{C}} = 0$
(so $\bm{\Theta} \subset \bm{\Theta}_1$), then $\tilde{\bm{Y}}_1$ can be correspondingly separated into two parts:
\begin{equation} \label{eq:y-tilde-1-separate}
  \begin{split}
  \tilde{\bm{Y}}_{1,\mathcal{C}}^T / \|\bm{X}\|_2 &=
  \bm{\Gamma}_{\mathcal{C}}^{(0)} \bm{\alpha}^{(0)} +
  \tilde{\bm{E}}_{1,\mathcal{C}}^T / \|\bm{X}\|_2, \quad\text{and}\\
  \tilde{\bm{Y}}_{1,-\mathcal{C}}^T / \|\bm{X}\|_2 &= \bm{\beta}_{-\mathcal{C}} +
  \bm{\Gamma}_{-\mathcal{C}}^{(0)} \bm{\alpha}^{(0)} + \tilde{\bm{E}}_{1,-\mathcal{C}}^T / \|\bm{X}\|_2.
  \end{split}
\end{equation}
This estimator matches the RUV-4 estimator of \cite{gagnon2013}
except that it uses quasi-maximum likelihood estimates of $\bm{\Sigma}$
and $\bm{\Gamma}$ instead of using PCA, and generalized linear squares
instead of ordinary linear squares regression. The details are in Section~\ref{sec:ruv}.

The number of negative controls $|\mathcal{C}|$ may grow as $p \to \infty$.
We impose an additional assumption on the latent factors of the negative controls.

\begin{assumption}\label{assumption:nc-factor}
 $\lim_{p \to \infty}|\mathcal{C}|^{-1}
 \bm{\Gamma}^T_\mathcal{C}\bm{\Sigma}_{\mathcal{C}}^{-1}\bm{\Gamma}_\mathcal{C}$
 exists and is positive definite.
\end{assumption}

We consider the following negative control (NC) estimator where $\bm{\alpha}^{(0)}$ is estimated by generalized least squares:
\begin{align}
  &\hat{\bm{\alpha}}^{\mathrm{NC}} = (\hat{\bm{\Gamma}}_{\mathcal{C}}^T
  \hat{\bm{\Sigma}}_{\mathcal{C}}^{-1}
  \hat{\bm{\Gamma}}_{\mathcal{C}})^{-1}
  \hat{\bm{\Gamma}}_{\mathcal{C}}^T \hat{\bm{\Sigma}}_{\mathcal{C}}^{-1}
  \tilde{\bm{Y}}_{1,\mathcal{C}}^T / \| \bm{X} \|_2,~\mathrm{and} \label{eq:alpha-nc} \\
  &\hat{\bm{\beta}}^{\mathrm{NC}} = \tilde{\bm{Y}}_{1,-\mathcal{C}}^T / \| \bm{X} \|_2 -
  \hat{\bm{\Gamma}}_{-\mathcal{C}}
  \hat{\bm{\alpha}}^{\mathrm{NC}}. \label{eq:beta-nc}
\end{align}

Our goal is to show consistency and asymptotic variance of $\hat{\bm{\beta}}^{\mathrm{NC}}_{-\mathcal{C}}$.
Let $\bm \Sigma_\mathcal{C}$ represents the noise covariance matrix of the variables in $\mathcal{C}$. 
 We then have


\begin{theorem} \label{thm:asymptotics-nc}
  Under
  \Cref{assumption:error,assumption:bounded,assumption:large-factor,assumption:nc-factor},
  if $n,p \to \infty$ and $p/n^k \to 0$ for some
  $k > 0$, then for any fixed index set $\mathcal{S}$
with finite cardinality and
 $\mathcal{S} \cap \mathcal{C} = \emptyset$, we have
\begin{equation} \label{eq:nc-var-delta}
\sqrt{n} (\hat{\bm{\beta}}^{\mathrm{NC}}_{\mathcal{S}} - \bm{\beta}_{\mathcal{S}}) \overset{d}{\to}
\mathrm{N}(\bm{0}, (1 + \|\bm{\alpha}\|_2^2) (\bm{\Sigma}_{\mathcal{S}} + \bm{\Delta}_{\mathcal{S}}))
\end{equation}
where $\bm{\Delta}_{\mathcal{S}} = \bm{\Gamma}_{\mathcal{S}} (\bm{\Gamma}_{\mathcal{C}}^T \bm{\Sigma}_\mathcal{C}^{-1}
  \bm{\Gamma}_{\mathcal{C}})^{-1} \bm{\Gamma}_{\mathcal{S}}^T$.

If in addition, $|\mathcal{C}| \to \infty$, the minimum eigenvalue
of $\bm{\Gamma}_{\mathcal{C}}^T \bm{\Sigma}_\mathcal{C}^{-1}
\bm{\Gamma}_{\mathcal{C}} \to \infty$ by \Cref{assumption:nc-factor}, then
the maximum entry of $\bm{\Delta}_{\mathcal{S}}$ goes to $0$. Therefore in this case
\begin{equation} \label{eq:nc-var-oracle}
\sqrt{n} (\hat{\bm{\beta}}^{\mathrm{NC}}_{\mathcal{S}} - \bm{\beta}_{\mathcal{S}}) \overset{d}{\to}
\mathrm{N}(\bm{0}, (1 + \|\bm{\alpha}\|_2^2) \bm{\Sigma}_{\mathcal{S}}).
\end{equation}
\end{theorem}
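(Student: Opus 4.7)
The plan is to linearize $\hat{\bm{\beta}}^{\mathrm{NC}}_{\mathcal{S}} - \bm{\beta}_{\mathcal{S}}$ around the oracle generalized least squares estimator (the one that would use the true $\bm{\Gamma}^{(0)}$ and $\bm{\Sigma}$), isolate the independent stochastic sources, compute their variances, and remove the conditioning on $(\bm{X}, \tilde{\bm{W}}_1, \bm{R})$ at the end. I would work with $\bm{\Gamma}^{(0)}$ and $\bm{\alpha}^{(0)}$ throughout, exploiting that $\bm{\Delta}_{\mathcal{S}}$ is invariant under the rotation $\bm{\Gamma} \mapsto \bm{\Gamma}\bm{R}$, so the final answer can be written in terms of $\bm{\Gamma}$ and $\bm{\alpha}$.

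Substituting \eqref{eq:y-tilde-1-0} into \eqref{eq:alpha-nc}--\eqref{eq:beta-nc} and using $\bm{\beta}_{\mathcal{C}} = \bm{0}$ yields
\[
\hat{\bm{\beta}}^{\mathrm{NC}}_{\mathcal{S}} - \bm{\beta}_{\mathcal{S}} = \tilde{\bm{E}}_{1,\mathcal{S}}^T/\|\bm{X}\|_2 - (\hat{\bm{\Gamma}}_{\mathcal{S}} - \bm{\Gamma}^{(0)}_{\mathcal{S}})\bm{\alpha}^{(0)} - \hat{\bm{\Gamma}}_{\mathcal{S}}\hat{\bm{H}}_{\mathcal{C}}^{-1}\hat{\bm{\Gamma}}_{\mathcal{C}}^T\hat{\bm{\Sigma}}_{\mathcal{C}}^{-1}\bigl[\tilde{\bm{E}}_{1,\mathcal{C}}^T/\|\bm{X}\|_2 - (\hat{\bm{\Gamma}}_{\mathcal{C}} - \bm{\Gamma}^{(0)}_{\mathcal{C}})\bm{\alpha}^{(0)}\bigr],
\]
where $\hat{\bm{H}}_{\mathcal{C}} = \hat{\bm{\Gamma}}_{\mathcal{C}}^T\hat{\bm{\Sigma}}_{\mathcal{C}}^{-1}\hat{\bm{\Gamma}}_{\mathcal{C}}$. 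Using \Cref{lem:ml-baili} together with strengthened (uniform, joint) versions in the appendix, I would replace $\hat{\bm{\Gamma}}$, $\hat{\bm{\Sigma}}$, and $\hat{\bm{H}}_{\mathcal{C}}$ by $\bm{\Gamma}^{(0)}$, $\bm{\Sigma}$, and $\bm{H}_{\mathcal{C}} = \bm{\Gamma}^{(0)T}_{\mathcal{C}}\bm{\Sigma}_{\mathcal{C}}^{-1}\bm{\Gamma}^{(0)}_{\mathcal{C}}$ inside the projection operator and show the resulting error is $o_p(n^{-1/2})$. The growth condition $p/n^{k} \to 0$ enters here to control sums of $|\mathcal{C}|$ factor-analysis errors simultaneously.

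Having reduced to four leading terms, I would compute their covariances. Since $\bm{\Sigma}$ is diagonal and $\mathcal{S}\cap\mathcal{C}=\emptyset$, the Gaussian noise pieces $\tilde{\bm{E}}_{1,\mathcal{S}}$ and $\tilde{\bm{E}}_{1,\mathcal{C}}$ are independent and, after scaling by $\sqrt{n}$ and using $\|\bm{X}\|_2^2/n \to 1$, contribute covariances $\bm{\Sigma}_{\mathcal{S}}$ and $\bm{\Gamma}^{(0)}_{\mathcal{S}}\bm{H}_{\mathcal{C}}^{-1}\bm{\Gamma}^{(0)T}_{\mathcal{S}} = \bm{\Delta}_{\mathcal{S}}$ respectively (the second after cancellation of $\bm{\Sigma}_{\mathcal{C}}^{-1}\bm{\Sigma}_{\mathcal{C}}\bm{\Sigma}_{\mathcal{C}}^{-1}$). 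Because $\hat{\bm{\Gamma}}$ is computed from $\tilde{\bm{Y}}_{-1}$, it is independent of $\tilde{\bm{Y}}_1$, so the factor-loading terms are independent of the noise terms. A joint version of \Cref{lem:ml-baili}, in which $\sqrt{n}(\hat{\bm{\Gamma}}_j - \bm{\Gamma}^{(0)}_j)$ are asymptotically independent $\mathrm{N}(\bm{0},\sigma_j^2\bm{I}_r)$ across $j$, then gives the two remaining terms covariances $\|\bm{\alpha}^{(0)}\|_2^2\bm{\Sigma}_{\mathcal{S}}$ and $\|\bm{\alpha}^{(0)}\|_2^2\bm{\Delta}_{\mathcal{S}}$, summing to $(1+\|\bm{\alpha}^{(0)}\|_2^2)(\bm{\Sigma}_{\mathcal{S}} + \bm{\Delta}_{\mathcal{S}})$.

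To remove the conditioning on $(\bm{X},\tilde{\bm{W}}_1,\bm{R})$, I would use $\|\bm{X}\|_2^2/n \to 1$, $\tilde{\bm{W}}_1/\|\bm{X}\|_2 = \bm{O}_p(n^{-1/2})$, and $\bm{R}\bm{R}^T = (n-1)^{-1}\tilde{\bm{W}}_{-1}^T\tilde{\bm{W}}_{-1} \to \bm{I}_r$, which together give $\|\bm{\alpha}^{(0)}\|_2^2 \to \|\bm{\alpha}\|_2^2$; combined with rotation invariance of $\bm{\Delta}_{\mathcal{S}}$ this yields \eqref{eq:nc-var-delta}. For \eqref{eq:nc-var-oracle}, \Cref{assumption:nc-factor} forces $\bm{\Gamma}^T_{\mathcal{C}}\bm{\Sigma}_{\mathcal{C}}^{-1}\bm{\Gamma}_{\mathcal{C}} \asymp |\mathcal{C}| \to \infty$, so $\bm{\Delta}_{\mathcal{S}}\to\bm{0}$ entrywise by \Cref{assumption:bounded}. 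The main obstacle, I expect, is the uniform-in-$\mathcal{C}$ plug-in step: when $|\mathcal{C}|$ grows the errors in $\hat{\bm{\Gamma}}_{\mathcal{C}}$ and $\hat{\bm{\Sigma}}_{\mathcal{C}}$ are accumulated through the weighting matrix $\hat{\bm{H}}_{\mathcal{C}}^{-1}$, and the pointwise CLT of \Cref{lem:ml-baili} must be upgraded to uniform rates and asymptotic joint independence across coordinates to keep the linearization error below the $n^{-1/2}$ threshold.
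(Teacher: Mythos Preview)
Your proposal is correct and follows essentially the same route as the paper: the same algebraic decomposition into four leading terms, the same replacement of $(\hat{\bm{\Gamma}},\hat{\bm{\Sigma}},\hat{\bm{H}}_{\mathcal{C}})$ by their population counterparts via the uniform rates in the strengthened factor-analysis lemma, and the same variance calculation using independence of $\tilde{\bm{E}}_{1,\mathcal{S}}$, $\tilde{\bm{E}}_{1,\mathcal{C}}$, and $\hat{\bm{\Gamma}}$. The paper also handles the conditioning exactly as you describe, fixing sequences with $\|\bm{X}\|_2/\sqrt{n}\to 1$ and $\bm{R}\to\bm{I}_r$ and arguing the limit does not depend on them.
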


The asymptotic variance in \eqref{eq:nc-var-oracle} is the same as the
variance of the oracle least squares in \eqref{eq:var-beta-oracle}.
Comparable oracle efficiency statements can be found in the econometrics literature
\citep{bai2006,wang2015}.
This is also the variance
used implicitly in RUV-4 as it treats the estimated
$\bm Z$ as given when deriving test statistics for $\bm \beta$.
When the number
of negative controls is not too large, say $|\mathcal{C}| = 30$, the
correction term $\bm{\Delta}_S$ is nontrivial and gives more accurate
estimate of the variance of $\hat{\bm{\beta}}^{\mathrm{NC}}$. See \Cref{sec:simulations} for more simulation results.


\subsubsection{Sparsity scenario}
\label{sec:unkn-zero-indic}

When the zero indices in $\bm{\beta}$ are unknown but sparse (so $\bm{\Theta}
\subseteq \bm{\Theta}_2$),
the estimation of
$\bm{\alpha}$ and $\bm{\beta}$ from $ \tilde{\bm{Y}}_1^T / \|\bm{X}\|_2 = \bm{\beta} + \bm{\Gamma}^{(0)} \bm{\alpha}^{(0)} + \tilde{\bm{E}}_1^T / \|\bm{X}\|_2$
can be cast as a robust regression by viewing $\tilde{\bm{Y}}_1^T$ as observations and $\bm{\Gamma}^{(0)}$ as design
matrix. The nonzero entries in $\bm{\beta}$ correspond to
outliers in this linear regression. 

The problem here has two nontrivial differences compared to
classical robust regression. First, we expect some entries of
$\bm{\beta}$ to be nonzero, and our goal is to
make inference on the outliers; second, we don't observe
the design matrix $\bm{\Gamma}^{(0)}$ but only have its estimator
$\hat{\bm{\Gamma}}$. In fact, if $\bm{\beta} = \bm{0}$ and
$\bm{\Gamma}^{(0)}$ is observed, the ordinary least squares
estimator of $\bm{\alpha}^{(0)}$ is unbiased and has variance of order
$1/(np)$, because the noise in \cref{eq:y-tilde-1} has variance $1/n$
and there are $p$ observations. Our main conclusion is
that $\bm{\alpha}^{(0)}$ can still be estimated very accurately given
the two technical difficulties.

Given a robust loss function $\rho$, we consider the following estimator:
\begin{align}
  \label{eq:alpha-rr}
  &\hat{\bm{\alpha}}^{\mathrm{RR}} = \arg \min \sum_{j=1}^p
  \rho \left( \frac{\tilde{Y}_{1j} / \|\bm{X}\|_2 - \hat{\bm{\Gamma}}_j^T \bm{\alpha}}{\hat{\sigma}_j}\right), ~\text{and}\\
  \label{eq:beta-rr}
  &\hat{\bm{\beta}}^{\mathrm{RR}} = \tilde{\bm{Y}}_1 / \|\bm{X}\|_2 - \hat{\bm{\Gamma}} \hat{\bm{\alpha}}^{\mathrm{RR}}.
\end{align}
For a broad class of loss functions $\rho$, estimating $\bm{\alpha}$ by \cref{eq:alpha-rr} is equivalent to
\begin{equation} \label{eq:sparse-penalty}
(\hat{\bm{\alpha}}^{\mathrm{RR}}, \tilde{\bm{\beta}}) = \arg
\min_{\bm{\alpha}, \bm{\beta}} ~ \sum_{j=1}^p \frac{1}{\hat{\sigma}_j^2} (\tilde{Y}_{1j}/\|\bm X\|_2 -
\beta_j - \hat{\bm{\Gamma}}_j^T \bm{\alpha})^2 + P_{{\lambda}}(\bm{\beta}),
\end{equation}
where $P_{{\lambda}}(\bm{\beta})$ is a penalty to promote sparsity
of $\bm{\beta}$  \citep{she2011}.
However $\hat{\bm{\beta}}^{\mathrm{RR}}$ is not identical to
$\tilde{{\bm{\beta}}}$, which is a sparse vector that does not have an asymptotic normal
distribution. The LEAPP algorithm \citep{sun2012} uses the
form~\eqref{eq:sparse-penalty}. Replacing it by the robust regression \cref{eq:alpha-rr,eq:beta-rr} allows us to derive significance tests of $H_{0j}:\beta_j = 0$.

We assume a smooth loss $\rho$
for the theoretical analysis:
\begin{assumption} \label{assumption:rho}
The penalty $\rho:\real\to[0,\infty)$
with $\rho(0) = 0$.  The function $\rho(x)$ is non-increasing when $x \leq 0$ and is
  non-decreasing when $x > 0$. The derivative
  $\psi = \rho'$ exists and $|\psi|\le D$ for some $D<\infty$.
Furthermore, $\rho$ is strongly convex in a neighborhood of $0$.
\end{assumption}

A sufficient condition for the local strong convexity is that $\psi' > 0$
exists in a neighborhood of $0$.
The next theorem establishes the consistency of $\hat{\bm{\beta}}^{\mathrm{RR}}$.



\begin{theorem} \label{thm:consistency-rr}
    Under \Cref{assumption:error,assumption:bounded,assumption:large-factor,assumption:rho
    },
    if $n,p \to \infty$, $p/n^k \to 0$ for some
  $k > 0$ and $\|\bm{\beta}\|_1/p \to 0$, then $\hat{\bm{\alpha}}^{\mathrm{RR}}
  \overset{p}{\to} \bm{\alpha}$. As a consequence, for any $j$,
  $\hat{{\beta}}_j^{\mathrm{RR}} \overset{p}{\to} {\beta}_j$.
\end{theorem}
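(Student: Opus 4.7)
My plan is to establish consistency of $\hat{\bm\alpha}^{\mathrm{RR}}$ by a standard M-estimator argument: show that the sample objective $M_p(\bm\alpha)=p^{-1}\sum_j\rho(r_j(\bm\alpha))$ from \eqref{eq:alpha-rr} converges uniformly on compact sets to a deterministic function uniquely minimized at $\bm\alpha^{(0)}=\bm R^{-1}(\bm\alpha+\tilde{\bm W}_1/\|\bm X\|_2)$, and then transfer this to $\hat\beta_j^{\mathrm{RR}}$ through the identity $\hat\beta_j^{\mathrm{RR}}-\beta_j=\bigl(\bm\Gamma_j^{(0)T}\bm\alpha^{(0)}-\hat{\bm\Gamma}_j^T\hat{\bm\alpha}^{\mathrm{RR}}\bigr)+\tilde E_{1j}/\|\bm X\|_2$. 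Substituting \eqref{eq:y-tilde-1-0} gives the residual decomposition
\[
r_j(\bm\alpha)=\hat\sigma_j^{-1}\bigl[\beta_j+\bm\Gamma_j^{(0)T}(\bm\alpha^{(0)}-\bm\alpha)+(\bm\Gamma_j^{(0)}-\hat{\bm\Gamma}_j)^T\bm\alpha+\tilde E_{1j}/\|\bm X\|_2\bigr],
\]
which motivates the idealized criterion $M_p^*(\bm\alpha)=p^{-1}\sum_j\rho\bigl(\sigma_j^{-1}\bm\Gamma_j^{(0)T}(\bm\alpha^{(0)}-\bm\alpha)\bigr)$.

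The first step is to show $\sup_{\|\bm\alpha\|\le K}|M_p(\bm\alpha)-M_p^*(\bm\alpha)|\overset{p}{\to}0$. The Lipschitz bound $|\rho(u+v)-\rho(u)|\le D|v|$ from \Cref{assumption:rho} combined with the uniform lower bound on $\hat\sigma_j$ from \Cref{assumption:bounded} reduces this to four averages: the $\beta_j$-contribution is $O(\|\bm\beta\|_1/p)=o(1)$ by hypothesis; the noise average $p^{-1}\sum_j|\tilde E_{1j}|/\|\bm X\|_2$ is $O_p(n^{-1/2})$; the factor-loading term $K\cdot p^{-1}\sum_j\|\hat{\bm\Gamma}_j-\bm\Gamma_j^{(0)}\|$ is $O_p(n^{-1/2})$ by the strengthened version of \Cref{lem:ml-baili} proved in \Cref{app:lem-ml}; and the variance-correction $p^{-1}\sum_j|r_j^*(\bm\alpha)|\,|\hat\sigma_j^{-1}-\sigma_j^{-1}|$ is $O_p(n^{-1/2})$ by the analogous statement for $\hat\sigma_j^2$.

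The second step identifies the limit of $M_p^*$ and shows that its minimum is well separated. Conditionally on $(\bm R,\tilde{\bm W}_1,\bm X)$ the summands are deterministic with $M_p^*(\bm\alpha^{(0)})=0$. Strong convexity of $\rho$ near $0$ furnishes constants $c,\delta>0$ with $\rho(x)\ge c\min(x^2,\delta^2)$, and the positive-definite limit $p^{-1}\bm\Gamma^{(0)T}\bm\Sigma^{-1}\bm\Gamma^{(0)}\to\bm V\succ\bm 0$ from \Cref{assumption:large-factor} delivers a limit $M(\bm\alpha)$ that is quadratic-like near $\bm\alpha^{(0)}$ and strictly positive elsewhere. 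A standard consistency theorem (e.g.\ van der Vaart's Thm.~5.7) then yields $\hat{\bm\alpha}^{\mathrm{RR}}-\bm\alpha^{(0)}\overset{p}{\to}\bm 0$, and since $\bm\alpha^{(0)}=\bm R^{-1}\bm\alpha+O_p(n^{-1/2})$ this is the advertised $\hat{\bm\alpha}^{\mathrm{RR}}\to\bm\alpha$ up to the rotation inherent to the identifiability of $\bm\Gamma$. Combining this with per-coordinate consistency of $\hat{\bm\Gamma}_j$ in the identity above gives the consequence $\hat\beta_j^{\mathrm{RR}}\to\beta_j$.

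The principal obstacle I anticipate is making Step 1 genuinely uniform in $\bm\alpha$: both the cross-term $(\bm\Gamma_j^{(0)}-\hat{\bm\Gamma}_j)^T\bm\alpha$ and the variance correction require an average-over-$j$ (rather than per-coordinate) control on the factor-analysis errors, and this is precisely where the scaling condition $p/n^k\to 0$ is used. A secondary subtlety is that local strong convexity of $\rho$ alone does not rule out far-away minimizers of $M$; one must convert \Cref{assumption:large-factor} into the quantitative density statement that, for any $\bm\alpha\ne\bm\alpha^{(0)}$, a nonvanishing fraction of indices $j$ satisfies $|\bm\Gamma_j^{(0)T}(\bm\alpha^{(0)}-\bm\alpha)|/\sigma_j\gtrsim\delta$ so that the bounded part of $\rho$ still accumulates a strictly positive contribution.
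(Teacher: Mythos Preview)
Your approach is correct and follows the same M-estimator skeleton as the paper, but with two differences worth noting.

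First, you compare the empirical objective to an idealized $M_p^*$ built from the \emph{true} $(\bm\Gamma^{(0)},\sigma_j)$, whereas the paper compares it to $\varphi(\bm\alpha^{(0)}-\tilde{\bm\alpha}):=p^{-1}\sum_j\rho\bigl(\hat{\bm\Gamma}_j^T(\bm\alpha^{(0)}-\tilde{\bm\alpha})/\hat\sigma_j\bigr)$ using the \emph{estimated} loadings and variances. This is only a matter of bookkeeping: with the paper's choice the Lipschitz step need only absorb $\beta_j$, $\tilde E_{1j}/\|\bm X\|_2$, and $(\bm\Gamma_j^{(0)}-\hat{\bm\Gamma}_j)^T\bm\alpha^{(0)}$ (no variance-correction term), and $\varphi(\bm 0)=0$ exactly. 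Your version has the compensating virtue of a cleaner deterministic limit.

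Second, and more substantively, the paper resolves your ``secondary subtlety'' not by a density-of-indices argument but by the one-line observation that $\rho$ is non-decreasing in $|x|$, so $\varphi(t\tilde{\bm\alpha})\ge\varphi(\tilde{\bm\alpha})$ for all $t\ge 1$. This radial monotonicity immediately reduces $\inf_{\|\tilde{\bm\alpha}\|\ge\epsilon}\varphi$ to $\inf_{\|\tilde{\bm\alpha}\|=\epsilon}\varphi$; for $\epsilon$ small enough that every argument $\hat{\bm\Gamma}_j^T\tilde{\bm\alpha}/\hat\sigma_j$ lands in the strong-convexity neighborhood of $\rho$, the lower bound is then just $\kappa\epsilon^2\lambda_{\min}(p^{-1}\hat{\bm\Gamma}^T\hat{\bm\Sigma}^{-1}\hat{\bm\Gamma})$. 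This sidesteps the compactness issue entirely and is where the paper actually spends the condition $p/n^k\to 0$ --- it is needed to bound $\max_j\|\hat{\bm\Gamma}_j\|$ via \Cref{lem:gamma-ml} so that such a small $\epsilon$ exists --- rather than in the Step~1 averages, which only require the mean-square bounds $p^{-1}\sum_j\|\hat{\bm\Gamma}_j-\bm\Gamma_j^{(0)}\|^2=O_p(n^{-1})$ already available from \citet{bai2012}. Your density argument would also work, but it is heavier and unnecessary once you notice the monotonicity.
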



To derive the asymptotic distribution, we consider the estimating equation corresponding to \eqref{eq:alpha-rr}. By taking the derivative of \eqref{eq:alpha-rr},
$\hat{\bm{\alpha}}^{\mathrm{RR}}$ satisfies
\begin{equation} \label{eq:Psi}
\bm{\Psi}_{p,\hat{\bm{\Gamma}},\hat{\bm{\Sigma}}}(\hat{\bm{\alpha}}^{\mathrm{RR}})
= \frac{1}{p} \sum_{j=1}^p \psi\left( \frac{\tilde{Y}_{1j} /
    \|\bm{X}\|_2 - \hat{\bm{\Gamma}}_j^T
    \hat{\bm{\alpha}}^{\mathrm{RR}}}{\hat{\sigma}_j}\right) \hat{\bm{\Gamma}}_j / \hat{\sigma}_j = \bm{0}.
\end{equation}

The next assumption is used to control the higher order term in a Taylor expansion of $\bm{\Psi}$.
\begin{assumption} \label{assumption:psi-2nd-derivative}
The first two derivatives of $\psi$ exist and both $|\psi'(x)|\le D$ and $|\psi''(x)|\le D$
hold at all $x$ for some $D<\infty$.
\end{assumption}

Examples of loss functions $\rho$ that satisfy
\Cref{assumption:rho,assumption:psi-2nd-derivative} include smoothed
Huber loss and Tukey's bisquare.

The next theorem gives the asymptotic distribution of
$\hat{\bm{\beta}}^{\mathrm{RR}}$ when the nonzero entries of
$\bm{\beta}$ are sparse enough. The asymptotic variance of
$\hat{\bm{\beta}}^{\mathrm{RR}}$ is, again, the oracle variance in \eqref{eq:var-beta-oracle}.

\begin{theorem} \label{thm:asymptotics-rr}
  Under \Cref{assumption:error,assumption:bounded,assumption:large-factor,assumption:rho
    ,assumption:psi-2nd-derivative}, if $n,p \to \infty$, with $p/n^k \to 0$ for some
  $k > 0$ and $\|
  \bm{\beta} \|_1 \sqrt{n}/p \to 0$, then
\[
\sqrt{n} (\hat{\bm{\beta}}_\mathcal{S}^{\mathrm{RR}} - \bm{\beta}_\mathcal{S}) \overset{d}{\to}
\mathrm{N}(\bm{0}, (1 + \|\bm{\alpha}\|_2^2) \bm{\Sigma}_\mathcal{S})
\]
for any fixed index set $S$ with finite cardinality.
\end{theorem}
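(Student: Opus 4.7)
The plan is to decompose the error as
\[
\hat{\beta}_j^{\mathrm{RR}} - \beta_j = \frac{\tilde{E}_{1j}}{\|\bm{X}\|_2} + (\bm{\Gamma}_j^{(0)} - \hat{\bm{\Gamma}}_j)^T \bm{\alpha}^{(0)} + \hat{\bm{\Gamma}}_j^T(\bm{\alpha}^{(0)} - \hat{\bm{\alpha}}^{\mathrm{RR}}),
\]
obtained by substituting \eqref{eq:y-tilde-1-0} into the definition $\hat{\bm{\beta}}^{\mathrm{RR}} = \tilde{\bm{Y}}_1^T/\|\bm{X}\|_2 - \hat{\bm{\Gamma}}\hat{\bm{\alpha}}^{\mathrm{RR}}$. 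The first two terms should supply the claimed Gaussian limit, so the task reduces to showing the third term is $o_p(n^{-1/2})$ jointly for $j \in \mathcal{S}$.

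For the noise term, $\sqrt{n}\,\tilde{E}_{1j}/\|\bm{X}\|_2 \overset{d}{\to} \mathrm{N}(0, \sigma_j^2)$ follows from $\tilde{E}_{1j} \sim \mathrm{N}(0, \sigma_j^2)$ and $\|\bm{X}\|_2^2/n \overset{p}{\to} 1$ by \eqref{eq:distribution-x}. For the factor loading error, Lemma~3.1 gives $\sqrt{n}(\hat{\bm{\Gamma}}_j - \bm{\Gamma}_j^{(0)}) \overset{d}{\to} \mathrm{N}(\bm{0}, \sigma_j^2 \bm{I}_r)$, and since $\bm{\alpha}^{(0)} = \bm{R}^{-1}(\bm{\alpha} + \tilde{\bm{W}}_1/\|\bm{X}\|_2)$ with $\bm{R}\bm{R}^T = (n-1)^{-1}\tilde{\bm{W}}_{-1}^T\tilde{\bm{W}}_{-1} \overset{p}{\to} \bm{I}_r$ and $\tilde{\bm{W}}_1/\|\bm{X}\|_2 \overset{p}{\to} \bm{0}$, we get $\|\bm{\alpha}^{(0)}\|_2^2 \overset{p}{\to} \|\bm{\alpha}\|_2^2$, so by Slutsky $\sqrt{n}(\bm{\Gamma}_j^{(0)} - \hat{\bm{\Gamma}}_j)^T\bm{\alpha}^{(0)} \overset{d}{\to} \mathrm{N}(0, \sigma_j^2\|\bm{\alpha}\|_2^2)$. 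The two pieces are asymptotically independent because $\hat{\bm{\Gamma}}$ depends only on $\tilde{\bm{Y}}_{-1}$, whereas $\tilde{E}_{1j}$ sits in $\tilde{\bm{Y}}_1$, and these blocks are independent by \eqref{eq:distribution-e}. Across $j \in \mathcal{S}$, the diagonality of $\bm{\Sigma}$ decouples the $\tilde{E}_{1j}$ and the joint version of Lemma~3.1 decouples the $\hat{\bm{\Gamma}}_j$, so the two contributions add in variance to $(1 + \|\bm{\alpha}\|_2^2)\bm{\Sigma}_\mathcal{S}$.

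The hard part will be controlling the third term. I would start from the score equation $\bm{\Psi}_{p,\hat{\bm{\Gamma}},\hat{\bm{\Sigma}}}(\hat{\bm{\alpha}}^{\mathrm{RR}}) = \bm{0}$ in \eqref{eq:Psi} and Taylor-expand around $\bm{\alpha}^{(0)}$, bounding the quadratic remainder via $|\psi''| \le D$ (Assumption~3.5) and invoking the consistency of $\hat{\bm{\alpha}}^{\mathrm{RR}}$ from Theorem~3.2 to shrink the expansion radius. The Jacobian $\dot{\bm{\Psi}}(\bm{\alpha}^{(0)}) = p^{-1}\sum_j \psi'(\cdot)\hat{\bm{\Gamma}}_j\hat{\bm{\Gamma}}_j^T/\hat{\sigma}_j^2$ is positive definite with eigenvalues bounded away from zero by the local strong convexity of $\rho$, the uniform bounds in Assumption~3.2, and the limit in Assumption~3.3. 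Inversion yields $\hat{\bm{\alpha}}^{\mathrm{RR}} - \bm{\alpha}^{(0)} = -\dot{\bm{\Psi}}(\bm{\alpha}^{(0)})^{-1}\bm{\Psi}_{p,\hat{\bm{\Gamma}},\hat{\bm{\Sigma}}}(\bm{\alpha}^{(0)}) + o_p(\|\bm{\Psi}_{p,\hat{\bm{\Gamma}},\hat{\bm{\Sigma}}}(\bm{\alpha}^{(0)})\|)$.

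The final step is to bound $\bm{\Psi}_{p,\hat{\bm{\Gamma}},\hat{\bm{\Sigma}}}(\bm{\alpha}^{(0)})$. Its $j$-th summand evaluates $\psi$ at $[\beta_j + (\bm{\Gamma}_j^{(0)} - \hat{\bm{\Gamma}}_j)^T\bm{\alpha}^{(0)} + \tilde{E}_{1j}/\|\bm{X}\|_2]/\hat{\sigma}_j$. Because $\psi(0) = 0$ and $|\psi'| \le D$ give $|\psi(x)| \le D|x|$, together with the uniform bounds in Assumption~3.2, the systematic contribution from the $\beta_j$'s is $O(\|\bm{\beta}\|_1/p)$, which is $o(n^{-1/2})$ by the sparsity assumption $\|\bm{\beta}\|_1\sqrt{n}/p \to 0$. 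The remaining mean-zero contributions from $\tilde{E}_{1j}/\|\bm{X}\|_2$ and $(\bm{\Gamma}_j^{(0)} - \hat{\bm{\Gamma}}_j)^T\bm{\alpha}^{(0)}$ are $O_p((np)^{-1/2})$ by a CLT across the (asymptotically) independent $j$ indices, also $o_p(n^{-1/2})$ as $p \to \infty$. Combining, $\|\hat{\bm{\alpha}}^{\mathrm{RR}} - \bm{\alpha}^{(0)}\| = o_p(n^{-1/2})$, so the third decomposition term vanishes at rate $o_p(n^{-1/2})$ uniformly over the finite set $\mathcal{S}$. The main obstacle throughout is tracking how the $n^{-1/2}$ errors in $\hat{\bm{\Gamma}}$ propagate through the nonlinear $\psi$, which is precisely why bounded first and second derivatives of $\psi$ are imposed in Assumption~3.5.
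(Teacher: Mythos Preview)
Your proposal follows essentially the same route as the paper: write $\hat{\bm{\beta}}^{\mathrm{RR}}-\bm{\beta}$ as noise plus loading-error plus $\hat{\bm{\Gamma}}(\bm{\alpha}^{(0)}-\hat{\bm{\alpha}}^{\mathrm{RR}})$, and kill the last piece by Taylor-expanding the estimating equation $\bm{\Psi}_p(\cdot)=\bm{0}$ around $\bm{\alpha}^{(0)}$, bounding $\bm{\Psi}_p(\bm{\alpha}^{(0)})=o_p(n^{-1/2})$ and showing $[\nabla\bm{\Psi}_p(\bm{\alpha}^{(0)})]^{-1}=O_p(1)$. One small difference worth noting: you bound the contribution of the $\beta_j$'s via the Lipschitz estimate $|\psi(x)|\le D|x|$, which gives $O(\|\bm{\beta}\|_1/p)$ and matches the stated hypothesis $\|\bm{\beta}\|_1\sqrt{n}/p\to 0$ directly; the paper's written proof instead splits on $\beta_j=0$ versus $\beta_j\ne 0$ and uses boundedness of $\psi$ on the latter, yielding $O(\|\bm{\beta}\|_0/p)$.

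The one place where your sketch is too quick is the sentence ``The remaining mean-zero contributions $\ldots$ are $O_p((np)^{-1/2})$ by a CLT across the (asymptotically) independent $j$ indices.'' As written this is not a sum of independent terms: the errors $\hat{\bm{\Gamma}}_j-\bm{\Gamma}_j^{(0)}$ depend on all of $\tilde{\bm{Y}}_{-1}$ and are correlated across $j$, and in $\bm{\Psi}_p(\bm{\alpha}^{(0)})$ the estimated $\hat{\bm{\Gamma}}_j$ and $\hat{\sigma}_j$ also appear as the multiplier and denominator. The paper resolves this by first substituting the uniform-in-$j$ representations from the Bai--Li analysis,
\[
\hat{\bm{\Gamma}}_j-\bm{\Gamma}_j^{(0)}=\frac{1}{n-1}\sum_{i=2}^n \tilde{\bm{Z}}_i^{(0)}\tilde{E}_{ij}+o_p(n^{-1/2}),\qquad \hat{\sigma}_j^2=\frac{1}{n-1}\tilde{\bm{E}}_{-1,j}^T\tilde{\bm{E}}_{-1,j}+o_p(n^{-1/2}),
\]
after which, conditionally on $\tilde{\bm{Z}}_{-1}^{(0)}$, the summands in $\bm{\Psi}_p(\bm{\alpha}^{(0)})$ become genuinely independent across $j$ (diagonal $\bm{\Sigma}$), and a second-order Taylor expansion of $\psi$ at $0$ gives the $O_p((np)^{-1/2})$ rate you want. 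You should invoke that representation (the paper's strengthened version of Lemma~3.1) rather than asserting asymptotic independence.
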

If $n/p \to 0$,  then a sufficient condition for $\| \bm{\beta} \|_1
\sqrt{n}/p \to 0$ in \Cref{thm:asymptotics-rr} is $\|\bm{\beta} \|_1 = O(\sqrt{p})$.
If instead $n/p \to c \in (0,\infty)$, then $\|\bm{\beta} \|_1 = o(\sqrt{p})$ suffices.

\subsection{Hypothesis Testing}
\label{sec:hypothesis-testing}

In this section, we construct significance tests for $\bm{\beta}$ and
$\bm{\alpha}$ based on the asymptotic normal distributions in the previous section.

\subsubsection{Test of the primary effects}
\label{sec:test-primary}

We consider the asymptotic test for $H_{0j}:\beta_j =
0,~j=1,\dotsc,p$ resulting from the asymptotic distributions of
$\hat{\beta}_j$ derived in \Cref{thm:asymptotics-nc,thm:asymptotics-rr}. 
\begin{equation}
  \label{eq:test-statistic}
  t_j = \frac{\|\bm{X}\|_2\hat{\beta}_j}{\hat{\sigma}_j \sqrt{1 +
      \|\hat{\bm{\alpha}}\|^2}},\quad j = 1,\dotsc,p
\end{equation}
Here we require $|\mathcal{C}| \to \infty$ for the NC estimator.
The null hypothesis $H_{0j}$ is rejected at level-$\alpha$ if $|t_j| > z_{\alpha/2} =
\mathrm{\Phi}^{-1}(1-\alpha/2)$ as usual, where $\mathrm{\Phi}$ is the cumulative
distribution function of the standard normal. Note that here we slightly
abuse the notation $\alpha$ to represent the significance level and
this should not be confused with the model parameter $\bm{\alpha}$.

The next theorem shows that the overall type-I error and the family-wise error rate
(FWER) can be asymptotically controlled by using the test statistics $t_j,j=1,\dotsc,p$.

\begin{theorem} \label{cor:type-I}
Let $\mathcal{N}_p = \{j| \beta_j = 0, j = 1,\dotsc,p\}$ be all the
true null hypotheses.  Under the assumptions of \Cref{thm:asymptotics-nc} or
  \Cref{thm:asymptotics-rr}, $|\mathcal{C}| \to \infty$ for the NC
  scenario, as $n,p,|\mathcal{N}_p| \to \infty$
\begin{equation} \label{eq:type-I}
\frac{1}{|\mathcal{N}_p|} \sum_{j \in \mathcal{N}_p} I(|t_j| >
z_{\alpha/2}) \overset{p}{\to} \alpha, ~\mathrm{and}
\end{equation}
\begin{equation}
  \label{eq:fwer}
  \lim\sup \, \mathrm{P}\Big(\sum_{j \in \mathcal{N}_p} I(|t_j| > z_{\alpha/(2p)}) \ge
  1\Big) \le \alpha.
\end{equation}
\end{theorem}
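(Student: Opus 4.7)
The proof splits each $t_j$ under $H_{0j}$ into a ``leading Gaussian part'' plus a negligible correction, and then handles the two statements differently: (3.11) via a weak law of large numbers, (3.12) via a union bound.

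First I would substitute equation (3.9) into the definition (3.13) to write, under $H_{0j}$,
\begin{equation*}
  t_j = \frac{\tilde{E}_{1j}/\sigma_j}{\sqrt{1+\|\bm{\alpha}\|_2^2}} + R_{n,j},
\end{equation*}
where $R_{n,j}$ collects the plug-in errors from $(\hat{\bm{\Gamma}}_j,\hat{\bm{\alpha}},\hat{\sigma}_j)$ and the difference between $\bm{\alpha}^{(0)}$ and $\bm{\alpha}$. The variables $v_j := \tilde{E}_{1j}/\sigma_j$ are i.i.d.\ $N(0,1)$ because $\bm{\Sigma}$ is diagonal, which is the crucial independence structure. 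Theorems 3.2/3.4 together with Lemma 3.1 and consistency of $\hat{\bm{\alpha}}$ and $\hat{\sigma}_j$ guarantee that $R_{n,j}=o_p(1)$ for each fixed $j$, so $t_j\overset{d}{\to}N(0,1)$ marginally under the null.

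For (3.11), I would apply a second-moment argument to $\bar{T}_p := |\mathcal{N}_p|^{-1}\sum_{j\in\mathcal{N}_p}I(|t_j|>z_{\alpha/2})$. Marginal convergence and uniform integrability (which follows since indicators are bounded by $1$) give $\mathrm{E}[\bar{T}_p]\to\alpha$. For the variance, I would invoke Theorem 3.2 (or Theorem 3.4) with the two-element index set $\mathcal{S}=\{j,k\}$: the limiting covariance matrix $(1+\|\bm{\alpha}\|_2^2)\bm{\Sigma}_{\{j,k\}}$ is diagonal, so $(t_j,t_k)\overset{d}{\to}N(\bm{0},\bm{I}_2)$ and the indicators are asymptotically uncorrelated; hence $\mathrm{Var}(\bar{T}_p)=o(1)+O(1/|\mathcal{N}_p|)\to 0$, and Chebyshev yields (3.11).

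For (3.12), I would choose a deterministic sequence $\epsilon_p\to 0$ with $\epsilon_p\sqrt{\log p}\to 0$ and use a peeling decomposition,
\begin{equation*}
  \mathrm{P}\Bigl(\max_{j\in\mathcal{N}_p}|t_j|>z_{\alpha/(2p)}\Bigr)\le \mathrm{P}\Bigl(\max_{j\in\mathcal{N}_p}\frac{|v_j|}{\sqrt{1+\|\bm{\alpha}\|_2^2}}>z_{\alpha/(2p)}-\epsilon_p\Bigr)+\mathrm{P}\Bigl(\max_{j\in\mathcal{N}_p}|R_{n,j}|>\epsilon_p\Bigr).
\end{equation*}
Since the $v_j$ are i.i.d.\ $N(0,1)$ and $z_{\alpha/(2p)}=\sqrt{2\log p}(1+o(1))$, Mills' ratio gives the first probability limiting to $1-e^{-\alpha}\le\alpha$. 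The second probability vanishes provided $\max_j|R_{n,j}|=o_p(1/\sqrt{\log p})$; for this I would sharpen Lemma 3.1 into a uniform statement $\max_j\|\hat{\bm{\Gamma}}_j-\bm{\Gamma}_j^{(0)}\|_2,\,\max_j|\hat{\sigma}_j^2-\sigma_j^2|=O_p(\sqrt{\log p}/\sqrt{n})$, combined with the $\bm{o}_p(1)$ rate on $\hat{\bm{\alpha}}-\bm{\alpha}^{(0)}$ from Theorems 3.2/3.4.

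The main obstacle is the uniform-in-$j$ rate for the row-wise quasi-MLE errors needed for Step 3: the pointwise $O_p(1/\sqrt{n})$ rates from Lemma 3.1 must be lifted to $O_p(\sqrt{\log p}/\sqrt{n})$ via subgaussian concentration. The growth assumption $p/n^k\to 0$ for some $k$ constrains how large $p$ can be relative to $n$, and one must check that $(\log p)/\sqrt{n}\to 0$ is compatible, which is where the assumption on $k$ plays a quantitative role. Once these uniform rates are in place, the union bound closes cleanly and yields (3.12).
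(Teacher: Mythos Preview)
Your overall strategy---split $t_j$ into an i.i.d.\ Gaussian part plus a uniformly small remainder, then use a second-moment argument for \eqref{eq:type-I} and a union bound for \eqref{eq:fwer}---matches the paper. The problem is the decomposition itself.

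Your proposed leading term $\tilde{E}_{1j}/(\sigma_j\sqrt{1+\|\bm{\alpha}\|_2^2})$ has variance $1/(1+\|\bm{\alpha}\|_2^2)$, not $1$. Under $H_{0j}$,
\[
\|\bm{X}\|_2\,\hat{\beta}_j \;=\; \tilde{E}_{1j} \;+\; \|\bm{X}\|_2\,(\bm{\Gamma}_j^{(0)}-\hat{\bm{\Gamma}}_j)^T\bm{\alpha}^{(0)} \;+\; \|\bm{X}\|_2\,\hat{\bm{\Gamma}}_j^T(\bm{\alpha}^{(0)}-\hat{\bm{\alpha}}),
\]
and the middle term is $O_p(1)$, not $o_p(1)$: Lemma~3.1 gives $\sqrt{n}(\hat{\bm{\Gamma}}_j-\bm{\Gamma}_j^{(0)})\overset{d}{\to}\mathrm{N}(\bm{0},\sigma_j^2\bm{I}_r)$, and multiplying by $\bm{\alpha}^{(0)}\to\bm{\alpha}$ contributes the extra $\|\bm{\alpha}\|_2^2\sigma_j^2$ that makes the limit of $t_j$ exactly $\mathrm{N}(0,1)$. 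So your $R_{n,j}$ cannot be $o_p(1)$ when $\bm{\alpha}\ne\bm{0}$, and both parts of the argument collapse: the mean in \eqref{eq:type-I} would converge to something strictly less than $\alpha$, and in \eqref{eq:fwer} you would be bounding the wrong Gaussian tail. The paper's fix is to absorb the leading part of the factor-loading error into the Gaussian piece, setting
\[
z_j \;=\; \frac{\tilde{E}_{1j}+(n-1)^{-1/2}\sum_{i=2}^n\tilde{E}_{ij}(\tilde{\bm{Z}}_i^{(0)})^T\bm{\alpha}}{\sigma_j\sqrt{1+\|\bm{\alpha}\|_2^2}},
\]
which is exactly $\mathrm{N}(0,1)$ and i.i.d.\ across $j$ (diagonal $\bm{\Sigma}$, independence of $\tilde{\bm{E}}_1$ and $\tilde{\bm{E}}_{-1}$). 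The uniform approximation \eqref{eq:max-gamma-err} then makes the residual $v_j=t_j-z_j$ satisfy $\max_j|v_j|=o_p(1)$.

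A smaller point on your variance step for \eqref{eq:type-I}: invoking Theorems~3.2/3.4 with $\mathcal{S}=\{j,k\}$ gives asymptotic independence for each \emph{fixed} pair, but you need this uniformly over the $O(|\mathcal{N}_p|^2)$ pairs in the sum. The paper avoids this by working directly with exactly i.i.d.\ $z_j$'s and the uniform bound $\max_j|v_j|=o_p(1)$, which delivers the needed uniformity without extra work.
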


Although the individual
test is asymptotically valid as 
$t_j \overset{d}{\to} \mathrm{N}(0,1)$, 
\Cref{cor:type-I} is not a trivial corollary of the asymptotic normal
distribution in \Cref{thm:asymptotics-nc,thm:asymptotics-rr}. This is because $t_j,j=1,\dotsc,p$
are not independent for finite samples. The proof of \Cref{cor:type-I}
investigates how the dependence of the test statistics diminishes when
$n, p \to \infty$. The proof of
\Cref{cor:type-I} already requires a careful investigation of the convergence
of $\hat{\bm{\beta}}$ in \Cref{thm:asymptotics-rr}. It is more
cumbersome to prove FDR control using our test statistics. 
In
\Cref{sec:numerical-experiments} we show that FDR is
usually well controlled in simulations for the
Benjamini-Hochberg procedure when the sample size is large enough.

\begin{remark} \label{rmk:calibration}
We find a calibration technique in \citet{sun2012} very useful
to improve the type I error and FDR control for finite sample size.
Because the asymptotic variance used in
\cref{eq:test-statistic} is the variance of an oracle OLS
estimator, when the sample size is not sufficiently large, the
variance of $\hat{\beta}^{\mathrm{RR}}$ should be slightly larger than
this oracle variance. To correct for this inflation, one can use median
absolute deviation (MAD) with customary scaling
to match the standard deviation for a Gaussian distribution to estimate the empirical standard error of
$t_j,j=1,\dotsc,p$ and divide $t_j$ by the estimated standard
error. The performance of this empirical calibration is studied in the simulations in \Cref{sec:simulations}.
\end{remark}

\subsubsection{Test of confounding}
\label{sec:test-confound}
We also consider a significance test for $H_{0, \bm \alpha}: \bm \alpha
= \bm 0$, under which the latent factors are not confounding.

\begin{theorem} \label{thm:alpha-test}
  Let the assumptions of \Cref{thm:asymptotics-nc} or
  \Cref{thm:asymptotics-rr} and  $|\mathcal{C}| \to \infty$ for the NC
  scenario be given. Under the null hypothesis that $\bm \alpha = \bm
  0$, for $\hat{\bm \alpha} = \hat{\bm \alpha}^{\mathrm{NC}}$ in
\eqref{eq:alpha-nc} or $\hat{\bm \alpha} = \hat{\bm
  \alpha}^{\mathrm{RR}}$ in \eqref{eq:alpha-rr}, we have
  $$n \cdot \hat {\bm \alpha}^T \hat{\bm{\alpha}} \overset{d}{\to} \chi_r^2$$
  where $\chi_r^2$ is the chi-square distribution with $r$ degree of freedom.
\end{theorem}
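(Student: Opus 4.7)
The overall strategy is to reduce the chi-squared limit to the randomness in the single noise row $\tilde{\bm W}_1$, by showing that under $H_{0,\bm\alpha}$ both estimators of $\bm\alpha$ coincide, to the first order in $\sqrt{n}$, with the ``target'' $\bm\alpha^{(0)}=\bm R^{-1}(\bm\alpha+\tilde{\bm W}_1/\|\bm X\|_2)$ introduced in equation~\eqref{eq:y-tilde-1-0}, and then showing that under $\bm\alpha=\bm 0$ this target is asymptotically a standard Gaussian divided by $\sqrt{n}$.

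First I would observe that when $\bm\alpha=\bm 0$, equation~\eqref{eq:Z-tilde} gives $\tilde{\bm Z}_{-1}=\tilde{\bm W}_{-1}$, so the rotation matrix $\bm R$ defined through IC3 satisfies $\bm R\bm R^{T}=(n-1)^{-1}\tilde{\bm W}_{-1}^{T}\tilde{\bm W}_{-1}$. By the law of large numbers applied to the $(n-1)\times r$ Gaussian matrix $\tilde{\bm W}_{-1}$, one has $\bm R^{-T}\bm R^{-1}\overset{p}{\to}\bm I_r$. Moreover $\bm R$ is a function of $\tilde{\bm W}_{-1}$ only, so it is independent of $\tilde{\bm W}_1\sim \mathrm{N}(\bm 0,\bm I_r)$. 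Under $H_{0,\bm\alpha}$ the target reduces to $\bm\alpha^{(0)}=\bm R^{-1}\tilde{\bm W}_1/\|\bm X\|_2$, and since $\|\bm X\|_2^2/n\overset{p}{\to}1$ by~\eqref{eq:distribution-x}, we obtain
\begin{equation*}
n\,(\bm\alpha^{(0)})^{T}\bm\alpha^{(0)} \;=\; \frac{n}{\|\bm X\|_2^2}\,\tilde{\bm W}_1^{T}\bm R^{-T}\bm R^{-1}\tilde{\bm W}_1 \;\overset{d}{\to}\; \tilde{\bm W}_1^{T}\tilde{\bm W}_1 \;\sim\; \chi_r^{2},
\end{equation*}
by Slutsky's theorem and the independence of $\bm R$ and $\tilde{\bm W}_1$.

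The remaining step is to show $\sqrt{n}\,(\hat{\bm\alpha}-\bm\alpha^{(0)})=\bm o_p(1)$ for each of the two estimators; combined with the previous display and another application of Slutsky, this yields $n\,\hat{\bm\alpha}^{T}\hat{\bm\alpha}\overset{d}{\to}\chi_r^{2}$. For the NC estimator~\eqref{eq:alpha-nc}, plugging in~\eqref{eq:y-tilde-1-separate} with $\bm\beta_{\mathcal{C}}=\bm 0$ gives
\begin{equation*}
\hat{\bm\alpha}^{\mathrm{NC}}-\bm\alpha^{(0)} \;=\; (\hat{\bm\Gamma}_{\mathcal{C}}^{T}\hat{\bm\Sigma}_{\mathcal{C}}^{-1}\hat{\bm\Gamma}_{\mathcal{C}})^{-1}\hat{\bm\Gamma}_{\mathcal{C}}^{T}\hat{\bm\Sigma}_{\mathcal{C}}^{-1}\tilde{\bm E}_{1,\mathcal{C}}^{T}/\|\bm X\|_2 \;+\; \text{terms of lower order in } \hat{\bm\Gamma}-\bm\Gamma^{(0)},\hat{\bm\Sigma}-\bm\Sigma,
\end{equation*}
and the same bookkeeping that drives the proof of Theorem~\ref{thm:asymptotics-nc} shows that $\sqrt{n}$ times this difference has covariance proportional to $(\bm\Gamma_{\mathcal{C}}^{T}\bm\Sigma_{\mathcal{C}}^{-1}\bm\Gamma_{\mathcal{C}})^{-1}$, which vanishes under Assumption~\ref{assumption:nc-factor} as $|\mathcal{C}|\to\infty$. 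For the RR estimator, the stochastic expansion of the estimating equation $\bm\Psi_{p,\hat{\bm\Gamma},\hat{\bm\Sigma}}(\hat{\bm\alpha}^{\mathrm{RR}})=\bm 0$ developed inside the proof of Theorem~\ref{thm:asymptotics-rr} shows that $\sqrt{n}\,(\hat{\bm\alpha}^{\mathrm{RR}}-\bm\alpha^{(0)})=\bm O_p(1/\sqrt{p})+\bm o_p(1)=\bm o_p(1)$, the slow $1/\sqrt{p}$ rate reflecting the fact that $\bm\alpha^{(0)}$ is effectively estimated from $p$ quasi-observations with noise variance of order $1/n$.

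The main technical obstacle is the RR side of the $\bm o_p(1)$ step: one must revisit the Taylor expansion of $\bm\Psi$ and verify that every remainder term (involving the estimation errors $\hat{\bm\Gamma}-\bm\Gamma^{(0)}$, $\hat{\bm\Sigma}-\bm\Sigma$, the nonlinearity of $\psi$ via Assumption~\ref{assumption:psi-2nd-derivative}, and the contribution of the sparse outliers $\bm\beta$) is $\bm o_p(n^{-1/2})$ rather than merely $\bm o_p(1)$. Fortunately, all of these bounds are already produced, in the right form, by the arguments supporting Theorem~\ref{thm:asymptotics-rr}, so the present theorem is essentially a corollary once the two ingredients above are assembled.
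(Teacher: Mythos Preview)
Your proposal is correct and follows essentially the same route as the paper: first show $\sqrt{n}\,\bm\alpha^{(0)}\overset{d}{\to}\mathrm{N}(\bm 0,\bm I_r)$ under $\bm\alpha=\bm 0$ (the paper uses $\bm R\to\bm I_r$ and Slutsky rather than your independence argument, but both work), then show $\sqrt{n}(\hat{\bm\alpha}-\bm\alpha^{(0)})=\bm o_p(1)$ by recycling the expansions from the proofs of Theorems~\ref{thm:asymptotics-nc} and~\ref{thm:asymptotics-rr}. One small note: the identity $\tilde{\bm Z}_{-1}=\tilde{\bm W}_{-1}$ from~\eqref{eq:Z-tilde} actually holds regardless of $\bm\alpha$, so you need not restrict it to the null.
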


Therefore, the null hypothesis $H_{0, \bm \alpha}: \bm \alpha
= \bm 0$ is rejected if $n\cdot \hat {\bm \alpha}^T
\hat{\bm{\alpha}} > \chi_{r, \alpha}^2$ where $\chi_{r, \alpha}^2$ is
the upper-$\alpha$ quantile of $\chi_r^2$. This test, combined with
exploratory factor analysis, can be used as a
diagnosis tool for practitioners to check whether the data gathering
process has any confounding factors that can bias the multiple
hypothesis testing.

\section{Extension to Multiple Regression}
\label{sec:extens-mult-regr}


In \Cref{sec:model,sec:stat-infer} we assume that there is only one primary variable $\bm{X}$
and all the random variables $\bm{X}$, $\bm{Y}$ and $\bm{Z}$ have mean
$\bm{0}$. In practice, there may
be several predictors, or we may want to include an intercept term
in the regression model.
Here we  develop a multiple regression extension to the
original model \cref{eq:linear-model}.

Suppose we observe in total $d = d_0 + d_1$ random predictors
that can be separated into two groups:
\begin{enumerate}
  \item $\bm{X}_0$: $n \times d_0$ nuisance covariates that we would like
    to include in the regression model, and
  \item $\bm{X}_1$: $n \times d_1$ primary variables whose
    effects we want to study.
\end{enumerate}
For example, the intercept term can be included in $\bm{X}_0$ as a $n
\times 1$ vector of $1$ (i.e. a random variable with mean $1$ and
variance $0$).

\citet{leek2008} consider the case $d_0 = 0$ and $d_1
\ge 1$ for SVA and \citet{sun2012} consider the case $d_0
\ge 0$ and $d_1 = 1$ for LEAPP. Here we study the confounder adjusted
multiple regression in full generality, for any $d_0\ge0$ and $d_1\ge1$.
Our model is
\begin{subequations}\label{eq:linear-model-ext}
\begin{align}
  &
  {\bm{Y}}={\bm{X}_0}{\bm{\Beta}}^T_0
  +{\bm{X}_1}{\bm{\Beta}}^T_1
  +{\bm{Z}}{\bm{\Gamma}}^T
  +{\bm{E}},\\
  &  \begin{pmatrix}
    {\bm{X}}_{0i}\\
    {\bm{X}}_{1i} \\
  \end{pmatrix} \text{ are } \mathrm{i.i.d.}
  \text{ with }  \mathrm{E} \left[
  \begin{pmatrix}
    {\bm{X}}_{0i}\\
    {\bm{X}}_{1i}
  \end{pmatrix}
  \begin{pmatrix}
    {\bm{X}}_{0i}\\
    {\bm{X}}_{1i}
  \end{pmatrix}^T
  \right] ={\bm{\Sigma}}_{\bm{X}},\\
  &  {\bm{Z}} \mid ({\bm{X}}_0,
  {\bm{X}}_1) \sim \mathrm{MN}
  ({\bm{X}}_0{\bm{\Alpha}}_0^T
  +{\bm{X}}_1{\bm{\Alpha}}_1^T,
  {\bm{I}}_n, {\bm{I}}_r),\quad\text{and}\\
  &  {\bm{E}} \independent ({\bm{X}}_0,
  {\bm{X}}_1, {\bm{Z}}) ,~
  {\bm{E}} \sim \mathrm{MN} ({\bm{0}},
  {\bm{I}}_n, {\bm{\Sigma}}).
\end{align}
\end{subequations}
The model does not specify means for ${\bm{X}_{0i}}$
and ${\bm{X}_{1i}}$; we do not need them.
The parameters in this model are, for $i =0$ or $1$, ${\bm{\Beta}}_i \in \mathbb{R}^{p \times
d_i}$, ${\bm{\Gamma}} \in \mathbb{R}^{p \times r}$,
${\bm{\Sigma}}_{\bm{X}} \in \mathbb{R}^{d \times d}$, and
${\bm{\Alpha}}_i \in \mathbb{R}^{r \times d_i}$. The parameters
$\bm{\Alpha}$ and $\bm{\Beta}$ are the matrix versions of
$\bm{\alpha}$ and $\bm{\beta}$ in model
\cref{eq:linear-model}. Additionally, we assume
${\bm{\Sigma}}_{\bm{X}}$ is invertible. 
 To clarify our purpose, we are
primarily interested in estimating and testing for the significance of
$\bm{\Beta}_1$.

For the multiple regression model (\ref{eq:linear-model-ext}), we
again consider the rotation matrix $\bm{Q}^T$ that is given by
the QR decomposition
$
\begin{pmatrix}
  \bm{X}_0 & \bm{X}_1
\end{pmatrix}
={\bm{Q}}{\bm{U}}$ where
${\bm{Q}} \in \mathbb{R}^{n \times n}$ is an orthogonal matrix and
${\bm{U}}$ is an upper triangular matrix of size $n \times
d$. Therefore we have
\[
{\bm{Q}}^T
\begin{pmatrix}
  \bm{X}_0 & \bm{X}_1
\end{pmatrix}
={\bm{U}}=
\begin{pmatrix}
  \bm{U}_{00} & \bm{U}_{01} \\
  \bm{0} & \bm{U}_{11} \\
  \bm{0} & \bm{0} \\
\end{pmatrix}
\]
where $\bm{U}_{00}$ is a $d_0 \times d_0$ upper triangular matrix and
$\bm{U}_{11}$ is a $d_1 \times d_1$ upper triangular matrix. Now let the
rotated $\bm{Y}$ be
\begin{equation} \label{eq:Y-tilde-ext}
  \tilde{\bm{Y}}
  ={\bm{Q}}^T{\bm{Y}}=
  \begin{pmatrix}
    \tilde{\bm{Y}}_0 \\
    \tilde{\bm{Y}}_1 \\
    \tilde{\bm{Y}}_{-1}
  \end{pmatrix}
\end{equation}
where $\tilde{{\bm{Y}}}_0$ is $d_0 \times p$, $\tilde{{\bm{Y}}}_1$ is $d_1 \times p$ and
$\tilde{{\bm{Y}}}_{- 1}$ is $(n - d) \times p$, then we can
partition the model into three parts: conditional on both
${\bm{X}}_0$ and ${\bm{X}}_1$ (hence $\bm{U}$),
\begin{align}
  &\tilde{{\bm{Y}}}_0
  = \bm{U}_{00} \bm{\Beta}_0^T + \bm{U}_{01} \bm{\Beta}_1^T + \tilde{{\bm{Z}}}_0
  {\bm{\Gamma}}^T +
  \tilde{{\bm{E}}}_0, \label{eq:Y-tilde-0-ext}\\
  &\tilde{{\bm{Y}}}_1
  =\bm{U}_{11}
  {\bm{\Beta}}_1^T + \tilde{{\bm{Z}}}_1
  {\bm{\Gamma}}^T + \tilde{{\bm{E}}}_1
  \sim \mathrm{MN}
  ({\bm{U}}_{11}
  ({\bm{\Beta}_1}+{\bm{\Gamma}}{\bm{\Alpha}}_1)^T,
  {\bm{I}}_{d_1},
  {\bm{\Gamma}}{\bm{\Gamma}}^T
  +{\bm{\Sigma}})   \label{eq:Y-tilde-1-ext}\\
  &\tilde{{\bm{Y}}}_{- 1} =
  \tilde{{\bm{Z}}}_{- 1} {\bm{\Gamma}}^T
  + \tilde{{\bm{E}}}_{- 1} \sim \mathrm{\mathrm{MN}}
  ({\bm{0}}, {\bm{I}}_{n - d},
  {\bm{\Gamma}}{\bm{\Gamma}}^T
  +{\bm{\Sigma}}) \label{eq:Y-tilde-m1-ext}
\end{align}
where $\tilde{\bm{Z}} = \bm{Q}^T \bm{Z}$ and $\tilde{\bm{E}} =
\bm{Q}^T \bm{E}
\overset{d}{=} \bm{E}$. Equation \cref{eq:Y-tilde-0-ext} corresponds to the
nuisance parameters $\bm{\Beta}_0$ and is discarded according to the
ancillary principle. Equation \cref{eq:Y-tilde-1-ext} is the multivariate
extension to \cref{eq:Y-tilde-1} that is used to estimate
$\bm{\Beta}_1$ and equation \cref{eq:Y-tilde-m1-ext} plays the same role as
\cref{eq:Y-tilde-m1} to estimate $\bm{\Gamma}$ and $\bm{\Sigma}$.

We consider the asymptotics when $n,p \to \infty$
and $d,r$ are fixed and known. Since $d$ is fixed, the estimation of $\bm{\Gamma}$ is not different
from the simple regression case and we can use the maximum likelihood
factor analysis described in \Cref{sec:inference-gamma-sigma}.
Under
\Cref{assumption:error,assumption:bounded,assumption:large-factor},
the precision results of $\hat{\bm{\Gamma}}$ and $\hat{\bm{\Sigma}}$ (\Cref{lem:gamma-ml})
still hold.

Let
${\bm{\Sigma}}_{\bm{X}}^{- 1} = \bm{\Omega} =
\bigl(\begin{smallmatrix}
  {\bm{\Omega}}_{00} &  {\bm{\Omega}}_{01}\\
  {\bm{\Omega}}_{10} &  {\bm{\Omega}}_{11}
\end{smallmatrix}\bigr)$.
In the proof of
\Cref{thm:asymptotics-nc,thm:asymptotics-rr}, we consider a fixed
sequence of $\bm{X}$ such that ${\|\bm{X}\|_2}/{\sqrt{n}} \to 1$.
Similarly, we have the following lemma in the multiple regression scenario:
\begin{lemma}\label{lem:part-asym}
  As $n \to \infty$, $\frac{1}{n} \bm{U}_{11}^T
  \bm{U}_{11}
  \overset{a.s.}{\to}
  {\bm{\Omega}}_{11}^{-1}$.
\end{lemma}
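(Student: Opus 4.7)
The plan is to express $\bm{U}_{11}^T \bm{U}_{11}$ as a Schur complement of sample Gram matrices built from $\bm{X}_0$ and $\bm{X}_1$, then apply the strong law of large numbers, and finally identify the limit with $\bm{\Omega}_{11}^{-1}$ via the block matrix inversion formula. Concretely, orthogonality of $\bm{Q}$ combined with the displayed block form of $\bm{Q}^T (\bm{X}_0\ \bm{X}_1) = \bm{U}$ gives $\bm{X}^T \bm{X} = \bm{U}^T \bm{U}$ blockwise, yielding
\[
\bm{X}_0^T \bm{X}_0 = \bm{U}_{00}^T \bm{U}_{00}, \quad \bm{X}_0^T \bm{X}_1 = \bm{U}_{00}^T \bm{U}_{01}, \quad \bm{X}_1^T \bm{X}_1 = \bm{U}_{01}^T \bm{U}_{01} + \bm{U}_{11}^T \bm{U}_{11}.
\]
Eliminating $\bm{U}_{01}$ produces the Schur-complement identity
\[
\bm{U}_{11}^T \bm{U}_{11} = \bm{X}_1^T \bm{X}_1 - \bm{X}_1^T \bm{X}_0 (\bm{X}_0^T \bm{X}_0)^{-1} \bm{X}_0^T \bm{X}_1,
\]
which is the residual sum-of-squares matrix from regressing $\bm{X}_1$ on $\bm{X}_0$.

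Next, I would apply the strong law of large numbers entrywise to $n^{-1} \bm{X}_0^T \bm{X}_0$, $n^{-1} \bm{X}_0^T \bm{X}_1$, and $n^{-1} \bm{X}_1^T \bm{X}_1$, using that the rows of $(\bm{X}_0 \ \bm{X}_1)$ are i.i.d.\ with second moment matrix $\bm{\Sigma}_{\bm{X}}$. Since $\bm{\Sigma}_{\bm{X}}$ is invertible and positive semidefinite, it is positive definite, hence so is its principal submatrix $\bm{\Sigma}_{\bm{X}, 00}$; in particular, matrix inversion is continuous at $\bm{\Sigma}_{\bm{X}, 00}$. The continuous mapping theorem, applied to the Schur-complement identity above, then yields
\[
\frac{1}{n} \bm{U}_{11}^T \bm{U}_{11} \overset{a.s.}{\to} \bm{\Sigma}_{\bm{X},11} - \bm{\Sigma}_{\bm{X},10} \bm{\Sigma}_{\bm{X},00}^{-1} \bm{\Sigma}_{\bm{X},01}.
\]

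Finally, the standard block inversion formula for a symmetric positive definite matrix identifies this Schur complement with $\bm{\Omega}_{11}^{-1}$, completing the proof. There is no real obstacle here; the only piece of bookkeeping is to verify that $\bm{X}_0^T \bm{X}_0$ is almost surely invertible for all sufficiently large $n$ so that the Schur-complement formula is well defined, which follows immediately from $\bm{\Sigma}_{\bm{X},00}$ being positive definite and the a.s.\ convergence of $n^{-1} \bm{X}_0^T \bm{X}_0$ to it.
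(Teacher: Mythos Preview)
Your proof is correct and follows essentially the same approach as the paper: both apply the strong law of large numbers to the Gram matrix $n^{-1}\bm{X}^T\bm{X}$ and then extract the desired block via block-matrix algebra. The only cosmetic difference is that the paper inverts the full triangular factor $\bm{V}$ and reads off the $(2,2)$ block of $n\bm{V}^{-1}\bm{V}^{-T}\to\bm{\Omega}$, whereas you reach the same conclusion by writing $\bm{U}_{11}^T\bm{U}_{11}$ directly as the Schur complement of $\bm{X}^T\bm{X}$; these are two faces of the same block-inversion identity.
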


Similar to \eqref{eq:y-tilde-1}, we can rewrite \eqref{eq:Y-tilde-1-ext} as
\begin{equation*} 
  \tilde{\bm{Y}}_1^T \bm{U}_{11}^{-T} = \bm{\Beta}_1 + \bm{\Gamma}
  (\bm{\Alpha}_1 + \tilde{\bm{W}}_1\bm{U}_{11}^{-T}) + \tilde{\bm{E}}_{1}\bm{U}_{11}^{-T}
\end{equation*}
where $\tilde{\bm{W}}_1\sim \mathrm{MN}(\bm{0}, \bm{I}_{d_1}, \bm{I}_p)$ is independent
from $\tilde{\bm{E}}_1$. As in \Cref{sec:inference-beta-alpha}, we derive statistical
properties of the estimate of
$\bm \Beta_1$ for a fixed sequence of $\bm X$, $\tilde{\bm{W}}_1$ and $\bm Z$,
which also hold unconditionally.
For simplicity, we assume that the negative controls are a known
set of variables $\mathcal{C}$ with
$\bm{\Beta}_{1,\mathcal{C}} = \bm{0}$.
We can then estimate each column of ${\bm{\Alpha}}_1$ by applying the
negative control (NC) or robust regression (RR) we discussed in
\Cref{sec:unkn-zero-indic,sec:known-zero-indices} to the corresponding row of $\tilde{\bm{Y}}_1\bm{U}_{11}^{-T}$, and then
estimate ${\bm{\Beta}_1}$ by
\[ \hat{{\bm{\Beta}}}_1
   =   \tilde{{\bm{Y}}}_1^T \bm{U}_{11}^{- T}-
   \hat{{\bm{\Gamma}}}
   \hat{{\bm{\Alpha}}}_1.
\]
Notice that $\tilde{\bm{E}}_1\bm{U}_{11}^{-T}\sim
\mathrm{MN}\big(\bm{0}, \bm{\Sigma}, \bm{U}_{11}^{-1}\bm{U}_{11}^{-T}\big)$. Thus the
``samples'' in the robust regression,
which are actually the $p$ variables in the original problem are still
independent within each column. Though the estimates of each column of $\bm{\Alpha}_1$
may be correlated, we will show that the correlation won't affect inference on
$\bm{\Beta}_1$. As a result,  we still get
asymptotic results similar to
\Cref{thm:asymptotics-rr} for the multiple regression model \eqref{eq:linear-model-ext}:

\begin{theorem} \label{thm:multiple-rr}
  Under \Cref{assumption:error,assumption:bounded,assumption:large-factor,assumption:nc-factor,assumption:rho,
    assumption:psi-2nd-derivative}, if $n, p
    \to \infty$, with $p/n^k \to 0$ for some $k > 0$,
    and $\|\mathrm{vec}(\bm{\Beta}_1)\|_1 \sqrt{n} / p
  \to 0$, then for any fixed index set $\mathcal{S}$ with finite cardinality $|\mathcal{S}|$,
  \begin{align}
    \label{eq:nc-asym-ext}
\sqrt{n}
     (\hat{{\bm{\Beta}}}_{1, \mathcal{S}}^{\mathrm{NC}}
     -{\bm{\Beta}_{1, S}}) &\overset{d}{\to} \mathrm{MN}
     ({\bm{0}}_{|\mathcal{S}| \times k_1},
     {\bm{\Sigma}_\mathcal{S}} + \bm{\Delta}_{\mathcal{S}},
     {\bm{\Omega}}_{11}
     +{\bm{\Alpha}}_1^T
     {\bm{\Alpha}}_1),\quad\text{and}\\
\label{eq:rr-asym-ext}
\sqrt{n}
     (\hat{{\bm{\Beta}}}_{1, \mathcal{S}}^{\mathrm{RR}}
     -{\bm{\Beta}_{1, S}}) &\overset{d}{\to} \mathrm{MN}
     ({\bm{0}}_{|\mathcal{S}| \times k_1},
     {\bm{\Sigma}_\mathcal{S}},
     {\bm{\Omega}}_{11}
     +{\bm{\Alpha}}_1^T
     {\bm{\Alpha}}_1)
  \end{align}
where $\bm \Delta_\mathcal{S}$ is defined in \Cref{thm:asymptotics-nc}.
\end{theorem}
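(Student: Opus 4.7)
The overall strategy is to reduce \Cref{thm:multiple-rr} to a column-wise application of \Cref{thm:asymptotics-nc,thm:asymptotics-rr}, combined with \Cref{lem:part-asym}. After the QR rotation in~\eqref{eq:Y-tilde-ext}, the blocks $\tilde{\bm{Y}}_0$, $\tilde{\bm{Y}}_1$, $\tilde{\bm{Y}}_{-1}$ are mutually conditionally independent given $(\bm{X}_0, \bm{X}_1)$. The first block only carries information about the nuisance $\bm{\Beta}_0$ and is discarded; the third block is the factor-analysis model~\eqref{eq:Y-tilde-m1-ext}, which differs from~\eqref{eq:Y-tilde-m1} only in having $n-d$ rather than $n-1$ rows. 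Because $d$ is fixed, the quasi-MLE estimators $(\hat{\bm{\Gamma}}, \hat{\bm{\Sigma}})$ therefore satisfy \Cref{lem:ml-baili} and its strengthened form in \Cref{app:lem-ml} verbatim. The remaining task is to analyze the second-stage regression of $\tilde{\bm{Y}}_1^T \bm{U}_{11}^{-T}$ on $\hat{\bm{\Gamma}}$.

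Rewriting~\eqref{eq:Y-tilde-1-ext} as in the text preceding the theorem gives the matrix analog of~\eqref{eq:y-tilde-1-0},
\[
\tilde{\bm{Y}}_1^T \bm{U}_{11}^{-T} = \bm{\Beta}_1 + \bm{\Gamma}^{(0)}\, \bm{\Alpha}_1^{(0)} + \tilde{\bm{E}}_1^T \bm{U}_{11}^{-T}, \qquad \bm{\Alpha}_1^{(0)} = \bm{R}^{-1}\bigl(\bm{\Alpha}_1 + \tilde{\bm{W}}_1^T \bm{U}_{11}^{-T}\bigr),
\]
in which each of the $d_1$ columns has exactly the structure of the single-regression problem with $1/\|\bm{X}\|_2$ replaced by a column of $\bm{U}_{11}^{-T}$. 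Since $\bm{\Sigma}$ is diagonal, the $p$ coordinates within each column are still independent, so applying the NC or RR estimator column-wise in $\bm{\Alpha}_1$ reproduces the single-regression stochastic expansion per column. For RR, the required column-wise consistency $\hat{\bm{\Alpha}}_1^{\mathrm{RR}} \overset{p}{\to} \bm{\Alpha}_1$ follows by repeating the proof of \Cref{thm:consistency-rr} one column at a time under $\|\mathrm{vec}(\bm{\Beta}_1)\|_1\sqrt{n}/p \to 0$. Collecting columns then yields
\[
\hat{\bm{\Beta}}_{1, \mathcal{S}} - \bm{\Beta}_{1, \mathcal{S}} = \tilde{\bm{E}}_{1, \mathcal{S}}^T \bm{U}_{11}^{-T} \;+\; (\bm{\Gamma}^{(0)}_{\mathcal{S}} - \hat{\bm{\Gamma}}_{\mathcal{S}})\, \bm{\Alpha}_1^{(0)} \;+\; \bm{r}_n,
\]
where $\bm{r}_n = o_p(n^{-1/2})$ for RR, and for NC contains the additional $\bm{\Delta}_{\mathcal{S}}$-type correction exactly as in \Cref{thm:asymptotics-nc}. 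By \Cref{lem:part-asym}, $n\,\bm{U}_{11}^{-1}\bm{U}_{11}^{-T} \to \bm{\Omega}_{11}$ almost surely, so $\sqrt{n}\,\tilde{\bm{E}}_{1, \mathcal{S}}^T \bm{U}_{11}^{-T} \overset{d}{\to} \mathrm{MN}(\bm{0}, \bm{\Sigma}_\mathcal{S}, \bm{\Omega}_{11})$; the strengthened \Cref{lem:ml-baili} gives $\sqrt{n}(\hat{\bm{\Gamma}}_{\mathcal{S}} - \bm{\Gamma}_{\mathcal{S}}^{(0)}) \overset{d}{\to} \mathrm{MN}(\bm{0}, \bm{\Sigma}_{\mathcal{S}}, \bm{I}_r)$, which after right-multiplication by $\bm{\Alpha}_1^{(0)}$ converges to $\mathrm{MN}(\bm{0}, \bm{\Sigma}_{\mathcal{S}}, \bm{\Alpha}_1^T\bm{\Alpha}_1)$. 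Because $\tilde{\bm{Y}}_1$ and $\tilde{\bm{Y}}_{-1}$ are conditionally independent given the design, these two matrix-normal terms are asymptotically independent, and their sum has the claimed Kronecker column covariance $\bm{\Omega}_{11} + \bm{\Alpha}_1^T\bm{\Alpha}_1$.

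The main obstacle lies in justifying the convergence $(\hat{\bm{\Gamma}}_{\mathcal{S}} - \bm{\Gamma}_{\mathcal{S}}^{(0)})\,\bm{\Alpha}_1^{(0)} \overset{d}{\to} \mathrm{MN}(\bm{0}, \bm{\Sigma}_{\mathcal{S}}, \bm{\Alpha}_1^T\bm{\Alpha}_1)$ at the level of the Kronecker structure, which requires showing that the random rotation $\bm{R}$ and the extra noise $\tilde{\bm{W}}_1^T\bm{U}_{11}^{-T}$ do not leak into the limit. Under the IC3 identification condition used in \Cref{sec:inference-gamma-sigma}, $\bm{R}\bm{R}^T = (n-d)^{-1}\tilde{\bm{Z}}_{-1}^T\tilde{\bm{Z}}_{-1}$; since $\tilde{\bm{Z}}_{-1}=\bm{Q}^T_{-1}\bm{W}$ has rows iid $\mathrm{N}(\bm{0}, \bm{I}_r)$ and $r$ is fixed, the law of large numbers yields $\bm{R}\bm{R}^T \to \bm{I}_r$ almost surely, so $\bm{R}^{-T}\bm{R}^{-1}\to\bm{I}_r$ and the column covariance of $(\hat{\bm{\Gamma}}_{\mathcal{S}} - \bm{\Gamma}_{\mathcal{S}}^{(0)})\bm{R}^{-1}\bm{\Alpha}_1$ converges to $\bm{\Alpha}_1^T\bm{\Alpha}_1$. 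The cross-term $(\hat{\bm{\Gamma}}_{\mathcal{S}} - \bm{\Gamma}_{\mathcal{S}}^{(0)})\bm{R}^{-1}\tilde{\bm{W}}_1^T\bm{U}_{11}^{-T}$ is asymptotically negligible: since $r, d_1$ are fixed and each entry of $\tilde{\bm{W}}_1^T\bm{U}_{11}^{-T}$ has variance of order $1/n$, its Frobenius norm is $O_p(n^{-1/2})$, so after the $\sqrt{n}$ scaling this term is $O_p(n^{-1/2}) = o_p(1)$. Finally, as in the single-regression proofs, one passes from the statement conditional on $(\tilde{\bm{W}}_1, \bm{R}, \bm{X}_0, \bm{X}_1)$ to the unconditional one using that the parameters of the limit distribution depend on none of these conditioning variables.
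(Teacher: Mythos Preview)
Your proposal is correct and follows essentially the same route as the paper: reduce to a column-wise application of \Cref{thm:asymptotics-nc,thm:consistency-rr,thm:asymptotics-rr}, use \Cref{lem:part-asym} for the $\bm{U}_{11}$ asymptotics, and combine the two independent matrix-normal pieces coming from $\tilde{\bm{E}}_1$ and $\hat{\bm{\Gamma}}-\bm{\Gamma}^{(0)}$. Your explicit handling of the $\bm{R}\to\bm{I}_r$ convergence and of the cross-term $(\hat{\bm{\Gamma}}_{\mathcal{S}}-\bm{\Gamma}_{\mathcal{S}}^{(0)})\bm{R}^{-1}\tilde{\bm{W}}_1^T\bm{U}_{11}^{-T}=o_p(n^{-1/2})$ is slightly more detailed than the paper's write-up, which simply defers these points to the single-variable proofs and \Cref{rmk:rotation}, but the argument is the same.
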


As for the asymptotic efficiency of this estimator, we again compare
it to the oracle OLS estimator of $\bm{\Beta}_1$ which observes
confounding variables ${\bm{Z}}$ in \eqref{eq:linear-model-ext}.
In the multiple regression model, we claim that
$\hat{{\bm{\Beta}}}_1^{\mathrm{RR}}$ still
reaches the oracle asymptotic efficiency. In fact, let $\bm{\Beta}= \begin{pmatrix}
  \bm{\Beta}_0 & \bm{\Beta}_1 & \bm{\Gamma}
\end{pmatrix}$. The
oracle OLS estimator of $\bm{\Beta}$,
$\hat{{\bm{\Beta}}}^{\mathrm{OLS}}$, is unbiased and its vectorization
has variance
$\bm{V}^{-1} \otimes \bm{\Sigma}/n$ where
\[
  \bm{V} =
  \begin{pmatrix}
  \bm{\Sigma}_{\bm{X}} & \bm{\Sigma}_{\bm{X}} \bm{\Alpha}^T\\
  \bm{\Alpha}\bm{\Sigma}_{\bm{X}} & \bm{I}_r
  +\bm{\Alpha}\bm{\Sigma}_{\bm{X}} \bm{\Alpha}^T
  \end{pmatrix},~ \mathrm{for}~
  \bm{\Alpha} =
  \begin{pmatrix}
    \bm{\Alpha}_0 & \bm{\Alpha}_1
  \end{pmatrix}.
\]
By the block-wise matrix inversion formula, the top left $d \times d$
block of $\bm{V}^{-1}$ is $\bm{\Sigma}_{\bm{X}}^{-1} +\bm{\Alpha}^T
\bm{\Alpha}$. The variance of $\hat{\bm{\Beta}}_1^{\mathrm{OLS}}$
only depends on the bottom right $d_1 \times d_1$ sub-block of this $d
\times d$ block, which is simply $\bm{\Omega}_{11} +\bm{\Alpha}_1^T
\bm{\Alpha}_1$. Therefore 
$\hat{{\bm{\Beta}}}_1^{\mathrm{OLS}}$ is unbiased and its
vectorization has variance $ ({\bm{\Omega}}_{11}
     +{\bm{\Alpha}}_1^T
     {\bm{\Alpha}}_1) \otimes {\bm{\Sigma}}/n$, matching the asymptotic variance
     of $\hat{\bm{\Beta}}^{\mathrm{RR}}_1$ in \Cref{thm:multiple-rr}.



\section{Discussion}
\label{sec:discussions}

\subsection{Confounding vs.\ unconfounding}
\label{sec:latent-but-unconf}

The issue of multiple testing dependence arises because $\bm{Z}$ in
the true model \eqref{eq:previous-model} is unobserved.
We have focused on the case where $\bm{Z}$ is confounded with
the primary variable.
Some similar results were obtained earlier
for the unconfounded case, corresponding to $\bm{\alpha}=0$ in our notation.
For example, \citet{lan2014} used a factor model to improve the
efficiency of significance tests of the regression intercepts.
\citet{jin2012comment,li2014rate} developed more powerful procedures for testing
$\bm{\beta}$ while still controlling FDR under unconfounded dependence.

In another related work, \citet{fan2012} imposed a factor structure
on the unconfounded test statistics, whereas this paper and the articles discussed later in
\Cref{sec:comp-with-exist} assume a factor structure on the raw data.
\citet{fan2012} used an approximate factor model to accurately estimate the false
discovery proportion. Their correction procedure also includes a
step of robust regression. Nevertheless, it is often difficult to interpret the factor
structure of the test statistics. In comparison, the latent variables
$\bm{Z}$ in our model \eqref{eq:linear-model}, whether confounding or
not, can be interpreted as batch effects, laboratory conditions, or
other systematic bias. Such problems are widely observed in genetics
studies (see e.g.\ the review article \citep{leek2010}).

As a final remark, some of the models and methods developed in
the context of unconfounded hypothesis testing may be useful for
confounded problems as well. For example,
the relationship between $\bm{Z}$ and $\bm{X}$  needs not be linear
as in~\eqref{eq:linear-model-z-sec1}. In certain applications, it may be
more appropriate to use a time-series model \citep{sun2009} or
a mixture model \citep{efron2010}.

\subsection{Marginal effects vs.\ direct effects}
\label{sec:marginal-effects-vs}

In \Cref{sec:introduction}, we switched our interest from
the marginal effects $\bm{\tau}$ in \eqref{eq:marginal-effect} to the
direct effects $\bm{\beta}$. We believe that they are usually more
scientifically meaningful and interpretable than the marginal effects.
For instance, if the treated (control) samples are analyzed by machine
A (machine B), and the machine A outputs
higher values than B, we certainly do not
want to include the effects of this machine to machine variation on the outcome
measurements.

When model
\eqref{eq:linear-model} is interpreted as a ``structural
equations model'' \citep{bollen1989}, $\bm \beta$ is indeed the causal effect of
$\bm{X}$ on $\bm{Y}$ \citep{pearl2009}. In this paper we do not make
such structural assumptions about the data generating
process. Instead, we use \eqref{eq:linear-model} to describe the screening
procedure commonly applied in high throughput data analysis. The model
\eqref{eq:linear-model} also describes how we think the marginal
effects can be confounded and hence different from the more meaningful direct effects
$\bm{\beta}$. Additionally, the asymptotic setting in this paper is quite
different from that in the traditional structural equations model.

\subsection{Comparison with existing confounder adjustment methods}
\label{sec:comp-with-exist}

We discuss in more detail how previous methods of confounder
adjustment, namely SVA \citep{leek2007,leek2008},
RUV-4 \citep{gagnon2013,gagnon2012}
and
LEAPP \citep{sun2012}, fit in the framework \cref{eq:linear-model}. See
\citet{perry2013degrees} for an alternative approach of bilinear
regression with latent factors that is also motivated by
high-throughput data analysis.

\subsubsection{SVA}
\label{sec:sva}

There are two versions of SVA: the reduced subset SVA (subset-SVA) of
\citet{leek2007} and the iteratively reweighted SVA (IRW-SVA) of
\citet{leek2008}. Both of them can be interpreted as the two-step
statistical procedure in the framework \cref{eq:linear-model}. In the first step, SVA
estimates the confounding factors by applying PCA to the residual matrix
$({\bm{I}}-{\bm{H}}_{{\bm{X}}}) {\bm{Y}}$ where
${\bm{H}}_{{\bm{X}}}
={\bm{X}} ({\bm{X}}^T
{\bm{X}})^{- 1} {\bm{X}}^T$ is the
projection matrix of ${\bm{X}}$.
In contrast, we
applied factor analysis to the rotated residual matrix
$(\bm{Q}^T{\bm{Y}})_{- 1}$, where $\bm{Q}$ comes from
the QR decomposition of $\bm X$ in \Cref{sec:extens-mult-regr}.
To see why these two approaches lead to the same estimate
of $\bm{\Gamma}$, we introduce the block form of $\bm Q = \begin{pmatrix} \bm Q_1 & \bm Q_2\end{pmatrix}$ where
$\bm Q_1 \in \mathbb{R}^{n \times d}$ and $\bm Q_2 \in \mathbb{R}^{n
  \times (n-d)}$. It is easy to show that $(\bm{Q}^T{\bm{Y}})_{- 1} = \bm
Q_2^T \bm Y$ and $({\bm{I}}-{\bm{H}}_{{\bm{X}}}) {\bm{Y}} = \bm Q_2 \bm Q_2^T \bm Y$.
Thus our rotated matrix $(\bm{Q}^T{\bm{Y}})_{- 1}$ decorrelates the
residual matrix by left-multiplying by $\bm{Q}_2$ (because $\bm Q_2^T\bm
Q_2 = \bm I_{n - d}$).
Because $(\bm Q_2^T \bm Y)^T\bm Q_2^T \bm Y = (\bm Q_2 \bm
Q_2^T \bm Y)^T\bm Q_2 \bm
Q_2^T \bm Y$,
$(\bm Q^T\bm Y)_{-1}$ and $(\bm I - \bm H_{\bm X})\bm Y$
have the same sample covariance matrix, they will yield the same
factor loading estimate under PCA and also under MLE.
The main advantage
of using the rotated matrix is theoretical: the rotated residual matrices
have independent rows.  


Because SVA doesn't assume an
explicit relationship between the primary variable $\bm{X}$ and the
confounders $\bm{Z}$, it cannot use the regression \cref{eq:y-tilde-1}
to estimate $\bm{\alpha}$ (not even defined) and $\bm{\beta}$. Instead, the two SVA algorithms try to reconstruct the
surrogate variables, which are essentially the confounders
$\bm{Z}$ in our framework. Assuming the true primary effect
$\bm{\beta}$ is sparse, the subset-SVA algorithm finds the outcome
variables $\bm{Y}$ that have the smallest marginal correlation with $\bm{X}$
and uses their principal scores as $\bm{Z}$. Then, it computes the p-values
by F-tests comparing the linear regression models with and without $\bm{Z}$. This procedure can
easily fail because a small marginal correlation does not imply no real effect of $\bm X$
due to the confounding factors. For example, most of
the marginal effects in the gender study in
\Cref{fig:gender-naive} are very small, but after confounding
adjustment we find some are indeed significant (see \Cref{sec:gender-study}).

The IRW-SVA algorithm modifies subset-SVA by iteratively choosing the subset.
At each step, IRW-SVA gives a weight to each outcome
variable based on how likely $\beta_j = 0$ the current estimate of
surrogate variables. The weights are then used in
a weighted PCA algorithm
to update the estimated surrogate variables. IRW-SVA may be
related to our robust regression estimator in
\Cref{sec:unkn-zero-indic} in the sense that an M-estimator is
commonly solved by Iteratively Reweighted Least Squares (IRLS) and the
weights also represents how likely the data point is an outlier.
However, unlike IRLS, the iteratively reweighted PCA algorithm is not even guaranteed to converge.
Some previous articles \citep{gagnon2013,sun2012} and our
experiments in \Cref{sec:simulations} and Supplement
\citep{wang2015supplement} show that SVA is outperformed by the
NC and RR estimators in most confounded examples. 

\subsubsection{RUV}\label{sec:ruv}

\citet{gagnon2013} derived the RUV-4
estimator of $\bm \beta$ via a sequence of heuristic calculations.
In \Cref{sec:known-zero-indices}, we derived an analytically more tractable
estimator $\hat{\bm \beta}^{\mathrm{NC}}$ which is actually the same as RUV-4,
 with the
only difference being that we use MLE instead of PCA to estimate the factors and GLS instead of OLS in
\cref{eq:alpha-nc}. To see why $\hat{\bm\beta}^{\mathrm{NC}}$ is essentially the same as
$\hat{\bm\beta}^{\mathrm{RUV-4}}$,
in the first step of RUV-4 it uses the residual matrix to estimate
$\bm \Gamma$ and $\bm Z$, which yields the same estimate as using
the rotated matrix (\Cref{sec:sva}). In the second step,
RUV-4 estimates $\bm \beta$ via a regression of $\bm{Y}$ on
$\bm X$ and $\hat{\bm Z} = \bm Q\begin{pmatrix}\tilde{\bm Z}_{-1}^T&
  \hat{\bm \alpha}^T\end{pmatrix}^T$. This is equivalent to using
ordinary least squares (OLS) to estimate $\bm{\alpha}$ in \eqref{eq:y-tilde-1-separate}.
Based on more heuristic
calculations, the authors claim that the RUV-4 estimator has approximately
the oracle variance. We rigorously prove
this statement in \Cref{thm:asymptotics-nc} when the number of negative
controls is large and give a finite sample correction when the negative
controls are few. In \Cref{sec:simulations} we show this correction is
very useful to control the type I error and FDR in simulations.

\subsubsection{LEAPP}
\label{sec:leapp}

We follow the two-step procedure and robust regression framework in
LEAPP \citep{sun2012} in this paper, thus
the test statistics $t_j^{\mathrm{RR}}$ are very similar to the test statistics in LEAPP.
The difference is that LEAPP uses the
$\Theta$-IPOD algorithm of \citet{she2011} for outlier detection,
which is robust against outliers at leverage points but is not easy
to analyze.  Indeed \citet{sun2012} replaced it by the Dantzig selector
in its theoretical appendix.
The classical M-estimator, although not robust to leverage
points \citep{yohai1987high}, allows us to study the theoretical properties more easily. In practice,
LEAPP and RR estimator usually produce very similar results; see
\Cref{sec:simulations} for a numerical comparison.

\subsection{Inference when \texorpdfstring{$\bm \Sigma$}{TEXT} is nondiagonal}
\label{sec:inference-when-sigma}

Our analysis is based on the assumption that the noise covariance matrix
$\bm\Sigma$ is diagonal,
though in many applications, the researcher might suspect that the
outcome variables $\bm{Y}$ in model \cref{eq:linear-model} are still
correlated after conditioning on the latent factors.
Typical examples include gene regulatory networks \citep{de2004} and
cross-sectional panel data \citep{pesaran2004}, where the
variable dependence sometimes cannot be fully explained by the
latent factors or may simply require too many of them.
\citet{bai2012approximate} extend the theoretical results in
\citet{bai2012} to approximate factor models allowing for weakly correlated noise.
Approximate factor models have also been discussed in \citet{fan2013}.

\section{Numerical Experiments}
\label{sec:numerical-experiments}

\subsection{Simulations}
\label{sec:simulations}

We have provided theoretical guarantees of confounder adjusting
methods in various settings and the asymptotic regime of $n, p \to \infty$ (e.g.\
\Cref{thm:asymptotics-nc,thm:consistency-rr,thm:asymptotics-rr,cor:type-I,thm:multiple-rr}).
Now we use numerical simulations to verify these results and further
study the finite sample properties of our estimators and tests statistics.

The simulation data are generated from the single primary variable model
\eqref{eq:linear-model}. More
specifically, $X_i$ is a centered binary variable
$
(X_i + 1)/2
\overset{\mathrm{i.i.d.}}{\sim} \mathrm{Bernoulli}(0.5),
$
and $\bm{Y}_i$, $\bm{Z}_i$ are generated according to
\cref{eq:linear-model}.

For the parameters in the
model, the noise variances are generated by
$\sigma_j^2 \overset{\mathrm{i.i.d.}}{\sim}\mathrm{InvGamma}(3, 2),~j=1,\dotsc,p$, and so
$\mathbb{E}(\sigma_j^2) = \mathrm{Var}(\sigma_j^2) = 1$. We set each
$\alpha_k = \|\bm{\alpha}\|_2/\sqrt r$ equally for $k = 1, 2, \cdots,
r$ where $\|\bm{\alpha}\|_2^2$ is set to $1$, so the variance of $X_i$
explained by the confounding factors is $R^2 = 50\%$. (Additional
results for $R^2 = 5\%$ and $0$ are in the Supplement.)
The primary effect $\bm{\beta}$ has
independent components $\beta_i$ taking the values $3 \sqrt{1 + \|\bm{\alpha}\|_2^2}$ and $0$ with probability
$\pi = 0.05$ and $1 - \pi = 0.95$, respectively, so the nonzero
effects are sparse and have effect size $3$. This implies that the oracle estimator has
power approximately $\mathrm{P}(\mathrm{N}(3,1) > z_{0.025}) =
0.85$ to detect the signals at a significance level of $0.05$. We set the number of latent
factors $r$ to be either $2$ or $10$. For the latent factor loading matrix $\bm \Gamma$, we take
$\bm\Gamma = \tilde{\bm\Gamma}\bm{D}$ where $\tilde{\bm\Gamma}$ is a
$p \times r$ orthogonal matrix sampled uniformly from the Stiefel
manifold $V_r(\mathbb{R}^p)$, the set of all $p \times r$ orthogonal
matrix. Based on \Cref{assumption:large-factor}, we set the latent factor strength $\bm{D} =
\sqrt p\cdot \mathrm{diag}(d_1, \cdots, d_r)$ where $d_k = 3 -
2(k-1)/(r - 1)$ thus $d_1$ to $d_r$ are distributed evenly inside the interval $[3, 1]$.
As the number of factors $r$ can be easily estimated for this strong factor setting
(more discussions can be found in \citet{owen2015}),
we assume that the number $r$ of factors is known to all of the algorithms in this simulation.


We set $p = 5000$, $n = 100$ or $500$ to mimic the data size of many
genetic studies. For the negative control scenario, we choose
$|\mathcal C|= 30$ negative controls at random from the zero positions of
$\bm \beta$.
We expect that negative control methods would perform better with a larger value of
$|\mathcal{C}|$ and worse with a smaller value.
The choice $|\mathcal{C}|=30$ is around the size of the spike-in controls in
many microarray experiments \citep{gagnon2012}.
For the loss function in
our sparsity scenario, we use Tukey's bisquare which is optimized via IRLS
with an ordinary least-square fit as the starting values of the coefficients.
Finally, each of the four combinations of $n$ and $r$ is randomly repeated $100$ times.

We compare the performance of nine
different approaches. There are two baseline methods:
the ``naive'' method estimates $\bm\beta$ by a
linear regression of $\bm{Y}$ on just the observed primary variable
$\bm{X}$ and calculates p-values using the classical t-tests, while the
``oracle'' method regresses $\bm{Y}$ on both $\bm{X}$ and the
confounding variables $\bm{Z}$ as described in \Cref{sec:linear-model-with}.
There are three methods in the RUV-4/negative controls family:
the RUV-4 method \citep{gagnon2013},
our ``NC'' method which computes test statistics using
$\hat{\bm\beta}^{\mathrm{NC}}$ and its variance estimate $(1 + \|\hat{\bm{\alpha}}\|_2^2)(\hat{\bm{\Sigma}}+\hat{\bm{\Delta}})$,  and our ``NC-ASY'' method
which uses the same $\hat{\bm \beta}^{\mathrm{NC}}$ but estimates its variance by
$(1 + \|\hat{\bm{\alpha}}\|_2^2)\hat{\bm{\Sigma}}$.
We compare four methods in the SVA/LEAPP/sparsity family:
these are ``IRW-SVA'' \citep{leek2008},  ``LEAPP'' \citep{sun2012},
the ``LEAPP(RR)'' method which is our RR estimator using M-estimation at the robustness stage and
computes the test-statistics using \eqref{eq:test-statistic}, and
 the ``LEAPP(RR-MAD)'' method which uses the median absolute deviation (MAD) of
the test statistics in \eqref{eq:test-statistic} to calibrate them.
(see \Cref{sec:hypothesis-testing})

To measure the performance of these methods, we report the type I
error (\Cref{cor:type-I}), power, false discovery proportion (FDP)
and precision of hypotheses with the smallest $100$ p-values
in the $100$ simulations. For both the type I error and power, we set
the significance level to be $0.05$. For FDP, we use
Benjamini-Hochberg procedure with FDR controlled at $0.2$. These
metrics are plotted in \Cref{fig:p-5000-methods} under different
settings of $n$ and $r$.


\begin{figure}[hp]
  \centering
   \includegraphics[width = \textwidth]{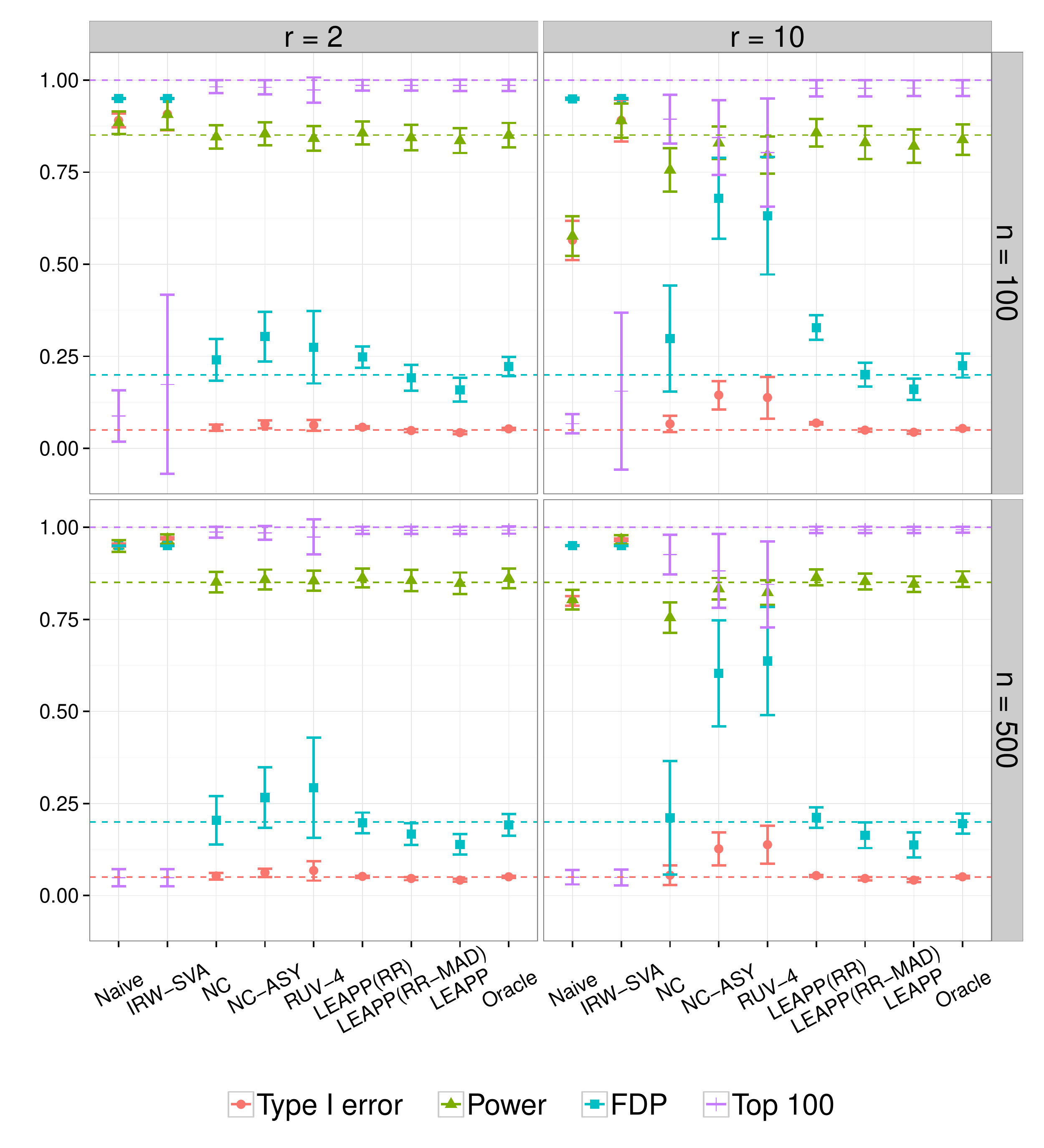}
  \caption{\small Compare the performance of nine different approaches (from
  left to right): naive regression ignoring the confounders (Naive),
  IRW-SVA, negative control with finite sample
  correction (NC) in \cref{eq:nc-var-delta}, negative control with asymptotic oracle variance (NC-ASY) in
  \cref{eq:nc-var-oracle},  RUV-4,  robust regression (LEAPP(RR)),
  robust regression with calibration (LEAPP(RR-MAD)),
   LEAPP, oracle regression which observes the confounders (Oracle).
   The error bars are one
  standard deviation over $100$ repeated simulations. The
  three dashed horizontal lines from bottom to top are the nominal significance level, FDR level and oracle power, respectively.}
  \label{fig:p-5000-methods}
\end{figure}


First, from \Cref{fig:p-5000-methods}, we see that the oracle method has exactly the same type I error
and FDP as specified, while the naive method and SVA fail drastically.
SVA performs performs better than the naive method in terms of the precision of the smallest $100$
p-values, but is still much worse than other methods.
Next, for the
negative control scenario, as we only have $|\mathcal{C}| = 30$ negative controls, ignoring the inflated variance term
$\bm{\Delta}_S$ in \Cref{thm:asymptotics-nc} will lead to
overdispersed test statistics, and that's why
the type I error and FDP of both NC-ASY and RUV-4 are much larger than the
nominal level. 
By contrast, the NC method correctly controls type I error and FDP
by considering the variance inflation, 
though as expected it loses some power compared with the oracle.
For the sparsity scenario, the ``LEAPP(RR)'' method performs as the
asymptotic theory predicted when $n = 500$,
while when $n = 100$ the p-values seem a bit too small.
This is not surprising because the
asymptotic oracle variance in \Cref{thm:asymptotics-rr} can be
optimistic when the sample size is not sufficiently large, as we discussed in \Cref{rmk:calibration}.
On the other hand, the methods which use empirical calibration
for the variance of test statistics, namely the original LEAPP and ``LEAPP(RR-MAD)'',  control
both FDP and type I error for data of small sample size in our simulations.
The price for the finite sample calibration is that it tends to be slightly
conservative, resulting in a loss of power to some extent.

In conclusion, the simulation results
are consistent with our theoretical guarantees when $p$ is as large as $5000$
and $n$ is as large as $500$. When $n$ is small, the variance of the test statistics will
be larger than the asymptotic variance for the sparsity scenario and we can use empirical calibrations (such as MAD)
to adjust for the difference.

\subsection{Real data examples}
\label{sec:gender-study}

In this section, we return to the three motivating real data examples in
\Cref{sec:introduction}. The main goal here is to demonstrate a practical
procedure for confounder adjustment and show that our asymptotic results
are reasonably accurate in real data. In an open-source R package \texttt{cate}
(available on CRAN), we also provide the necessary tools to carry
out the procedure.

\subsubsection{The datasets}
\label{sec:datasets}

First we briefly describe the three datasets. The first dataset
\citep{singh2011} is tries to identify candidate genes associated with
the extent of emphysema and can be downloaded from the GEO database
(Series GSE22148). We preprocessed the data using the standard
Robust Multi-array Average (RMA) approach \citep{irizarry2003}. The primary variable of interest is
the severity (moderate or severe) of the Chronic Obstructive Pulmonary Disease (COPD). The dataset also
include age, gender, batch and date of the $143$ sampled patients which are served as nuisance covariates.

The second and third datasets are taken from \citet{gagnon2013} where they used them to
compare RUV methods with other methods
such as SVA and LEAPP. The original scientific studies are
\citet{vawter2004} and \citet{blalock2004}, respectively.
The primary variable of interest is gender in both datasets, though
the original objective in \citet{blalock2004} is to identify genes
associated with Alzheimer's disease. \citet{gagnon2013} switch the
primary variable to gender in order to have a gold standard: the
differentially expressed genes should mostly come from or relate to
the X or Y chromosome. We follow their suggestion and use this
standard to study the performance of our RR estimator. In addition, as the first COPD dataset
also contains gender information of the samples, we apply this suggestion and
use gender as the primary variable for the COPD data as a supplementary dataset.

Finally, we want to mention that the second dataset has repeated samples from the same
individuals while the individual information is lost. We
suspect that the individual information are then strong latent factors
which caused the atypical
concentration of the histograms in \Cref{fig:gender-naive} and \Cref{fig:gender-batch}.
This suggests necessity of a latent factor model for this dataset.

\subsubsection{Confounder adjustment}
\label{sec:conf-adjustm}

Recall that without the confounder adjustment,
the distribution of the regression $t$-statistics in these datasets
can be skewed, noncentered, underdispersed, or overdispersed as shown
in \Cref{fig:confounding-problem}. The adjustment method used here is the
maximum likelihood factor analysis described in
\Cref{sec:inference-gamma-sigma} followed by the robust regression
(RR) method with Tukey's bisquare loss described in
\Cref{sec:unkn-zero-indic}. Since the true number of confounders is unknown, we increase
$r$ from $1$ to $n/2$ and study the empirical performance. We report the results without
empirical calibration for illustrative purposes, though in practice we suggest using
calibration for better control of type I errors and FDP.

\setlength{\tabcolsep}{3pt}
\begin{table}[hp]
\centering
\begin{subtable}[t]{\textwidth}
\centering
\footnotesize
\begin{tabular}{|$r|^r^r^r^r^r^r|^r^r|}
  \hline
r & mean & median & sd & mad & skewness & medc. & \#sig. & p-value \\
  \hline
0 & -0.16 & 0.024 & 2.65 & 2.57 & -0.104 & -0.091 & 164 &  NA\\
  1 & -0.45 & -0.39 & 2.85 & 2.52 & -0.25 & 0.00074 & 1162 & 0.0057 \\
  2 & 0.012 & -0.039 & 1.35 & 1.33 & 0.139 & 0.042 & 542 & $<$1e-10 \\
  3 & 0.014 & -0.05 & 1.43 & 1.41 & 0.169 & 0.048 & 552 & $<$1e-10 \\
  5 & -0.029 & -0.11 & 1.52 & 1.48 & 0.236 & 0.057 & 647 & $<$1e-10 \\
  7 & -0.1 & -0.14 & 1.42 & 1.35 & 0.109 & 0.027 & 837 & $<$1e-10 \\
  10 & -0.06 & -0.085 & 1.13 & 1.12 & 0.103 & 0.022 & 506 & $<$1e-10 \\
  20 & -0.083 & -0.095 &  1.2 & 1.19 & 0.0604 & 0.0095 & 479 & $<$1e-10 \\
  \rowstyle{\bfseries}
  33 & -0.099 & -0.11 & 1.33 &  1.3 & 0.0727 & 0.0056 & 579 & $<$1e-10 \\
  40 & -0.1 & -0.12 & 1.43 &  1.4 & 0.0775 & 0.0072 & 585 & $<$1e-10 \\
  50 & -0.16 & -0.17 & 1.58 & 1.53 & 0.0528 & 0.0032 & 678 & $<$1e-10 \\
   \hline
\end{tabular}
\caption{\small Dataset 1 ($n = 143$, $p = 54675$). Primary variable:
  severity of COPD.}
\label{tab:COPD}
\end{subtable}

\begin{subtable}[t]{\textwidth}
\centering
\footnotesize

\begin{tabular}{|$r|^r^r^r^r^r^r|^r^r^r^r|}
  \hline
r & mean & median & sd & mad & skewness & medc. & \#sig. & X/Y & top 100 & p-value \\
  \hline
0 & 0.11 & 0.043 & 0.36 & 0.237 & 2.99 & 0.2 & 1036 & 58 & 11 &  NA\\
  1 & -0.44 & -0.47 & 1.06 & 1.04 & 0.688 & 0.035 & 108 & 20 & 20 & 0.74 \\
  2 & -0.14 & -0.15 & 1.15 & 1.13 & 0.601 & 0.015 & 113 & 21 & 21 & 0.31 \\
  3 & 0.013 & 0.012 & 1.13 & 1.08 & 0.795 & -0.01 & 168 & 34 & 28 & 0.03 \\
  5 & 0.044 & 0.019 & 1.18 & 1.08 & 0.878 & 0.017 & 238 & 32 & 27 & 0.0083 \\
  7 & 0.03 & 0.012 & 1.26 & 1.15 & 0.784 & 0.0062 & 269 & 35 & 25 & 0.006 \\
  10 & 0.023 & 0.00066 & 1.36 & 1.24 & 0.661 & 0.011 & 270 & 38 & 27 & 0.019 \\
  15 & 0.049 & 0.022 & 1.46 & 1.31 & 0.584 & 0.012 & 296 & 36 & 29 & 0.00082 \\
  20 & 0.029 & -0.0009 & 1.53 & 1.36 & 0.502 & 0.019 & 314 & 36 & 28 & 7.2e-07
  \\
  \rowstyle{\bfseries}
  25 & 0.048 & 0.012 & 1.68 & 1.48 & 0.452 & 0.026 & 354 & 37 & 27 & 1.1e-06 \\
  30 & 0.026 & 0.012 & 1.82 & 1.61 & 0.436 & 0.0068 & 337 & 40 & 27 & 8.7e-08 \\
  40 & 0.061 & 0.046 & 2.07 & 1.79 & 0.642 & 0.0028 & 363 & 41 & 27 & 7.7e-10 \\
   \hline
\end{tabular}
%
\caption{\small Dataset 2 ($n = 84$, $p = 12600$). Primary variable: gender.}
\label{tab:gender}
\end{subtable}

\begin{subtable}[t]{\textwidth}
\centering
\footnotesize
\begin{tabular}{|$r|^r^r^r^r^r^r|^r^r^r^r|}
  \hline
r & mean & median & sd & mad & skewness & medc. & \#sig. & X/Y & top 100 & p-value \\
  \hline
0 & -1.8 & -1.8 & 0.599 & 0.513 & -3.46 & 0.082 & 418 & 39 & 20 &  NA\\
  1 & -0.55 & -0.56 & 1.09 & 1.01 & -1.53 & 0.01 & 261 & 29 & 23 & 0.00024 \\
  2 & -0.2 & -0.22 &  1.2 & 1.11 & -0.99 & 0.014 & 320 & 38 & 22 & 0.00014 \\
  3 & -0.096 & -0.12 & 1.27 & 1.18 & -0.844 & 0.017 & 311 & 42 & 25 & 0.00014 \\
  5 & -0.33 & -0.32 & 1.31 & 1.22 & -1.29 & -0.011 & 305 & 35 & 23 & 2.1e-07 \\
  7 & -0.37 & -0.36 & 1.46 & 1.36 & -0.855 & -0.0099 & 300 & 38 & 23 & 4.0e-07 \\
  \rowstyle{\bfseries}
  11 & -0.13 & -0.12 & 1.51 & 1.36 & -0.601 & -0.0051 & 432 & 48 & 31 & 1.8e-09 \\
  15 & -0.12 & -0.13 & 1.83 & 1.62 & -0.341 & 0.013 & 492 & 54 & 25 & 2.3e-08 \\
  20 & -0.13 & -0.14 & 2.61 & 2.23 & -0.327 & 0.0045 & 613 & 50 & 26 & 4.0e-06 \\
   \hline
\end{tabular}
\caption{\small Dataset 3 ($n = 31$, $p = 22283$). Primary variable: gender.}
\label{tab:alzheimers}
\end{subtable}
\caption{\small Summary of the adjusted z-statistics. The first group is
  summary statistics of the z-statistics before the empirical
  calibration. The second group is some performance metrics after the
  empirical calibration, including total number of significant genes
  of p-value less than $0.01$ in
  \Cref{rmk:calibration} (\#sig.), number of the genes on X/Y
  chromosome that have p-value less than $0.01$ (X/Y), the number
  among the $100$ most significant genes that are on the X/Y
chromosome (top 100) and the p-value of the confounding test in \Cref{sec:test-confound}.
The bold row corresponds to the $r$ selected by BCV (\Cref{fig:output}).}
\label{tab:output}
\end{table}

\begin{figure}[hp]
  \centering
  \begin{subfigure}[b]{0.42\textwidth}
  \includegraphics[width = \textwidth]{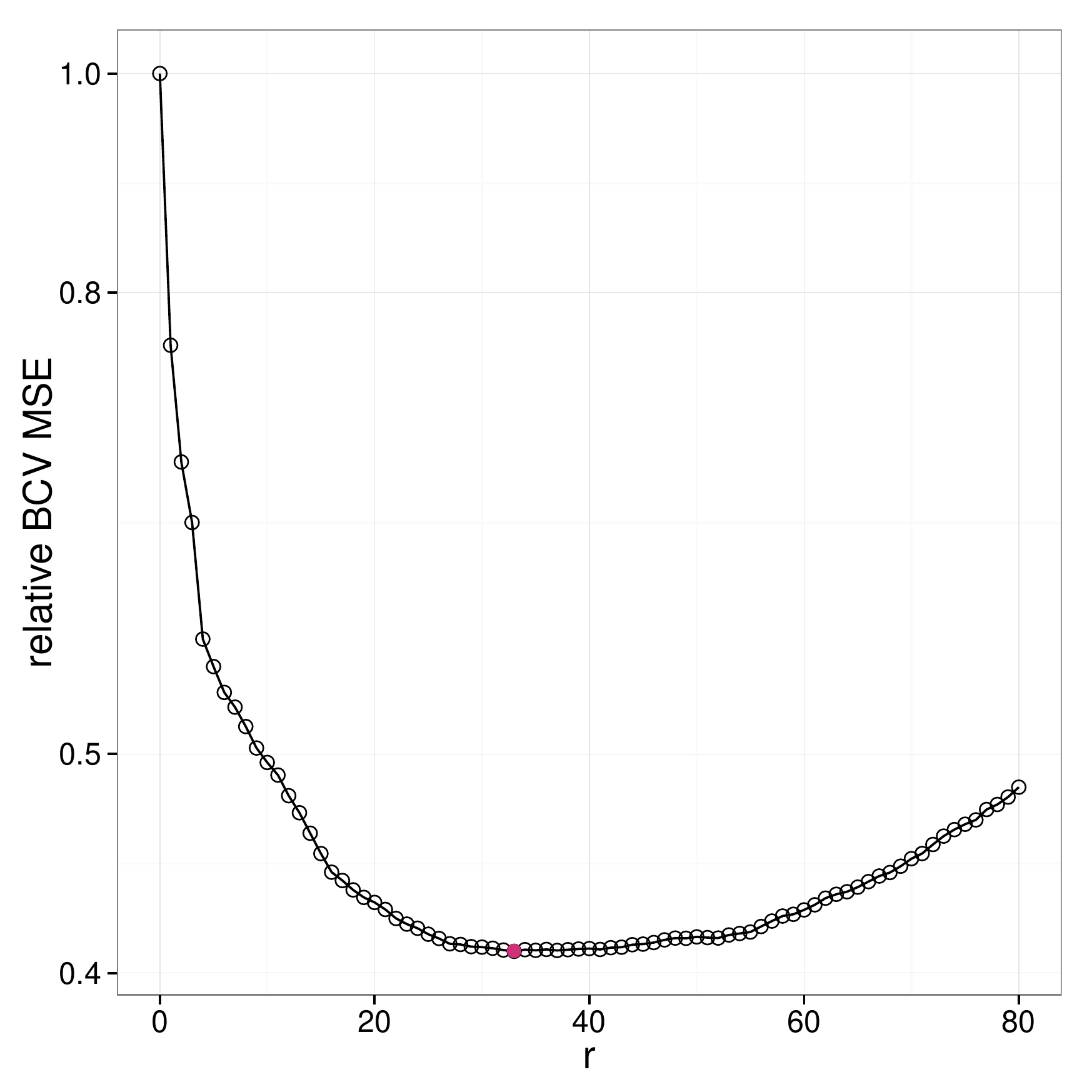}
  \caption{\small Dataset 1: BCV selects $r=33$.}
  \label{fig:COPD-bcv}
  \end{subfigure}
  ~
  \begin{subfigure}[b]{0.42\textwidth}
  \includegraphics[width = \textwidth]{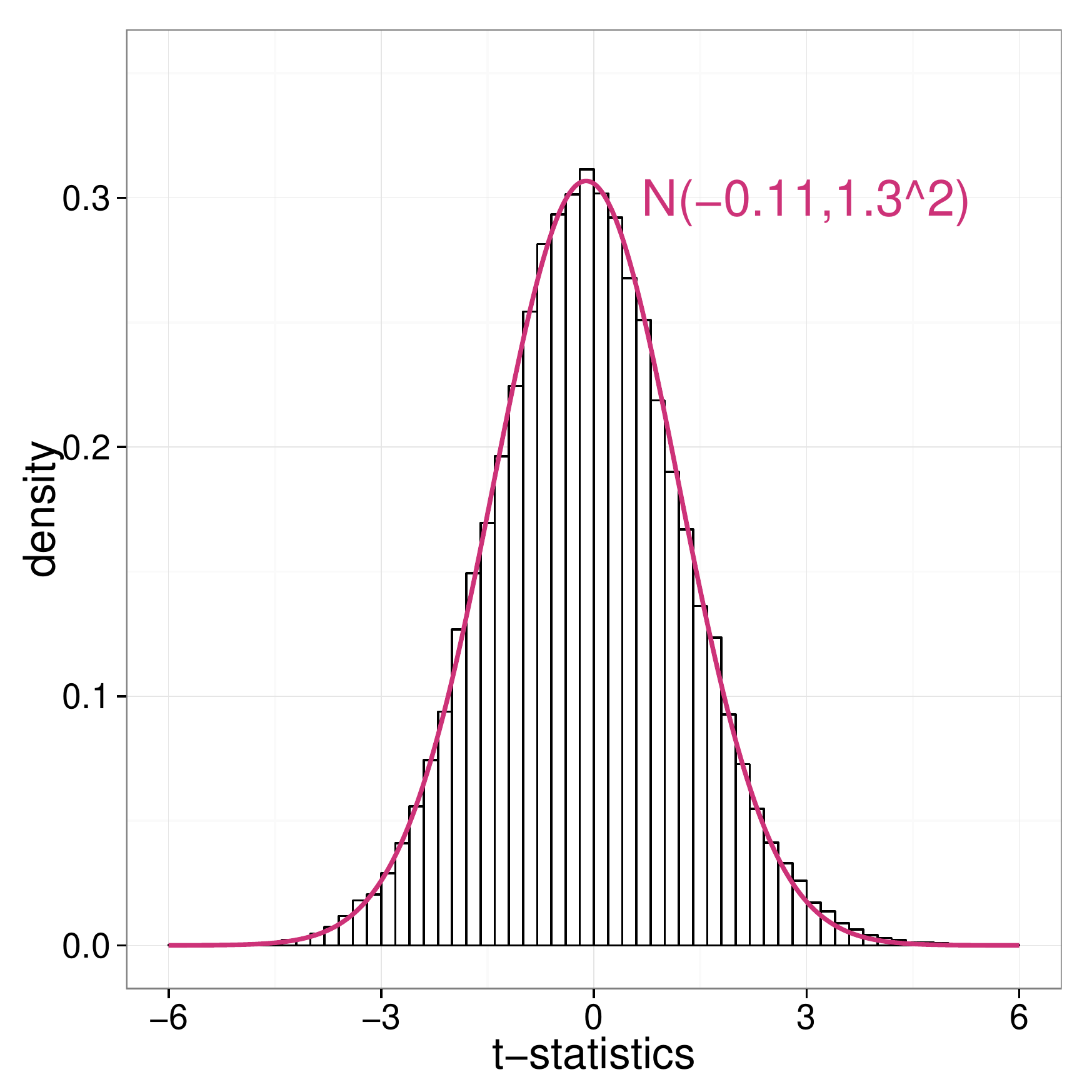}
  \caption{\small Dataset 1: histogram. }
  \label{fig:COPD-t}
  \end{subfigure} \\

  \begin{subfigure}[b]{0.42\textwidth}
  \includegraphics[width = \textwidth]{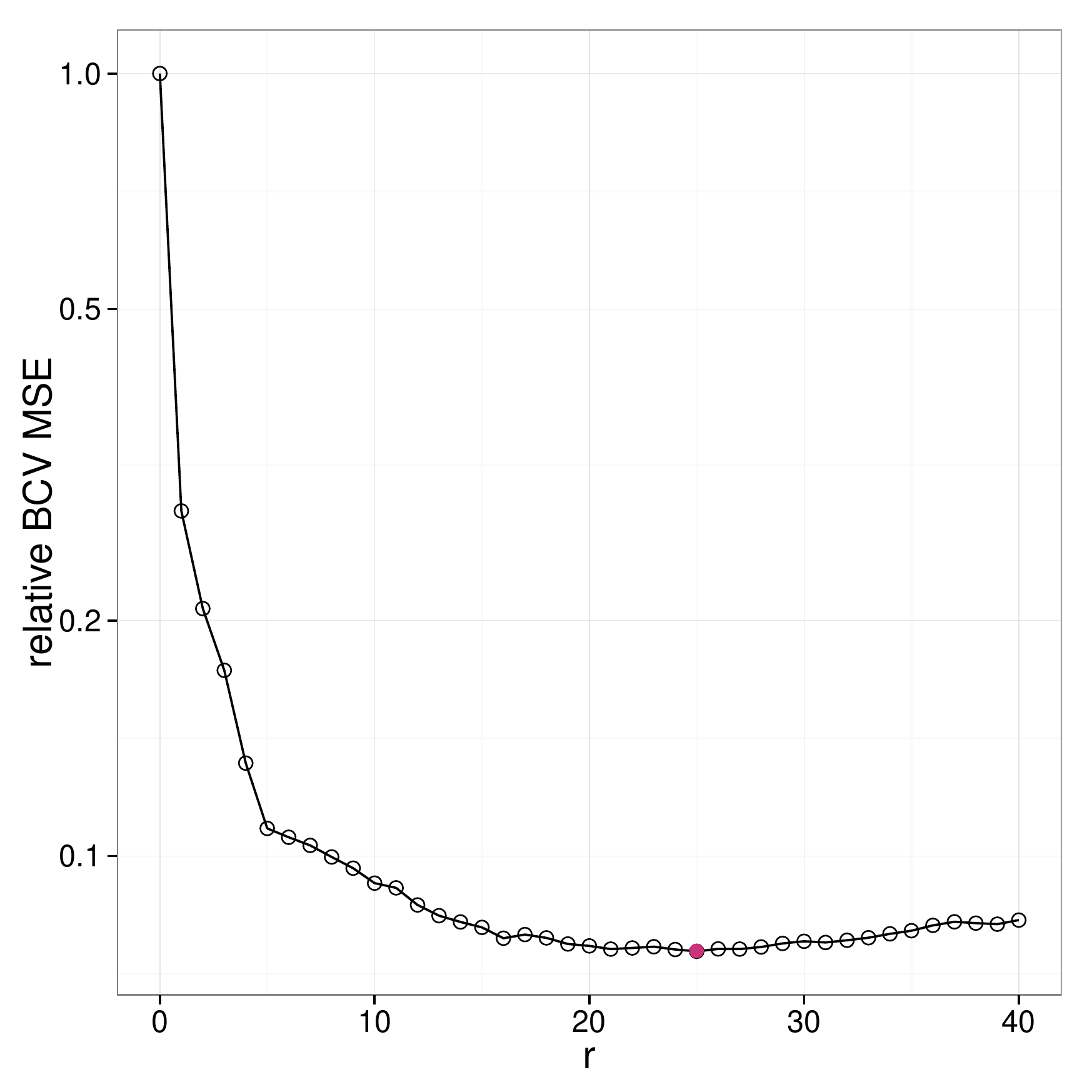}
  \caption{\small Dataset 2: BCV selects $r=25$.}
  \label{fig:gender-bcv}
  \end{subfigure}
  ~
  \begin{subfigure}[b]{0.42\textwidth}
  \includegraphics[width = \textwidth]{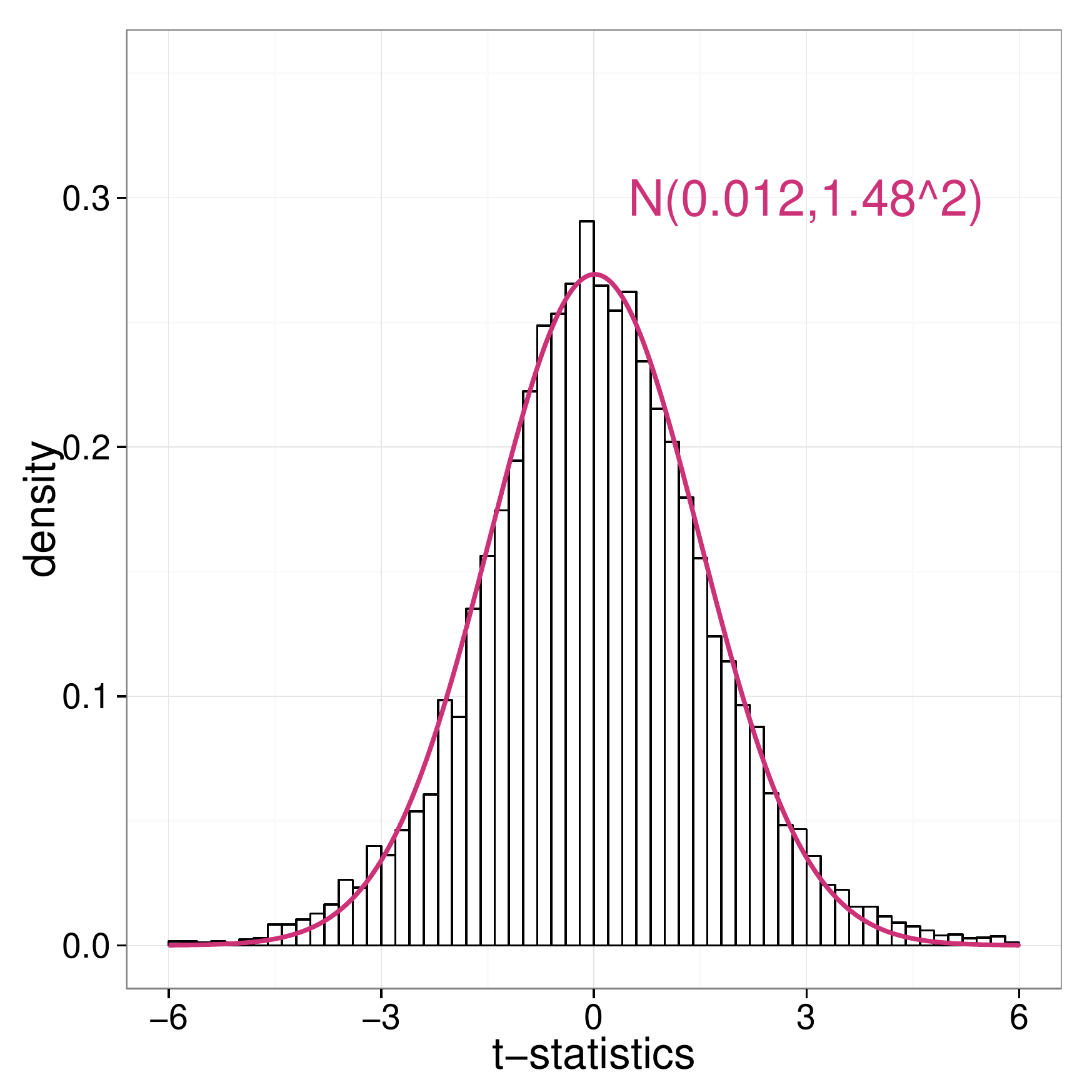}
  \caption{\small Dataset 2: histogram. }
  \label{fig:gender-t}
  \end{subfigure} \\

  \begin{subfigure}[b]{0.42\textwidth}
  \includegraphics[width = \textwidth]{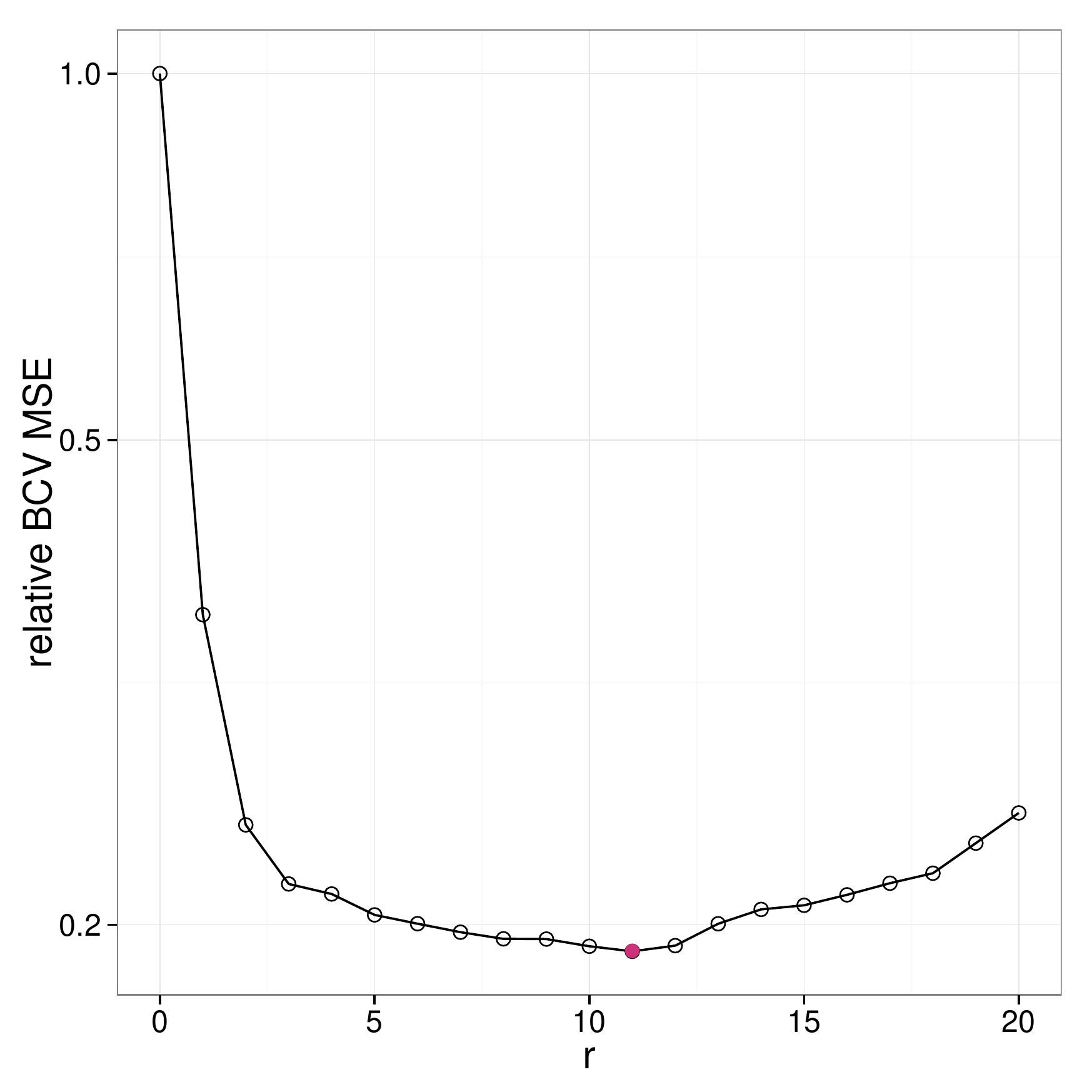}
  \caption{\small Dataset 3: BCV selects $r=11$.}
  \label{fig:alzheimers-bcv}
  \end{subfigure}
  ~
  \begin{subfigure}[b]{0.42\textwidth}
  \includegraphics[width = \textwidth]{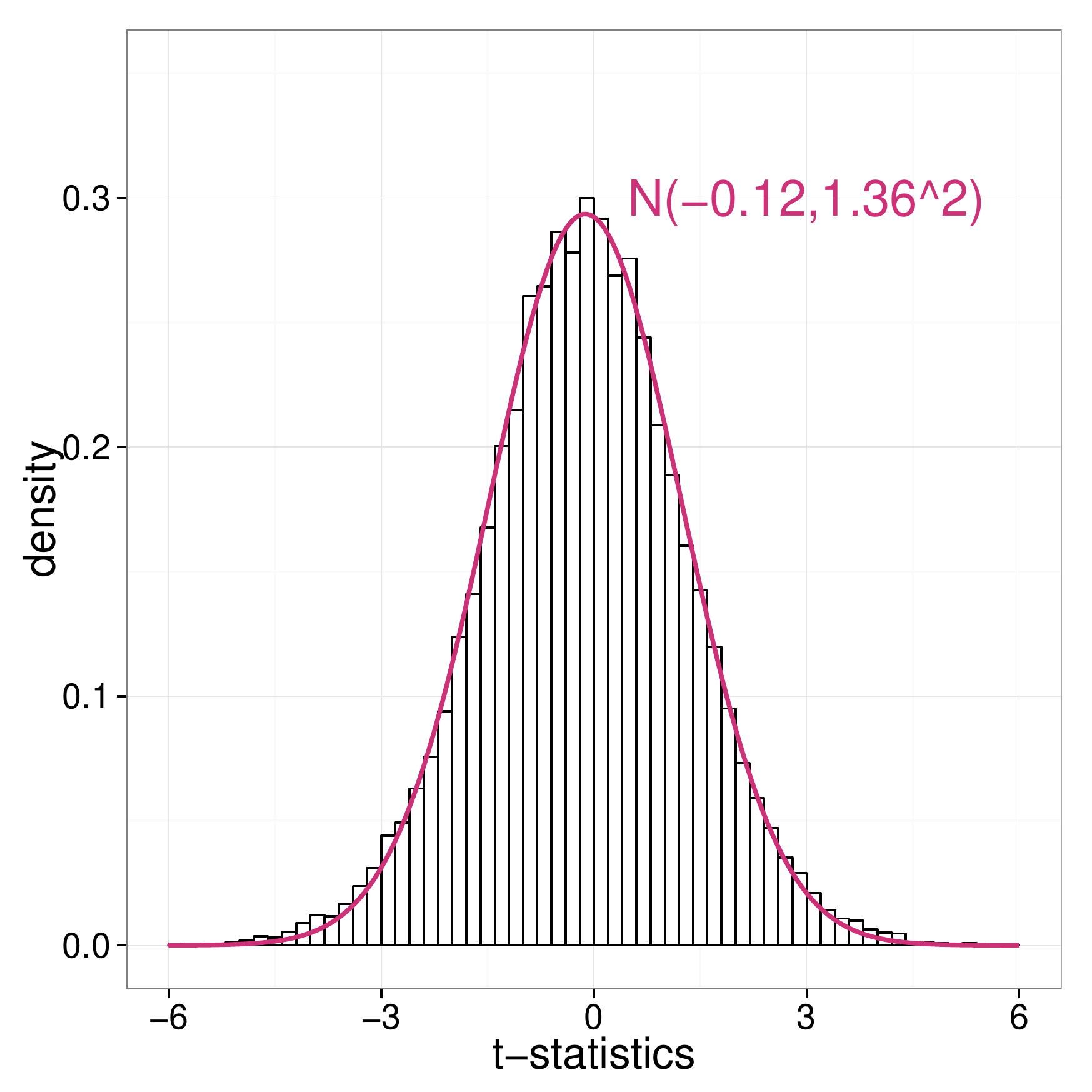}
  \caption{\small Dataset 3: histogram.}
  \label{fig:alzheimers-t}
  \end{subfigure}

  \caption{\small Histograms of z-statistics after confounder adjustment (without calibration)
    using the number of confounders $r$ selected by
    bi-cross-validation.}
  \label{fig:output}

\end{figure}

In \Cref{tab:output} and \Cref{fig:output}, we present the results after
confounder adjustment for the three datasets. 
We report
two groups of summary statistics in \Cref{tab:output}: the first group is several
summary statistics of all the z-statistics computed using
\cref{eq:test-statistic}, 
including the mean, median, standard
deviation, median absolute deviation (scaled for consistency
of normal distribution), skewness, and
the medcouple.
The medcouple \citep{brys2004}) is a robust measure of skewness.
After subtracting the median observation some positive and some
negative values remain.  For any pair of values $x_1\ge 0$ and $x_2\le0$
with $x_1+|x_2|>0$ one can compute $(x_1-|x_2|)/(x_1 +|x_2|)$. The medcouple is the median
of all those ratios.
The second group of statistics has performance metrics to
evaluate the effectiveness of the confounder adjustment. See
the caption of \Cref{tab:output} for more detail.

In all three datasets, the z-statistics become more centered at $0$ and
less skewed as we include a few confounders in the model.
Though the standard deviation (SD) suggests overdispersed variance,
the overdispersion will go away if
we add MAD calibration as SD and MAD have similar values. The
similarity between SD and MAD values also indicates that
the majority of statistics after confounder adjustment are
approximately normally distributed.
Note that 
the medcouple values shrink towards zero after adjustment,
suggesting that skewness then only arises from small fraction of the genes,
which is in accordance with our assumptions that the primary effects should be sparse.

In practice, some latent factors may be too weak to meet
\Cref{assumption:large-factor} (i.e.\ $d_j \ll \sqrt{p}$) , making it
difficult to choose an appropriate $r$.
A practical way to pick the number of confounders $r$
with presence of heteroscedastic noise we investigate here is the
bi-cross-validation (BCV) method of \citet{owen2015}, which uses randomly held-out submatrices to estimate
the mean squared error of reconstructing factor loading matrix. It is
shown in \citet{owen2015} that BCV outperforms many existing methods
in recovering the latent signal matrix and the number of factors $r$, especially in high-dimensional datasets ($n, p\to \infty$).
In \Cref{fig:output},
we demonstrate the performance of BCV on these three datasets. The $r$
selected by BCV is respectively $33$, $25$ and $11$
(\Cref{fig:COPD-bcv,fig:gender-bcv,fig:alzheimers-bcv}), and they all
result in the presumed shape of z-statistics distribution
(\Cref{fig:COPD-t,fig:gender-t,fig:alzheimers-t}). For the second and
the third datasets where we have a gold standard, the $r$ selected
by BCV has near optimal performance in selecting genes on the X/Y chromosome
(columns 3 and 4 in \Cref{tab:gender,tab:alzheimers}).
Another method we applied is proposed by
\citet{onatski2010determining} based on the empirical distribution of
eigenvalues. This method estimates $r$ as $2$, $9$ and $3$ respectively for the three datasets.
Table 3 of \citet{gagnon2013} has the ``top 100'' values for RUV-4 on the second and third dataset.
They reported 26 for LEAPP, 28 for RUV-4, and 27 for SVA in the second dataset,
and 27 for LEAPP, 31 for RUV-4, and 26 for SVA in the third dataset.
Notice that the precision of the top $100$ significant genes is relatively stable when $r$
is above certain number. 
Intuitively, the factor analysis is applied to
the residuals of $\bm{Y}$ on $\bm{X}$ and the overestimated factors
also have very small eigenvalues, thus they usually do not change
$\hat{\bm{\beta}}$ a lot. See also
\citet{gagnon2013} for more discussion on the robustness of the
negative control estimator to overestimating $r$.

Lastly we want to point out that both the small sample size of the datasets
and presence of weak factors can result in overdispersed variance of the test statistics.
The BCV plots indicate presence of many weak factors in the first two datasets.
In the third
dataset, the sample size $n$ is only $31$, so the adjustment result is
not ideal. Nevertheless, the empirical
performance (e.g.\ number of X/Y genes in top $100$) suggests it
is still beneficial to adjust for the confounders.




\appendix

\section{Proofs}
\label{sec:proofs}

\subsection{More technical results of factor analysis}
\label{app:lem-ml}


Here we prove uniform convergence of the estimated factors and noise
variances based on the results of \citet{bai2012},
which are needed to prove
\Cref{thm:asymptotics-nc,thm:consistency-rr,thm:asymptotics-rr,cor:type-I}. In
the proof of the following lemma, we intensively
use some of the technical results in \citet{bai2012} and also modify
internal parts of their proof. Before reading the proof of
\Cref{lem:gamma-ml}, we recommend that the reader first
read the original proof in \citet{bai2012,bai2012supplement}. To help the readers to follow,
the variables $N$, $T$, $\Lambda$ (or $\Lambda^\star$) and $f$ (or $f^\star$)
in \citet{bai2012}
correspond to $p$, $n$, $\bm \Gamma^{(0)}$ and
$\tilde{\bm Z}^{(0)}$ in our notation. 

\begin{lemma} \label{lem:gamma-ml}
  Under \Cref{assumption:error,assumption:bounded,assumption:large-factor},
for any fixed index set
 $S$ with finite cardinality,
\begin{equation}\label{eq:gamma-asym}
\sqrt{n} (\hat{\bm{\Gamma}}_S - \bm{\Gamma}_S^{(0)}) \overset{d}{\to}
\mathrm{MN}(0, \bm{\Sigma}_S, \bm{I}_r)
\end{equation}
where $\bm{\Sigma}_S$ is the noise covariance matrix of the variables in $S$.
Further, if there exists $k > 0$ such that $p/n^{k} \to 0$ when $p \to \infty$, then
\begin{equation}\label{eq:max-sigma-err}
\max_{1 \le j \le p}|\hat{\sigma_j^2} - \sigma_j^2| =
O_p(\sqrt{\log p / n}),~
\max_{1 \le j \le p}|\hat{\bm \Gamma}_j - \bm\Gamma_j^{(0)}| = \bm O_p(\sqrt{\log p / n}), ~\mathrm{and}
\end{equation}
\begin{equation}\label{eq:max-gamma-err}
\max_{j = 1, 2,\cdots,p} \Big|\hat{\bm{\Gamma}}_j - \bm{\Gamma}_j^{(0)}
-\frac{1}{n-1}\sum_{i=2}^n \tilde{\bm{Z}}_i^{(0)} \tilde{E}_{ij}\Big|
 = \bm{o}_p(n^{-\frac{1}{2}}).
\end{equation}
\end{lemma}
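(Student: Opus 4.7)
The plan is to build on the pointwise asymptotic results of \citet{bai2012} and upgrade them to the joint and uniform forms required here. Three distinct tasks arise: (i) convert pointwise convergence of $\hat{\bm{\Gamma}}_j$ into joint matrix-normal convergence over a fixed finite index set $S$, yielding \eqref{eq:gamma-asym}; (ii) turn pointwise stochastic bounds into uniform-in-$j$ bounds at rate $\sqrt{\log p/n}$, giving \eqref{eq:max-sigma-err}; and (iii) sharpen the first-order expansion of $\hat{\bm{\Gamma}}_j - \bm{\Gamma}_j^{(0)}$ so that the remainder past the leading linear term is $o_p(n^{-1/2})$ \textit{uniformly} in $j$, giving \eqref{eq:max-gamma-err}. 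Throughout, the IC3 normalization imposed in \Cref{sec:inference-gamma-sigma} forces $(n-1)^{-1} \tilde{\bm{Z}}_{-1}^{(0)T} \tilde{\bm{Z}}_{-1}^{(0)} = \bm{I}_r$, which makes the leading term clean.

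For (i), \citet{bai2012} give the pointwise expansion
\[
\sqrt{n}\big(\hat{\bm{\Gamma}}_j - \bm{\Gamma}_j^{(0)}\big) = \tfrac{1}{\sqrt{n-1}} \sum_{i=2}^n \tilde{\bm{Z}}_i^{(0)} \tilde E_{ij} + o_p(1),
\]
so because $\bm{\Sigma}$ is diagonal the columns $\{\tilde E_{\cdot,j}\}$ are independent across $j$. A Cramér--Wold argument then yields the claimed matrix-normal limit with row covariance $\bm{\Sigma}_S$ and column covariance $\bm{I}_r$. The analogous statement for $\hat\sigma_j^2$ follows the same way.

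For (ii), the leading term $(n-1)^{-1}\sum_i \tilde{\bm{Z}}_i^{(0)} \tilde E_{ij}$, conditional on $\tilde{\bm{Z}}_{-1}^{(0)}$, is Gaussian in each coordinate with variance $\sigma_j^2/(n-1)$ (again by IC3). A Gaussian maximal inequality combined with a union bound over the $p$ coordinates gives the maximum at rate $\sqrt{\log p/n}$. The growth restriction $p/n^k \to 0$ implies $\log p = O(\log n)$, so this rate is as sharp as needed. The remainder terms in Bai--Li's expansion are bilinear in estimation errors (factor-score error, noise covariance error, and the rotation alignment); bounding them uniformly requires applying the same concentration argument to Gaussian quadratic forms and to sample covariances of independent Gaussians, then union-bounding over $j$. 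The bound for $\hat\sigma_j^2$ is obtained in parallel from the expansion $\hat\sigma_j^2 - \sigma_j^2 = (n-1)^{-1} \sum_i(\tilde E_{ij}^2 - \sigma_j^2) + \text{h.o.t.}$

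Step (iii) is the technically delicate part and is where I expect the main obstacle. The goal is to peel off the linear leading term and show the residual is $o_p(n^{-1/2})$ uniformly in $j$, i.e.\ strictly sharper than the $\sqrt{\log p/n}$ uniform rate of (ii). The plan is to revisit Bai--Li's derivation (in particular the supplementary material \citet{bai2012supplement}) and decompose the remainder into pieces that are each a product of two estimation errors -- for example, the error in the estimated rotation $\bm{R}$ acting against the noise, and cross products of $(\hat{\bm{\Sigma}}^{-1}-\bm{\Sigma}^{-1})$ with factor-score errors. By (ii) each such factor is $O_p(\sqrt{\log p/n})$, so each product is $O_p(\log p / n) = o(n^{-1/2})$ under $p/n^k \to 0$. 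The difficulty is that these bounds must hold \emph{simultaneously} across $j \le p$, forcing Gaussian concentration arguments for quadratic forms (not merely linear functionals) and careful tracking of $j$-dependence through the IC3 rotation. Packaging these bounds cleanly -- without destroying the $o_p(n^{-1/2})$ rate in the process -- is the crux, and the polynomial growth restriction $p/n^k \to 0$ is used most tightly at this step.
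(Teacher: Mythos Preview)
Your overall route---starting from the Bai--Li expansion under IC3 and upgrading pointwise bounds to uniform ones---is exactly what the paper does, and part (i) and the leading-term analysis in (ii) are fine. But the mechanism you propose for (iii) has a real gap.

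Your claim that ``the remainder decomposes into pieces that are each a product of two estimation errors, each $O_p(\sqrt{\log p/n})$'' does not survive contact with the actual terms in Bai--Li's decomposition. Writing $\hat{\bm\Gamma}_j - \bm\Gamma_j^{(0)} = \bm b_{1j}+\cdots+\bm b_{10,j}$ (equation (A.14) of \citet{bai2012}) and $\hat\sigma_j^2 - \sigma_j^2 = a_{1j}+\cdots+a_{10,j}$ (equation (B.9) of \citet{bai2012supplement}), several of these are \emph{not} products of two small quantities. The term $a_{9j}$, for instance, hinges on the noise--noise cross-correlation
\[
\max_{j}\ \frac{1}{(n-1)p}\Big|\sum_{i}\Gamma_{is}\sum_t\big[e_{ti}e_{tj}-\mathrm{E}(e_{ti}e_{tj})\big]\Big|
\quad\text{and}\quad
\max_j\ \frac{1}{(n-1)^2p}\sum_i\Big(\sum_t\big[e_{ti}e_{tj}-\mathrm{E}(e_{ti}e_{tj})\big]\Big)^2,
\]
which are single averaged quadratic forms in the noise, not products of two previously-bounded errors. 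The paper controls these by a union bound combined with Markov's inequality at the $4$th and $4k$th moments respectively---and it is precisely in the $4k$th-moment bound that the hypothesis $p/n^k\to 0$ is spent (one needs $p^2/n^{2k}\to 0$ after the union bound). Your $O_p(\log p/n)$ heuristic does not produce these bounds, and your passing reference to ``Gaussian concentration for quadratic forms'' is not the argument that actually works here.

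There is a second issue you do not address: the decomposition is circular. The terms $\bm b_{10,j}$, $a_{2j}$, and $a_{8j}$ contain $\hat{\bm\Gamma}_j$ or $\hat\sigma_j^2$ themselves, so you cannot bound them without already knowing $\max_j|\hat{\bm\Gamma}_j|=O_p(1)$ and $\max_j|\hat\sigma_j^2-\sigma_j^2|$. The paper resolves this by first showing $\max_j|\bm b_{10,j}| = O_p(p^{-1}\max_j|\hat{\bm\Gamma}_j|)$ and $\max_j|a_{8j}| = o_p(\max_j|\hat\sigma_j^2-\sigma_j^2|)$, then closing the loop via an inequality of the form $\max_j|\hat{\bm\Gamma}_j-\bm\Gamma_j^{(0)}|\le O_p(\sqrt{\log p/n})+o_p(\max_j|\hat{\bm\Gamma}_j|)$. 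This bootstrapping step is short but essential, and your plan does not account for it.
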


  \begin{remark} \label{rmk:rotation}
If we directly apply the results in \citet{bai2012} to prove \Cref{lem:ml-baili}, we
need uniform boundedness of $\bm{\Gamma}^{(0)}$ which is not always true. However, it is easy to show $\bm{R} \overset{a.s.}{\to}
\bm{I}_r$ by applying \citet[Lemma A.1]{bai2012}. Also, as $\bm{R}\bm{R}^T$ is the sample
covariance matrix, the maximum entry of
$|\bm{R} - \bm{I}|$ is ${O}_p(n^{-1/2})$,
thus the maximum entry of
$|\bm{\Gamma}^{(0)} - \bm{\Gamma}| = |\bm{\Gamma}(\bm R - \bm I)|$ is also ${O}_p(n^{-1/2})$. As a
consequence, although $\bm{\Gamma}^{(0)}$ is not always uniformly bounded, all the results in
\citet{bai2012} still hold as we stated in \Cref{lem:ml-baili} and \Cref{lem:gamma-ml}.
\end{remark}

\begin{proof}

Our factor model corresponds to the IC3 identification condition in
\citet{bai2012}. Equation \eqref{eq:gamma-asym}
is an immediate consequence of \citet[Theorem 5.2]{bai2012}, except here we additionally consider the asymptotic
  covariance of $\sqrt{n} (\hat{\bm{\Gamma}}_j - \bm{\Gamma}_j^{(0)})$
  and $\sqrt{n} (\hat{\bm{\Gamma}}_k - \bm{\Gamma}_k^{(0)})$.
  The asymptotic distribution of $\sqrt{n} (\hat{\bm{\Gamma}}_S -
  \bm{\Gamma}^{(0)}_S)$ immediately follows from equation (F.1) in
  \citet{bai2012supplement}:
\begin{equation} \label{eq:gamma-hat}
\sqrt{n-1} (\hat{\bm{\Gamma}}_j - \bm{\Gamma}_j^{(0)}) =
\frac{1}{\sqrt{n-1}} \sum_{i=2}^n \tilde{\bm{Z}}_i^{(0)} \tilde{E}_{ij} + \bm{o}_p(1).
\end{equation}

Now we prove \eqref{eq:max-sigma-err}.
Let $\hat {\bm{\Gamma}}_j - \bm{\Gamma}_j^{(0)} = \bm{b}_{1j} + \bm{b}_{2j} +\cdots + \bm{b}_{10,j}$
where
$\bm{b}_{kj}$ represents the $k$th term in the right hand side of
equation (A.14) in \citet{bai2012}.
Also, let  $\hat \sigma_j^2 - \sigma_j^2 = a_{1j} + a_{2j} +\cdots + a_{10,j}$ where
$a_{kj}$ represents the $k$th term in the right hand side of equation (B.9) in \citet{bai2012supplement}.
To bound each $\bm b_j$ and $a_j$ term, we extensively use Lemma C.1 of \citet{bai2012supplement}.
First, we give a clearer approximation to replace $(a)$ and $(c)$ in Lemma C.1 of \citet{bai2012supplement}:
\begin{equation}\label{eq:baiC1a}
\|\hat{\bm{H}}
\hat{\bm{\Gamma}}^T\hat{\bm{\Sigma}}^{-1}
(\hat{\bm{\Gamma}} - \bm{\Gamma}^{(0)})\|_F = O_p(n^{-1}) +O_p(n^{-1/2}p^{-1/2})
\end{equation}
and
\begin{equation}\label{eq:baiC1e}
\frac{1}{n-1}\big\|\hat{\bm{H}}
\hat{\bm{\Gamma}}^T\hat{\bm{\Sigma}}^{-1}
\tilde{\bm{E}}_{-1}^T\tilde{\bm{Z}}^{(0)}\big\|_F = O_p(n^{-1/2}p^{-1/2}) + O_p(n^{-1})
\end{equation}
where $\hat{\bm{H}} = (\hat{\bm{\Gamma}}^T\hat{\bm{\Sigma}}^{-1}\hat{\bm{\Gamma}})^{-1}$
and $\|\cdot\|_F$ is the Frobenius norm.
To show \eqref{eq:baiC1e}, one just needs to apply
$\hat {\bm H}_p = p\hat{\bm H} = \bm O_p(1)$ \citep[Corollary A.1]{bai2012},
\Cref{rmk:rotation} and $(n-1)^{-1}(\tilde {\bm Z}^{(0)}_{-1})^T\tilde {\bm Z}^{(0)}_{-1} = \bm I_r$
to simplify Lemma C.1(e) of \citet{bai2012supplement}. To prove \eqref{eq:baiC1a}, notice that under our conditions
(or the IC3 condition of \citet{bai2012}), the left hand side of (A.13) in
\citet{bai2012} is actually $\bm 0$ as the terms $\hat M_{ff}$ and $M_{ff}^\star$ in their notation are exactly $\bm I_r$.
Also, $\hat{\bm H}\hat{\bm \Gamma}^{(0)T}\hat {\bm \Sigma}^{-1}
 {\bm \Gamma}= \bm I_r + \bm o_p(1)$ from \citet[Corollary A.1]{bai2012}.
 Thus, \eqref{eq:baiC1a} holds by applying Lemma C.1 of \citet{bai2012supplement}.
 As a consequence, by applying \citet[Lemma C.1]{bai2012supplement}, \eqref{eq:baiC1a} and \eqref{eq:baiC1e},
 we now have $\max_j|\bm b_{kj}| = \bm o_p(n^{-1/2})$ for $k \neq 8, 10$ and
$\max_j|a_{kj}| = o_p(n^{-1/2})$ for $k \neq 1, 2, 8, 9, 10$. Using independence of the noise,
it's also easy to see that $\max_j|\bm b_{8j}| = O_p(\sqrt{\log p / n})$ and
$\max_j|a_{kj}| = O_p(\sqrt{\log p / n})$ for $k = 1, 10$.

Next, we show the following facts under the condition that $p/n^k \to 0$ when $p \to \infty$ for some $k > 0$.
Let
$(e_{ti})_{(n-1)\times p} = \tilde{\bm E}\bm{\Sigma}^{-1/2}$ denote a
random matrix whose entries are then i.i.d.\
$\mathrm{N}(0,1)$ variables. Then for each
$s = 1, 2, \cdots, r$,
\begin{equation}\label{eq:noise-fact}
  \max_{j = 1,2, \cdots, p}\frac{1}{(n-1)p}
  \Big|\sum_{i = 1}^p\Gamma_{is}\sum_{t = 1}^{n-1}
  [e_{ti}e_{tj} - \mathrm{E}(e_{ti}e_{tj})]\Big| = o_p(n^{-1/2}),~\mathrm{and}
\end{equation}
\begin{equation}\label{eq:noise-fact2}
  \max_{j = 1,2, \cdots, p}\frac{1}{(n-1)^2p}
  \sum_{i = 1}^p\Big(\sum_{t = 1}^{n-1}
  [e_{ti}e_{tj} - \mathrm{E}(e_{ti}e_{tj})]\Big)^2 = o_p(n^{-1/2}).
\end{equation}
To prove \eqref{eq:noise-fact},
we only need to show
$\max_j\frac{1}{(n-1)p}
  \big|\sum_{i \neq j}\sum_{t = 1}^{n-1}
  \Gamma_{is}e_{ti}e_{tj}\big| = o_p(n^{-1/2})
$ as the remaining term is $o_p(n^{-1/2})$ because of the
independence. This approximation is proven by
the union bound and  boundedness of $\bm{\Gamma}$:
for $\forall \epsilon > 0$
\begin{equation*}
\begin{split}
 &      \lim_{n,p\to \infty}\mathrm{P}\Big(\sqrt n\max_{j = 1,2,\cdots, p}\frac{1}{(n-1)p}
  \big|\sum_{i \neq j}\sum_{t = 1}^{n-1}
  \Gamma_{is}e_{ti}e_{tj}\big| > \epsilon\Big) \\
   \leq & \lim_{n,p\to \infty}2p \cdot \mathrm{P}\Big(\frac{\sqrt n D}{(n-1)p}
  \sum_{i \neq 1}\sum_{t = 1}^{n-1}e_{ti}e_{t1} > \epsilon\Big)\\
  = & \lim_{n,p\to \infty}2p \cdot \mathrm{P}\Bigg(\frac{\sqrt n}{n-1}\sum_{t = 1}^{n-1}e_{t1}
  \big(\frac{1}{\sqrt {p-1}}\sum_{i \neq 1}e_{ti}\big) >
  \frac{\epsilon}{D}\frac{p}{\sqrt{p-1}}\Bigg)\\
  \leq& \lim_{n,p\to \infty}2p \cdot \mathrm{E}\Bigg[\Big(\frac{\sqrt n}{n-1}\sum_{t = 1}^{n-1}e_{t1}
  \big(\frac{1}{\sqrt {p-1}}\sum_{i \neq 1}e_{ti}\big)\Big)^4\Bigg]/
  \Big(\frac{\epsilon}{D}\frac{p}{\sqrt{p-1}}\Big)^4 = 0
\end{split}
\end{equation*}
To see why the last equality holds,
$(p - 1)^{-1/2}\sum_{i \neq 1}e_{ti}\sim \mathrm{N}(0, 1)$
is independent from $e_{t1}$, thus the fourth moment of $(n-1)^{-1/2}\sum_{t = 1}^{n-1}e_{t1}
  \big((p-1)^{-1/2}\sum_{i \neq 1}e_{ti}\big)$ is bounded
  which enables us to use the Markov inequality.
To prove \eqref{eq:noise-fact2}, we start with the same union bound as for \eqref{eq:noise-fact},
\[
\begin{split}
   & \lim_{n,p\to \infty}\mathrm{P}\Big(\max_{j = 1,2,\cdots, p}\frac{1}{(n-1)^2p}
   \sum_{i \neq j}\big(\sum_{t = 1}^{n-1}e_{ti}e_{tj}\big)^2  >
   \epsilon\Big) \\
   \le & \lim_{n,p\to \infty} p \cdot \mathrm{P}\Big(\frac{1}{(n-1)^2p}
   \sum_{i=2}^p\big(\sum_{t = 1}^{n-1}e_{ti}e_{t1}\big)^2  >
   \epsilon\Big) \\
   \le & \lim_{n,p\to \infty} 2p^2 \cdot \mathrm{P}\Big(\frac{1}{n-1}
   \sum_{t = 1}^{n-1}e_{t2}e_{t1}  >
   \sqrt\epsilon\Big)\\
   \le & \lim_{n,p\to \infty} 2p^2 \cdot
   \mathrm{E}\Big[\Big(\frac{1}{n-1}\sum_{t=1}^{n-1}e_{t2}e_{t1}\Big)^{4k}\Big]/\epsilon^{2k}\\
   \le & \lim_{n,p\to \infty} 2C/\epsilon^{2k}\cdot\big(p^2/n^{2k}\big) = 0\\
\end{split}
\]
where $C$ is some positive constant.
The second last inequality is due to Markov inequality and last inequality holds
as $e_{t_2}e_{t1}, t= 1, 2,\cdots n-1$ are independent and have finite moments of any order.
The last limit holds when we assume $p/n^k \to 0$.

Equation \eqref{eq:noise-fact} directly implies that
\begin{equation*}
\max_{j = 1, \cdots, p}\Big|\hat{\bm{H}}\big(
\sum_{i=1}^p\frac{1}{\sigma_i\sigma_j}{\bm{\Gamma}}_j
\frac{1}{n-1}\sum_{t = 2}^n[\tilde E_{ti}\tilde E_{tj} -
\mathrm{E}(\tilde E_{ti}\tilde E_{tj})]\big)\Big|
= \bm{o}_p(n^{-1/2})
\end{equation*}
as $\hat{\bm H} = \bm O_p(p^{-1})$. Using \eqref{eq:noise-fact2} and
$p^{-1}\sum_j\|\hat{\bm \Gamma}_j - \bm\Gamma^{(0)}_j\|_2^2 = O_p(n^{-1})$ from \Cref{lem:ml-baili}, we get by
using the Cauchy-Schwartz inequality:
\begin{equation*}
\max_{j = 1, \cdots, p}\Big|\hat{\bm{H}}\big(
\sum_{i=1}^p\frac{1}{\sigma_i\sigma_j}(\hat {\bm{\Gamma}}_j - \bm\Gamma_j^{(0)})
\frac{1}{n-1}\sum_{t = 2}^n[\tilde E_{ti}\tilde E_{tj} -
\mathrm{E}(\tilde E_{ti}\tilde E_{tj})]\big)\Big|
= \bm{o}_p(n^{-1})
\end{equation*}
Similarly, combining with \Cref{rmk:rotation}, we get
\begin{equation*}
\max_{j = 1, \cdots, p}\Big|\hat{\bm{H}}\big(
\sum_{i=1}^p\frac{1}{\sigma_i\sigma_j}({\bm{\Gamma}}_j^{(0)} - \bm\Gamma_j)
\frac{1}{n-1}\sum_{t = 2}^n[\tilde E_{ti}\tilde E_{tj} -
\mathrm{E}(\tilde E_{ti}\tilde E_{tj})]\big)\Big|
= \bm{o}_p(n^{-1})
\end{equation*}
By writing $\hat{\bm{\Gamma}}_j = \bm{\Gamma}_j + \bm{\Gamma}_j^{(0)} - \bm{\Gamma}_j
+ \hat{\bm{\Gamma}}_j - \bm{\Gamma}_j^{(0)}$ and using boundedness of
both $\hat \sigma_j$ and $\sigma_j$,
\begin{equation}\label{eqn:B4-a}
 \max_{j = 1, \cdots, p}\Big|\hat{\bm{H}}\big(
\sum_{i=1}^p\frac{1}{\hat\sigma_i^2}
\hat{\bm{\Gamma}}_j
\frac{1}{n-1}\sum_{t = 2}^n[\tilde E_{ti}\tilde E_{tj} -
\mathrm{E}(\tilde E_{ti}\tilde E_{tj})]\big)\Big|
= \bm{o}_p(n^{-1/2})
\end{equation}
which indicates that $\max_j|a_{9j}| = o_p(n^{-1/2})$.

To bound the remaining terms, we use the fact that $\max_{j = 1, \cdots, p}|\hat{\bm{\Gamma}}_j| = \bm{O}_p(1)$.
To see this, first notice that because of boundedness of $\hat \sigma_j$ and $\sigma_j$ and the fact that
 $\hat{\bm{H}} = \bm{O}_p(p^{-1})$, we have
 $\max_{j}|\bm{b}_{10,j}| = \bm O_p(p^{-1}\max_j{|\hat{\bm{\Gamma}}_j|})$. Combining the previous results, we have
$\max_{j}|\hat{\bm{\Gamma}}_j - \bm{\Gamma}_j^{(0)}|
 = \bm{O}_p(\sqrt{\log p / n}) +  \bm o_p(\max_j{|\hat{\bm{\Gamma}}_j|})$ which indicates
 that $\max_j|\hat{\bm{\Gamma}}_j| = \bm{O}_p(1)$. Thus,
 $\max_j|a_{8j}| = o_p(\max_j|\hat\sigma_j^2 - \sigma_j^2|)$ is negligible
 and $\max_{j}|\hat{\bm{\Gamma}}_j - \bm{\Gamma}_j^{(0)}|
 = \bm{O}_p(\sqrt{\log p / n}) +  \bm o_p(\max_j|\hat \sigma_j^2 -
 \sigma_j^2|)$. The latter conclusion also indicates that $\max_j
 |a_{2j}| = \bm{O}_p(\sqrt{\log p / n}) +  \bm o_p(\max_j|\hat \sigma_j^2 -
 \sigma_j^2|)$. As a
 consequence, the second claim in \eqref{eq:max-sigma-err} holds.

Finally, To prove \eqref{eq:max-gamma-err},
we actually have already shown that
$\max_j|\hat {\bm{\Gamma}}_j - \bm{\Gamma}_j^{(0)}-\bm{b}_{8j}| = \bm{o}_p(n^{-1/2})$.
Then,
\begin{equation*}
  \begin{split}
    &\max_{j = 1, 2, \cdots, p}\Big|\hat {\bm{\Gamma}}_j - \bm{\Gamma}_j^{(0)} -
  \frac{1}{n-1}\sum_{i=2}^n \tilde{\bm{Z}}_i^{(0)} \tilde{E}_{ij}\Big|\\
  \leq & \max_{j = 1, 2, \cdots, p}\Big|\hat {\bm{\Gamma}}_j -
  \bm{\Gamma}_j^{(0)} - \bm{b}_{8j}\Big| + \max_{j = 1, 2, \cdots, p}\Big|\bm{b}_{8j} -
  \frac{1}{n-1}\sum_{i=2}^n \tilde{\bm{Z}}_i^{(0)} \tilde{E}_{ij}\Big|\\
  \leq & \bm{o}_p(n^{-1/2}) + \|\hat{\bm{H}}
\hat{\bm{\Gamma}}^T\hat{\bm{\Sigma}}^{-1}
(\hat{\bm{\Gamma}} - \bm{\Gamma}^{(0)})\|_F
\max_{j = 1, 2, \cdots, p}
\Big|\frac{1}{n-1}\sum_{i = 2}^n\tilde{\bm{Z}}_{i}^{(0)}\tilde
E_{ij}\Big| \\
=&
\bm{o}_p(n^{-1/2})
\end{split}
\end{equation*}
Thus, \eqref{eq:max-gamma-err} holds.

\end{proof}

\subsection{Proof of \Cref{thm:asymptotics-nc}}
\label{app:thm-nc}

First, note that by the strong law of large numbers $n^{-1/2}\|\bm{X}\|_2 =
\sqrt{n^{-1}\sum_{i=1}^n X_i^2} \overset{a.s.}{\to} 1$, and $$\bm{R}
\bm{R}^T = (n-1)^{-1} \tilde{\bm{Z}}_{-1}^T \tilde{\bm{Z}}_{-1} \overset{a.s.}{\to}
\bm{I}_r.$$ Indeed one can show that $\bm{R} \overset{a.s.}{\to} \bm{I}_r$
by applying \citet[Lemma A.1]{bai2012}. We proceed to prove our theorem by
showing the conclusion holds for any fixed $\bm{u}$ and fixed
sequences $\{\bm{X}^{(n)}\}_{n=1}^{\infty}$ and
$\{\bm{R}^{(n,p)}\}_{n=1,p=1}^{\infty}$ such that
$\|\bm{X}^{(n)}\|_2/\sqrt{n} \to 1$ and $\bm{R}^{(n,p)}
\to \bm{I}_r$ as $n,p \to \infty$. For brevity we
will write $\bm{X}$ and $\bm{R}$
instead of $\bm{X}^{(n)}$ and $\bm{R}^{(n,p)}$ for the rest of this proof.

Plugging \eqref{eq:y-tilde-1-separate} in the estimator
\eqref{eq:alpha-nc} and \eqref{eq:beta-nc}, we obtain
\[
\begin{split}
\sqrt{n} (\hat{\bm{\beta}}^{\mathrm{NC}}_{-\mathcal{C}} -
\bm{\beta}_{-\mathcal{C}}) =& \frac{\sqrt{n}}{\|\bm{X}\|_2}
(\tilde{\bm{E}}^T_{1,-\mathcal{C}} - \hat{\bm{\Gamma}}_{-\mathcal{C}}
(\hat{\bm{\Gamma}}_{\mathcal{C}}^T \hat{\bm{\Sigma}}_\mathcal{C}^{-1}
  \hat{\bm{\Gamma}}_{\mathcal{C}})^{-1} \hat{\bm{\Gamma}}_{\mathcal{C}}^T \hat{\bm{\Sigma}}_\mathcal{C}^{-1} \tilde{\bm{E}}^T_{1,\mathcal{C}}) \\
  +& \sqrt{n} \cdot (\bm{\Gamma}_{-\mathcal{C}}^{(0)}  -
  \hat{\bm{\Gamma}}_{-\mathcal{C}} ) \bm{\alpha}^{(0)} \\
  +& \sqrt{n} \cdot \hat{\bm{\Gamma}}_{-\mathcal{C}} (\hat{\bm{\Gamma}}_{\mathcal{C}}^T \hat{\bm{\Sigma}}_\mathcal{C}^{-1}
  \hat{\bm{\Gamma}}_{\mathcal{C}})^{-1} \hat{\bm{\Gamma}}_{\mathcal{C}}^T
  \hat{\bm{\Sigma}}_\mathcal{C}^{-1} (\hat{\bm{\Gamma}}_{\mathcal{C}}  -  \bm{\Gamma}_{\mathcal{C}}^{(0)}) \bm{\alpha}^{(0)}.
\end{split}
\]
As $n,p \to \infty$,
$\sqrt{n}/\|\bm{X}\|_2 \overset{a.s.}{\to} 1$.
Also, as $p/n^k \to 0$ for some $k > 0$, using
\Cref{lem:gamma-ml} and \Cref{rmk:rotation}, both $\hat{\bm \Sigma}$ and $\hat{\bm \Gamma}$ has
entrywise uniform convergence in probability to $\bm \Sigma$ and $\bm \Gamma$.
Using
\Cref{assumption:nc-factor}, we get
\begin{equation}
\begin{split} \label{eq:nc-facts}
\Big(\frac{1}{|\mathcal{C}|}\hat{\bm{\Gamma}}_{
  \mathcal{C}}^T \hat{\bm{\Sigma}}_\mathcal{C}^{-1} \hat{\bm{\Gamma}}_{\mathcal{C}}\Big)^{-1}=&
 \Big(\frac{1}{|\mathcal{C}|}\bm{\Gamma}_{
  \mathcal{C}}^T \bm{\Sigma}_\mathcal{C}^{-1}\bm{\Gamma}_{\mathcal{C}}\Big)^{-1} + \bm o_p(1)\\
\frac{1}{|\mathcal{C}|}\hat{\bm{\Gamma}}_{\mathcal{C}}^T
  \hat{\bm{\Sigma}}_\mathcal{C}^{-1}\tilde{\bm{E}}^T_{1,\mathcal{C}} =&
 \frac{1}{|\mathcal{C}|}\bm{\Gamma}_{\mathcal{C}}^T
  \bm{\Sigma}_\mathcal{C}^{-1}\tilde{\bm{E}}^T_{1,\mathcal{C}} + \bm o_p(1)\\
\frac{1}{|\mathcal{C}|}\hat{\bm{\Gamma}}_{\mathcal{C}}^T
  \hat{\bm{\Sigma}}_\mathcal{C}^{-1}\big(\sqrt n(\hat{\bm{\Gamma}}_{\mathcal{C}} - \bm{\Gamma}_{\mathcal{C}}^{(0)})\big)
  = &
 \frac{1}{|\mathcal{C}|}\bm{\Gamma}_{\mathcal{C}}^T
  \bm{\Sigma}_\mathcal{C}^{-1}\big(\sqrt n(\hat{\bm{\Gamma}}_{\mathcal{C}} - \bm{\Gamma}_{\mathcal{C}}^{(0)})\big)
  + \bm o_p(1)
  \end{split}
\end{equation}
which implies
\begin{equation}
\label{eq:beta-nc-asymptotics-intermediate}
\begin{split}
\sqrt{n} (\hat{\bm{\beta}}^{\mathrm{NC}}_{S} -
\bm{\beta}_{S}) =& \tilde{\bm{E}}^T_{1,S}  -
\bm{\Gamma}_{S} (\bm{\Gamma}_{\mathcal{C}}^T  \bm{\Sigma}_\mathcal{C}^{-1}
\bm{\Gamma}_{\mathcal{C}})^{-1} \bm{\Gamma}_{\mathcal{C}}^T \bm{\Sigma}_\mathcal{C}^{-1} \tilde{\bm{E}}^T_{1,\mathcal{C}} \\
  +& \sqrt{n} \cdot (\bm{\Gamma}_{S}^{(0)}  -
  \hat{\bm{\Gamma}}_{S} ) \bm{\alpha}^{(0)} \\
  +& \sqrt{n} \cdot \bm{\Gamma}_{S} (\bm{\Gamma}_{\mathcal{C}}^T
  \bm{\Sigma}_\mathcal{C}^{-1}  \bm{\Gamma}_{\mathcal{C}})^{-1} \bm{\Gamma}_{\mathcal{C}}^T \bm{\Sigma}_\mathcal{C}^{-1}
  (\hat{\bm{\Gamma}}_{\mathcal{C}}  -  \bm{\Gamma}_{\mathcal{C}}^{(0)}) \bm{\alpha}^{(0)} + \bm{o}_p(1).
\end{split}
\end{equation}
Note that $\tilde{\bm{E}}_1 \independent \hat{\bm{\Gamma}}$, $\tilde{\bm{E}}_{1,\mathcal{C}}
\independent \tilde{\bm{E}}_{1,S}$, and $\sqrt{n} (\hat{\bm{\Gamma}}_S -
\bm{\Gamma}^{(0)}_S) \overset{d}{\to} \mathrm{N}(0, \bm{\Sigma}_S \otimes
\bm{I}_r)$, the four main terms on the right hand side of
\eqref{eq:beta-nc-asymptotics-intermediate} are (asymptotically)
uncorrelated, so we only need to work out their individual
variances. Since $\tilde{\bm{E}}_1^T \sim \mathrm{N}(\bm{0}, \bm{\Sigma})$,
we have $\tilde{\bm{E}}^T_{1,S} \sim \mathrm{N}(\bm{0}, \bm{\Sigma}_{S})$ and
$\bm{\Gamma}_{S} (\bm{\Gamma}_{\mathcal{C}}^T  \bm{\Sigma}_\mathcal{C}^{-1} \bm{\Gamma}_{\mathcal{C}})^{-1}
\bm{\Gamma}_{\mathcal{C}}^T \bm{\Sigma}_\mathcal{C}^{-1} \tilde{\bm{E}}^T_{1,C} \sim \mathrm{N}(\bm{0}, \bm{\Delta}_S)$.
Similarly, $\sqrt{n} \cdot (\bm{\Gamma}_{S}^{(0)}  -
  \hat{\bm{\Gamma}}_{S} ) \bm{\alpha}^{(0)} \overset{d}{\to}
  \mathrm{N}(\bm{0}, \|\bm{\alpha}\|^2 \bm{\Sigma}_{S})$, and
  $$\sqrt{n} \cdot \bm{\Gamma}_{S} (\bm{\Gamma}_{\mathcal{C}}^T \bm{\Sigma}_\mathcal{C}^{-1}
  \bm{\Gamma}_{\mathcal{C}})^{-1} \bm{\Gamma}_{\mathcal{C}}^T \bm{\Sigma}_\mathcal{C}^{-1}
  (\hat{\bm{\Gamma}}_{\mathcal{C}}  -  \bm{\Gamma}_{\mathcal{C}}^{(0)}) \bm{\alpha}^{(0)}
  \overset{d}{\to} \mathrm{N}(\bm{0}, \|\bm{\alpha}\|^2 \bm{\Delta}_S).$$

\subsection{Proof of \Cref{thm:consistency-rr}}
\label{app:consistency-rr}

As in the proof of \Cref{thm:asymptotics-nc}, we prove the
conclusions in this theorem for any fixed $\tilde{\bm{W}}_1$ and fixed
sequences $\{\bm{X}^{(n)}\}_{n=1}^{\infty}$ and
$\{\bm{R}^{(n,p)}\}_{n=1,p=1}^{\infty}$ such that
$\|\bm{X}^{(n)}\|_2/\sqrt{n} \to 1$ and $\bm{R}^{(n,p)} \to \bm{I}_r$ as $n,p \to \infty$. For brevity we
will write $\bm{X}$ and $\bm{R}$ instead of $\bm{X}^{(n)}$ and $\bm{R}^{(n,p)}$ for the rest of this proof.
We abbreviate $\hat{\bm{\alpha}}^{\mathrm{RR}}$
as $\hat{\bm{\alpha}}$ in this proof. To avoid confusion, we
use $\bm \alpha$ for the true value of the parameter and
$\tilde{\bm \alpha}$ to represent a vector in $\mathbb{R}^r$.

  Because $\bm{\alpha}^{(0)} \to \bm{\alpha}$, we prove
  this theorem by showing that for any $\epsilon > 0$,
  $\mathrm{P}(\|\hat{\bm{\alpha}} -
  \bm{\alpha}^{(0)}\|_0 \ge \epsilon) \to 0$. We break down our proof
  to two key results: First, we show $\hat{\bm{\alpha}}$ and
  $\bm{\alpha}^{(0)}$ are close in the following sense
\begin{equation} \label{eq:phi-op1}
\varphi(\bm{\alpha}^{(0)} -
  \hat{\bm{\alpha}}) = \frac{1}{p} \sum_{j=1}^p \rho\left(\frac{
  \hat{\bm{\Gamma}}_j^T (\bm{\alpha}^{(0)} -
  \hat{\bm{\alpha}})}{\hat{\sigma}_j}\right) = o_p(1),
\end{equation}
and second, we show that for sufficiently small
  $\epsilon > 0$, there exists $\tau > 0$ such that as $n, p \to \infty$
\begin{equation} \label{eq:consistency-key}
  \mathrm{P} \left( \inf_{\|\tilde{\bm{\alpha}}\|_2 \ge \epsilon}
  {\varphi}(\tilde{\bm{\alpha}}) > \tau \right)
\to 1.
\end{equation}

Based on these two results and the observation that
\[
\{ \|\bm{\alpha}^{(0)} - \hat{\bm{\alpha}}\|_2 < \epsilon \} \supseteq
\left\{ {\varphi}(\bm{\alpha}^{(0)} - \hat{\bm{\alpha}}) < \tau \right\} \bigcap
\left\{ \inf_{\|\tilde{\bm{\alpha}}\|_2 \ge \epsilon} {\varphi}(\tilde{\bm{\alpha}}) > \tau \right\},
\]
we conclude that $\mathrm{P}(\|\hat{\bm{\alpha}} -  \bm{\alpha}^{(0)}\|_2 \ge \epsilon) \to 0$.

Let's start with \eqref{eq:phi-op1}. Denote
$l_p(\tilde{\bm{\alpha}}) = p^{-1}\sum_{j=1}^p
  \rho \left( {\tilde{Y}_{1j} / \|\bm{X}\|_2 -
  \hat{\bm{\Gamma}}_j^T \tilde{\bm{\alpha}}}/{\hat{\sigma}_j}\right)$. By
  \eqref{eq:alpha-rr}, we have $\hat{\bm{\alpha}}^{\mathrm{RR}} = \arg
  \min l_p(\tilde{\bm{\alpha}})$, so $l_p(\hat{\bm{\alpha}}) \le
  l_p(\bm{\alpha}^{(0)})$. We examine the difference between
  $l_p(\tilde{\bm{\alpha}})$ and $\varphi(\bm{\alpha}^{(0)} - \tilde{\bm{\alpha}})$ for any
  $\tilde{\bm{\alpha}}$, starting from
\begin{equation*}
\begin{split}
l_p(\tilde{\bm{\alpha}}) & = \frac{1}{p}\sum_{j=1}^p
  \rho \left( \frac{\tilde{Y}_{1j} / \|\bm{X}\|_2 -
      \hat{\bm{\Gamma}}_j^T \tilde{\bm{\alpha}}}{\hat{\sigma}_j}\right) \\
& = \frac{1}{p} \sum_{j=1}^p \rho\left(  \frac{ \beta_j + (\bm{\Gamma}^{(0)}_j)^T
  \bm{\alpha}^{(0)} + \tilde{E}_{1j} / \|\bm{X}\|_2 - \hat{\bm{\Gamma}}_j^T
    \tilde{\bm{\alpha}}}{\hat{\sigma}_j}\right).\\
\end{split}
\end{equation*}
Because $\rho$ has bounded derivative, $|\rho(x) - \rho(y)| \le D |x -
y|$ for any $x,y \in \mathbb{R}$. In the statement of
\Cref{thm:consistency-rr} we assume $\|\bm{\beta}\|_1/p \to 0$. This
together with $\|\bm{X}\|_2 \to 0$ implies that
\begin{equation*}
  l_p(\tilde{\bm{\alpha}}) = \frac{1}{p} \sum_{j=1}^p \rho\left(  \frac{ (\bm{\Gamma}^{(0)}_j)^T
  \bm{\alpha}^{(0)} - \hat{\bm{\Gamma}}_j^T
    \tilde{\bm{\alpha}}}{\hat{\sigma}_j}\right) + o_p(1).
\end{equation*}
Next,
\[
\left|\frac{ (\bm{\Gamma}^{(0)}_j)^T
  \bm{\alpha}^{(0)}  - \hat{\bm{\Gamma}}_j^T
    {\tilde{\bm{\alpha}}}}{\hat{\sigma}_j} - \frac{
  \hat{\bm{\Gamma}}_j (\bm{\alpha}^{(0)} -
  {\tilde{\bm{\alpha}}})}{\hat{\sigma}_j}\right|
=  \left|
  \frac{(\bm{\Gamma}^{(0)}_j - \hat{\bm{\Gamma}}_j)^T
    \bm{\alpha}^{(0)}}{\hat{\sigma}_j}
\right| \overset{p}{\to} \bm{0}.
\]
Therefore, by the same argument as before,
\begin{equation}
  \label{eq:consistency-proof-1}
  l_p(\tilde{\bm{\alpha}}) = \frac{1}{p} \sum_{j=1}^p \rho\left( \frac{
  \hat{\bm{\Gamma}}_j (\bm{\alpha}^{(0)} -
  {\tilde{\bm{\alpha}}})}{\hat{\sigma}_j} \right) + o_p(1) =
\varphi(\bm{\alpha}^{(0)} - \tilde{\bm{\alpha}}) + o_p(1).
\end{equation}
Also, $\varphi(\bm{0}) = 0$ because $\rho(0)
= 0$. Therefore $l_p(\hat{\bm{\alpha}}) \le l_p(\bm{\alpha}^{(0)}) =
o_p(1)$. Notice that the $o_p(1)$ term in
\eqref{eq:consistency-proof-1} does not depend on $\hat{\bm{\alpha}}$,
hence $\varphi(\bm{\alpha}^{(0)} - \hat{\bm{\alpha}}) =
l_p(\hat{\bm{\alpha}}) + o_p(1) = o_p(1)$.

Next we prove \eqref{eq:consistency-key}. Since $\rho(x)$ is 
non-decreasing when $x \geq 0$,
\[
\inf_{\|\tilde{\bm{\alpha}}\|_2 \ge \epsilon} \frac{1}{p} \sum_{j=1}^p \rho\left(\frac{
  \hat{\bm{\Gamma}}_j^T \tilde{\bm{\alpha}}}{\hat{\sigma}_j}\right) \ge
\inf_{\|\tilde{\bm{\alpha}}\|_2 = \epsilon}
\frac{1}{p} \sum_{j=1}^p \rho\left(\frac{
  \hat{\bm{\Gamma}}_j^T \tilde{\bm{\alpha}}}{\hat{\sigma}_j}\right).
\]
If $p/n^k \to 0$ for some $k > 0$, then using \Cref{lem:gamma-ml}, there exists 
some constant $D^\star$ that $\Prob(\max_j\|\hat {\bm \Gamma}_j\|_2 \leq D^\star) \to 1$.
Thus when $\max_j\|\hat {\bm \Gamma}_j\|_2 \leq D^\star$ holds,
there is sufficiently small $\epsilon > 0$, the $\tilde{\bm{\alpha}}$ on the right hand
side is within the neighborhood where $\rho$ is strongly convex in
\Cref{assumption:rho}, so for some $\kappa > 0$
\[
\inf_{\|\tilde{\bm{\alpha}}\|_2 = \epsilon}
\frac{1}{p} \sum_{j=1}^p \rho\left(\frac{
  \hat{\bm{\Gamma}}_j^T \tilde{\bm{\alpha}}}{\hat{\sigma}_j}\right) \ge
\inf_{\|\tilde{\bm{\alpha}}\|_2 = \epsilon} \kappa \cdot
\frac{1}{p} \sum_{j=1}^p \left(\frac{
  \hat{\bm{\Gamma}}_j^T \tilde{\bm{\alpha}}}{\hat{\sigma}_j}\right)^2 = \kappa\epsilon^2 \cdot
\lambda_{\mathrm{min}}\left( \hat{\bm{\Gamma}}^T
  \hat{\bm{\Sigma}}^{-1} \hat{\bm{\Gamma}} \right).
\]
By the uniform consistency of $\hat{\bm{\Gamma}}$ and $\hat{\bm{\Sigma}}$ using \Cref{lem:gamma-ml}, we
conclude \eqref{eq:consistency-key} is true for $\tau = \kappa\epsilon^2
\lambda_{\mathrm{min}}(\bm{\Gamma}^T \bm{\Sigma}^{-1} \bm{\Gamma}) /
2$, where $\lambda_{\mathrm{min}}(\bm{\Gamma}^T \bm{\Sigma}^{-1}
\bm{\Gamma}) > 0$ by \Cref{assumption:large-factor}.

\subsection{Proof of \Cref{thm:asymptotics-rr}}
\label{app:thm-rr}

In
fact, by using the representation of $\hat{\bm{\Gamma}}$ in
\eqref{eq:gamma-hat} and (C.4) in \citep{bai2012}.
  \begin{equation*}
    \label{eq:eq:sigma-hat}
    \hat{\sigma}_j^2 = \frac{1}{n-1} \tilde{\bm{E}}_{-1,j}^T
    \tilde{\bm{E}}_{-1,j} + o_p(n^{-\frac{1}{2}}),
  \end{equation*}
we have
\[
\begin{split}
\bm{\Psi}_{p}(\bm{\alpha}^{(0)}) =& \frac{1}{p} \sum_{j=1}^p \psi\left( \frac{\tilde{Y}_{1j} / \|\bm{X}\|_2 -
    \hat{\bm{\Gamma}}_j^T
    \bm{\alpha}^{(0)}}{\hat{\sigma}_j}\right)
\hat{\bm{\Gamma}}_j / \hat{\sigma}_j \\
=& \frac{1}{p} \sum_{j=1}^p \psi\left( \frac{\beta_j + \tilde{E}_{1j}
    / \|\bm{X}\|_2 + \bm{\Gamma}_j^T \bm{\alpha}^{(0)} - (\hat{\bm{\Gamma}}_j^0)^T
    \bm{\alpha}^{(0)}}{\hat{\sigma}_j}\right)
\hat{\bm{\Gamma}}_j / \hat{\sigma}_j \\
=& \frac{1}{p} \sum_{j=1}^p \psi\left( \frac{\beta_j + \tilde{E}_{1j}
    / \|\bm{X}\|_2 + \frac{1}{n-1} \tilde{\bm{E}}_{-1,j} \tilde{\bm{Z}}_{-1}^{(0)} \bm{\alpha}^{(0)} + o_p(n^{-\frac{1}{2}})
    }{\sqrt{\frac{1}{n-1} \tilde{\bm{E}}_{-1,j}^T
    \tilde{\bm{E}}_{-1,j} + o_p(n^{-\frac{1}{2}})}}\right)
\hat{\bm{\Gamma}}_j / \hat{\sigma}_j
\end{split}
\]
Let $W_j$ be the expression inside $\psi$ in the last equation without
the two $o_p(n^{-\frac{1}{2}})$ terms. Condition on
$\tilde{\bm{Z}}_{-1}^0$, $W_j$ and $W_k$ are independent if $j \ne
k$ and $W_j \overset{d}{=} W_k$ if $\beta_j = \beta_k = 0$. It is not hard to show the denominator of $W_j$ converges in
probability to $\sigma_j$, while the numerator of $W_j$ is $\beta_j +
n^{-\frac{1}{2}} t_j + o_p(n^{-\frac{1}{2}})$ where $t_j$ are
i.i.d. normal random variables by noticing $\tilde{\bm{Z}}_{-1}^{(0)}
\bm{\alpha}^{(0)} = \tilde{\bm{Z}}_{-1} (\bm{R} \bm{R}^T)^{-1}
(\bm{\alpha} + \bm{u}/\|\bm{X}\|_2)$ and $\bm{R} \bm{R}^T \to \bm{I}_r$. Because $\psi(W) = \psi(0) + \psi'(0) W +
\frac{1}{2} \psi''(0) W'^2$ for some $W'$ between $0$ and $W$ and
$\psi(0) = 0$, $\psi(W_j) = \psi'(0) (n^{-\frac{1}{2}} t_j + o_p(n^{-\frac{1}{2}}))$ if $\beta_j =
0$. Also $\psi$ is assumed to be bounded $|\psi(W_j)| \le C$ in
\cref{assumption:rho}. Therefore, by using $\lim_{p \to \infty}
\frac{1}{p} \bm{\Gamma}^T \bm{\Sigma}^{-1} \bm{\Gamma}$ exists and is
positive definite (in \cref{assumption:large-factor})
\[
\begin{split}
\left|\bm{\Psi}_{p}(\bm{\alpha}^{(0)})\right| &= \left| \frac{1}{p} \sum_{j=1}^p \psi\left(W_j
+ o_p(n^{-\frac{1}{2}})\right) \hat{\bm{\Gamma}}_j / \hat{\sigma}_j
\right| \\
&= \left| \frac{1}{p} \left( \sum_{\beta_j \ne 0} + \sum_{\beta_j = 0} \right) \psi\left(W_j
+ o_p(n^{-\frac{1}{2}})\right) \hat{\bm{\Gamma}}_j / \hat{\sigma}_j
\right| \\
&\le \frac{1}{p} C \|\bm{\beta}\|_0 + O_p(n^{-\frac{1}{2}}
p^{-\frac{1}{2}}) \\
&= o_p(n^{-\frac{1}{2}})
\end{split}
\]

Similarly,
\[
\begin{split}
  \left[\nabla \bm{\Psi}_p(\bm{\alpha}^{(0)})\right]^{-1} &=
\left[\frac{1}{p} \sum_{j=1}^p
  \psi'\left(W_j + o_p(n^{-\frac{1}{2}})\right)
\hat{\bm{\Gamma}}_j \hat{\bm{\Gamma}}_j^T / \hat{\sigma}_j^2
\right]^{-1} \\
&= \left[ \frac{1}{p} \sum_{j=1}^p \left(\psi'(\beta_j) +
    O_p(n^{-\frac{1}{2}})\right) \hat{\bm{\Gamma}}_j
  \hat{\bm{\Gamma}}_j^T / \hat{\sigma}_j^2 \right]^{-1} \\
&= \left[ \frac{1}{p} \sum_{j=1}^p \left(\psi'(0) +
    O_p(n^{-\frac{1}{2}})\right) \hat{\bm{\Gamma}}_j
  \hat{\bm{\Gamma}}_j^T / \hat{\sigma}_j^2 + \frac{1}{p} \sum_{\beta_j
    \ne 0}
  (\psi'(\beta_j) - \psi'(0)) \hat{\bm{\Gamma}}_j
  \hat{\bm{\Gamma}}_j^T / \hat{\sigma}_j^2  \right]^{-1} \\
&\overset{p}{\to} \left[ \psi'(0) \frac{1}{p} \sum_{j=1}^p
  \bm{\Gamma}_j \bm{\Gamma}_j^T / \sigma_j^2 \right]^{-1}
\end{split}
\]
The last probability limit follows from $\hat{\bm{\Gamma}}
\overset{p}{\to} \bm{\Gamma}$, $\hat{\bm{\sigma}}_j
\overset{p}{\to} \sigma_j$ and $\psi'$, $\bm{\Gamma}$ and $\bm{\Sigma}$ are bounded. This means all the eigenvalues of
$\left[\nabla \bm{\Psi}_p(\bm{\alpha}^{(0)})\right]^{-1}$ are converging to some constants.

\subsection{Proof of \Cref{cor:type-I}}
\label{app:thm-type-I}

We begin with a lemma regarding the test statistics.

\begin{lemma}
\label{lem:test-statistic-uniform}
The test statistics \eqref{eq:test-statistic} can be written as $t_j =
z_j + v_j$ for $j\in\mathcal{N}_p$, where $z_{j}, j\in\mathcal{N}_p$ are independent standard normal variables
and $v_j, j\in\mathcal{N}_p$ satisfy $\max_{1 \le j \le p} |v_j| =
o_p(1)$.
\end{lemma}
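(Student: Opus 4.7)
The plan is to extract from $t_j$ the Gaussian component coming from $\tilde E_{1j}$ and the factor-score/error cross-sum, absorbing everything else into $v_j$. For $j\in\mathcal{N}_p$ so that $\beta_j=0$, equation \eqref{eq:y-tilde-1-0} gives
\[
\|\bm X\|_2\,\hat\beta_j \;=\; \tilde E_{1j} + \|\bm X\|_2\,(\bm\Gamma_j^{(0)}-\hat{\bm\Gamma}_j)^T\bm\alpha^{(0)} + \|\bm X\|_2\,\hat{\bm\Gamma}_j^T(\bm\alpha^{(0)}-\hat{\bm\alpha}),
\]
with an extra term involving the negative-control projection in the NC case. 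I then invoke the uniform linearization \eqref{eq:max-gamma-err} of \Cref{lem:gamma-ml} to replace $\bm\Gamma_j^{(0)}-\hat{\bm\Gamma}_j$ by $-(n-1)^{-1/2}V_j^T$ up to $\bm o_p(n^{-1/2})$ uniformly in $j$, where $V_j:=(n-1)^{-1/2}\sum_{i=2}^n \tilde{\bm Z}_i^{(0)}\tilde E_{ij}$, and define
\[
z_j \;:=\; \frac{\tilde E_{1j} - V_j^T\bm\alpha}{\sigma_j\sqrt{1+\|\bm\alpha\|_2^2}}, \qquad v_j \;:=\; t_j - z_j.
\]

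That the $z_j$ are i.i.d.\ $N(0,1)$ follows by conditioning on $\tilde{\bm Z}_{-1}^{(0)}$: the diagonal $\bm\Sigma$ assumption makes the pairs $(\tilde E_{1j},V_j)$ independent across $j$, and the IC3 normalization $(n-1)^{-1}(\tilde{\bm Z}_{-1}^{(0)})^T\tilde{\bm Z}_{-1}^{(0)}=\bm I_r$ implies $V_j\mid \tilde{\bm Z}_{-1}^{(0)}\sim N(\bm 0,\sigma_j^2\bm I_r)$ independent of $\tilde E_{1j}$; the conditional variance of the numerator is $\sigma_j^2(1+\|\bm\alpha\|_2^2)$, and since the conditional law of $z_j$ does not depend on $\tilde{\bm Z}_{-1}^{(0)}$, the $z_j$ are unconditionally i.i.d.\ standard normal. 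Combining the decomposition above with $\|\bm X\|_2/\sqrt{n-1}\overset{a.s.}{\to}1$ I would write
\[
v_j \;=\; z_j\Big(\tfrac{\sigma_j\sqrt{1+\|\bm\alpha\|_2^2}}{\hat\sigma_j\sqrt{1+\|\hat{\bm\alpha}\|_2^2}}-1\Big) + \tfrac{R_j}{\hat\sigma_j\sqrt{1+\|\hat{\bm\alpha}\|_2^2}},
\]
where $R_j$ collects the uniform Gamma-linearization remainder, the term $\|\bm X\|_2\hat{\bm\Gamma}_j^T(\bm\alpha^{(0)}-\hat{\bm\alpha})$, and in the NC case the negative-control projection error. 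Each piece of $R_j$ is $o_p(1)$ uniformly in $j$: the Gamma remainder is $o_p(1)$ by \eqref{eq:max-gamma-err}, the $\hat{\bm\alpha}$ piece combines $\max_j\|\hat{\bm\Gamma}_j\|=O_p(1)$ (from \eqref{eq:max-sigma-err}) with the rate $\|\hat{\bm\alpha}-\bm\alpha^{(0)}\|=o_p(n^{-1/2})$ that emerges from the Newton expansion of $\bm\Psi_p$ in the proof of \Cref{thm:asymptotics-rr} (or from \eqref{eq:nc-facts} in the NC case), and the negative-control correction vanishes uniformly because $\bm\Gamma_\mathcal{C}^T\bm\Sigma_\mathcal{C}^{-1}\bm\Gamma_\mathcal{C}$ diverges when $|\mathcal{C}|\to\infty$.

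The main obstacle is the multiplicative piece $z_j\big(\tfrac{\sigma_j\sqrt{1+\|\bm\alpha\|_2^2}}{\hat\sigma_j\sqrt{1+\|\hat{\bm\alpha}\|_2^2}}-1\big)$. A Gaussian maximum bound gives $\max_j|z_j|=O_p(\sqrt{\log p})$, so one needs the multiplicative factor to beat this rate. Estimate \eqref{eq:max-sigma-err} gives $\max_j|\hat\sigma_j/\sigma_j-1|=O_p(\sqrt{\log p/n})$ which, paired with $\|\hat{\bm\alpha}-\bm\alpha\|=o_p(1)$ (\Cref{thm:consistency-rr}), yields a factor of order $O_p(\sqrt{\log p/n})+o_p(1)$; the product with $z_j$ is $O_p(\log p/\sqrt n)+o_p(\sqrt{\log p})$. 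The quantitative input that closes the argument is the polynomial-growth condition $p/n^k\to 0$, which gives $\log p=O(\log n)$ and hence $\log p/\sqrt n\to 0$; this is precisely what reconciles the $\sqrt{\log p}$ maximum tail bound with the parametric $1/\sqrt n$ rate and makes the whole remainder vanish uniformly in $j$.
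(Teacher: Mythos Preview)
Your approach is essentially the paper's: the same choice of $z_j$ (up to an immaterial sign), the same use of the uniform linearization \eqref{eq:max-gamma-err} for $\hat{\bm\Gamma}_j-\bm\Gamma_j^{(0)}$, and the same split of $v_j$ into an additive remainder $R_j$ and a multiplicative correction coming from $\hat\sigma_j$ and $\hat{\bm\alpha}$.

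There is one gap in the final paragraph. For the multiplicative piece you bound
\[
\Big|\tfrac{\sigma_j\sqrt{1+\|\bm\alpha\|_2^2}}{\hat\sigma_j\sqrt{1+\|\hat{\bm\alpha}\|_2^2}}-1\Big|
\;\le\; O_p\big(\sqrt{\log p/n}\big)+o_p(1),
\]
invoking only $\|\hat{\bm\alpha}-\bm\alpha\|=o_p(1)$ from \Cref{thm:consistency-rr} for the second summand. After multiplying by $\max_j|z_j|=O_p(\sqrt{\log p})$ this gives $O_p(\log p/\sqrt n)+o_p(\sqrt{\log p})$. The polynomial growth condition $p/n^k\to 0$ kills the first term, but it does nothing for $o_p(\sqrt{\log p})$: an $o_p(1)$ factor times $\sqrt{\log p}$ is not $o_p(1)$ without a rate. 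You need a rate on $\hat{\bm\alpha}$ here, and you already have one: earlier you used $\|\hat{\bm\alpha}-\bm\alpha^{(0)}\|=o_p(n^{-1/2})$, and $\|\bm\alpha^{(0)}-\bm\alpha\|=O_p(n^{-1/2})$ since $\bm R-\bm I_r=O_p(n^{-1/2})$ and $\tilde{\bm W}_1/\|\bm X\|_2=O_p(n^{-1/2})$. Hence $\|\hat{\bm\alpha}-\bm\alpha\|=O_p(n^{-1/2})$, the multiplicative factor is $O_p(\sqrt{\log p/n})$ uniformly in $j$, and the product with $\max_j|z_j|$ is $O_p(\log p/\sqrt n)=o_p(1)$. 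With this correction the argument goes through and matches the paper's.
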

\begin{proof}
  We first prove this lemma for the RR estimator
  $\hat{\beta}^{\mathrm{RR}}$ and the corresponding test
  statistics. Let
\[
z_j = \frac{\tilde{E}_{1j} +\frac{1}{\sqrt{n-1}}\sum_{i=2}^n
  \tilde{E}_{ij} (\tilde{\bm{Z}}_i^{(0)})^T \bm{\alpha}}{\sigma_j \sqrt{1 +
    \|\bm{\alpha}\|^2}}.
\]
It is easy to verify that $z_j$ i.i.d.\ $\sim \mathrm{N}(0,1),~j=1,\dotsc,p$.
By using the expression of
  $\hat{\bm{\beta}}^{\mathrm{RR}}$ in \eqref{eq:beta-expand}, we can show
  for $j \in \mathcal{N}$ (so $\beta_j = 0$), that
\[
\begin{split}
&\max_{j \in \mathcal{N}_p}\left|\|\bm{X}\|_2 \hat{\beta}_j - \sigma_j \sqrt{1 +
    \|\bm{\alpha}\|^2} z_j\right| \\
=&\max_{j \in \mathcal{N}_p}\left|\|\bm{X}\|_2 \hat{\beta}_j - \left[\tilde{E}_{1j} +\frac{1}{\sqrt{n-1}}\sum_{i=2}^n
  \tilde{E}_{ij} (\tilde{\bm{Z}}_i^{(0)})^T \bm{\alpha}\right]\right| \\
=& \max_{j \in \mathcal{N}_p} \left|
\|\bm{X}\|_2 (\bm{\Gamma}_j^{(0)} - \hat{\bm{\Gamma}}_j)^T
\hat{\bm{\alpha}} - \frac{1}{\sqrt{n-1}}\sum_{i=2}^n
  \tilde{E}_{ij} (\tilde{\bm{Z}}_i^{(0)})^T \bm{\alpha} + r_j \right| \\
=& o_p(1),
\end{split}
\]
where $r_j =  \hat{\bm{\Gamma}}_j^T \left[ \nabla
  \bm{\Psi}_p(\bm{\alpha}^{(0)}) + \bm{o}_p(1) \right]^{-1}
\|\bm{X}\|_2 \bm{\Psi}_p(\bm{\alpha}^{(0)})$. The last step is due to
the uniform convergence of $\hat{\bm{\Gamma}}_j$ in
\eqref{eq:max-gamma-err} and \Cref{lem:asymptotics-rr-bound} to
uniformly control $r_j$. Now we can show, by using the uniform
convergence rate of $\hat{\sigma}^2$, that
\[
\begin{split}
\max_{j \in \mathcal{N}_p} |v_j|  =& \max_{j \in \mathcal{N}_p} \left|\frac{\|\bm{X}\|_2\hat{\beta}_j}{\hat{\sigma}_j \sqrt{1 +
    \|\hat{\bm{\alpha}}\|^2}} - z_j\right| \\
 =& \max_{j \in \mathcal{N}_p} \frac{\left|\|\bm{X}\|_2\hat{\beta}_j - \hat{\sigma}_j \sqrt{1 +
    \|\hat{\bm{\alpha}}\|^2 } z_j \right|}{\hat{\sigma}_j \sqrt{1 +
    \|\hat{\bm{\alpha}}\|^2}} \\
=& O_p\left(\max_{j \in \mathcal{N}_p}\left|\|\bm{X}\|_2\hat{\beta}_j - \hat{\sigma}_j \sqrt{1 +
    \|\hat{\bm{\alpha}}\|^2 } z_j \right|\right) \\
\le& O_p\left(\max_{j \in \mathcal{N}_p}\left|\|\bm{X}\|_2 \hat{\beta}_j - \sigma_j \sqrt{1 +
    \|\bm{\alpha}\|^2} z_j \right| \right) \\ &+ O_p\left(\max_{j \in
  \mathcal{N}_p}\left| \hat{\sigma}_j \sqrt{1 +
    \|\hat{\bm{\alpha}}\|^2} z_j - \sigma_j \sqrt{1 +
    \|\bm{\alpha}\|^2} z_j   \right| \right) \\
=&o_p(1).
\end{split}
\]
For the negative control estimator, the same argument holds
by noticing the $\bm{o}_p(1)$ term in
\eqref{eq:beta-nc-asymptotics-intermediate} is also uniform over $j$ (similar to $r_j$).
\end{proof}

To prove the first conclusion in \Cref{cor:type-I}, we
show the left hand side of \eqref{eq:type-I} has expectation
converging to $\alpha$ and variance converging to zero. For the expectation, for
  any $\epsilon > 0$,
\[
\begin{split}
\frac{1}{|\mathcal{N}_p|} \sum_{j \in \mathcal{N}_p} \mathrm{P}(|t_j| >
z_{\alpha/2})
&\le \frac{1}{|\mathcal{N}_p|} \sum_{j \in \mathcal{N}_p}
\mathrm{P}(|z_j| > z_{\alpha/2} - \epsilon) + \mathrm{P}(|v_j| >
\epsilon) \\
& = 2 (1 - \mathrm{\Phi}(z_{\alpha/2} - \epsilon)) +
\frac{1}{|\mathcal{N}_p|} \sum_{j \in \mathcal{N}_p} \mathrm{P}(|v_j| >
\epsilon) \\
& \le 2 (1 - \mathrm{\Phi}(z_{\alpha/2} - \epsilon)) +
\mathrm{P}(\max_{1 \le j \le p} |v_j| > \epsilon) \\
& \to 2 (1 - \mathrm{\Phi}(z_{\alpha/2} - \epsilon)).
\end{split}
\]
Similarly, one can prove $\lim_{n,p \to \infty}\frac{1}{|\mathcal{N}_p|} \sum_{j \in
  \mathcal{N}_p} \mathrm{P}(|t_j| > z_{\alpha/2}) \ge 2
(1 - \mathrm{\Phi}(z_{\alpha/2} + \epsilon))$ for any $\epsilon > 0$. Thus
the expectation converges to $\alpha$ when $n,p \to \infty$.

For the variance, we compute the second moment of the left hand side
of \eqref{eq:type-I}: for any $\epsilon > 0$,
\[
\begin{split}
&\frac{1}{|\mathcal{N}_p|^2} \sum_{j,k \in \mathcal{N}_p} \mathrm{P}(|t_j| >
z_{\alpha/2}, |t_k| > z_{\alpha / 2}) \\
=& \frac{1}{|\mathcal{N}_p|^2} \sum_{j \in \mathcal{N}_p} \mathrm{P}(|t_j| >
z_{\alpha/2}) + \frac{1}{|\mathcal{N}_p|^2} \sum_{j,k \in
  \mathcal{N}_p, j \ne k} \mathrm{P}(|t_j| > z_{\alpha/2}, |t_k| > z_{\alpha / 2}) \\
\le& \frac{1}{|\mathcal{N}_p|} \left[2 (1 - \mathrm{\Phi}(z_{\alpha/2} - \epsilon)) +
\mathrm{P}(\max_{1 \le j \le p} |v_j| > \epsilon) \right] \\
&+ \frac{1}{|\mathcal{N}_p|^2} \sum_{j,k \in
  \mathcal{N}_p, j \ne k} \mathrm{P}(|z_j| > z_{\alpha/2} -
\epsilon,|z_k| > z_{\alpha/2} - \epsilon) \\
&+ \mathrm{P}(|v_j| >
\epsilon) + \mathrm{P}(|v_k| > \epsilon) \\
= & \frac{1}{|\mathcal{N}_p|^2} \sum_{j,k \in
  \mathcal{N}_p, j \ne k} \mathrm{P}(|z_j| > z_{\alpha/2} -
\epsilon,|z_k| > z_{\alpha/2} - \epsilon) + o(1) \\
= & \frac{|\mathcal{N}_p|-1}{|\mathcal{N}_p|} [2 (1 - \mathrm{\Phi}(z_{\alpha/2}
- \epsilon))]^2 + o(1) \\
\to & 4[1 - \mathrm{\Phi}(z_{\alpha/2}
- \epsilon)]^2
\end{split}
\]
Similarly we can prove the lower bound of the second moment. In
conclusion, the second moment converges to $\alpha^2$, hence the
variance of \eqref{eq:type-I} converges to $0$.

To prove the second conclusion in \Cref{cor:type-I}, we begin with
\[
\begin{split}
&\mathrm{P}\Big(\sum_{j \in \mathcal{N}_p} I(|t_j| > \mathrm{\Phi}^{-1}(1 - \alpha/(2p))) \ge
  1\Big) \\
=& \mathrm{P}\Big(\max_{j \in \mathcal{N}_p} |t_j|
  >\mathrm{\Phi}^{-1}(1 - \alpha/(2p)) \Big) \\
=& \mathrm{P}\Big(\max_{j \in \mathcal{N}_p} |z_j + v_j|
  >\mathrm{\Phi}^{-1}(1 - \alpha/(2p)) \Big) \\
\le & \mathrm{P}\Big(\max_{j \in \mathcal{N}_p} |z_j|
  >\mathrm{\Phi}^{-1}(1 - \alpha/(2p)) - \max_{j \in \mathcal{N}_p}
  |v_j| \Big) \\
\le & \mathrm{P}\Big(\max_{1 \le j \le p} |z_j|
  >\mathrm{\Phi}^{-1}(1 - \alpha/(2p)) - \max_{j \in \mathcal{N}_p}
  |v_j| \Big) \\
\end{split}
\]
The conclusion \eqref{eq:fwer} then follows from $\mathrm{P}\Big(\max_{1 \le j \le p} |z_j|
  >\mathrm{\Phi}^{-1}(1 - \alpha/(2p)) \Big) \le \alpha$ (the validity
  of Bonferroni for i.i.d.\ normals), the fact that
  $\mathrm{\Phi}^{-1}(1 - \alpha/(2p)) \to \infty$ as $p \to \infty$,
  and the result in \Cref{lem:test-statistic-uniform} that $\max_{j
    \in \mathcal{N}_p} |v_j| = o_p(1)$.

\subsection{Proof of \Cref{thm:alpha-test}}
\label{app:thm-alpha-0}

First, we point out that when $\bm\alpha = \bm 0$, as $n, p \to \infty$ 
\begin{equation}\label{eq:asy-alpha-0}
\sqrt n \cdot \bm \alpha^{(0)} = \sqrt n \cdot \bm R^{-1} \tilde{\bm W}_1  \overset{d}{\to} \mathrm{N}(0, \bm I_r)
\end{equation}
where $\bm \alpha^{(0)}$, $\bm R$ and $\tilde{\bm W}_1$ are defined in \Cref{sec:inference-beta-alpha}. 
This is due to the fact that $\bm R \to \bm I_r$ (\Cref{rmk:rotation}) and 
$\sqrt n\tilde{\bm W}_1 \sim \mathrm{N}(\bm 0, \bm I_r)$, thus Slutsky's Theorem implies \eqref{eq:asy-alpha-0}.
Next, we show that 
\begin{equation}\label{eq:asy-hat-alpha}
\sqrt n \cdot({\hat {\bm \alpha} - \bm\alpha^{(0)}}) = \bm o_p(1)
\end{equation}

For the negative control scenario, using the expression of $\hat{\bm \alpha}^{\NC}$ in \eqref{eq:alpha-nc} and 
$\tilde{\bm Y}_{1, \calC}/\|\bm X\|_2$ in \eqref{eq:y-tilde-1-separate}, we get
\[
\begin{split}
\sqrt n (\hat{\bm \alpha} - \bm \alpha^{(0)}) 
=& \sqrt n(\hat{\bm \Gamma}_\calC \hat{\bm \Sigma}_\calC^{-1}\hat{\bm \Gamma}_\calC )^{-1}
\hat{\bm \Gamma}_\calC \hat{\bm \Sigma}_\calC^{-1}(\bm \Gamma_\calC^{(0)} - \hat{\bm \Gamma}_\calC) \bm \alpha^{(0)}\\
+& \frac{\sqrt n}{\|\bm X\|_2}(\hat{\bm \Gamma}_\calC \hat{\bm \Sigma}_\calC^{-1}\hat{\bm \Gamma}_\calC )^{-1}
\hat{\bm \Gamma}_\calC \hat{\bm \Sigma}_\calC^{-1}\tilde{\bm E}_{1\calC}^T.
\end{split}
\]
Using the facts we got in \eqref{eq:nc-facts} and $\bm \alpha^{(0)} = \bm o_p(1)$, we further get
\[
\sqrt n (\hat{\bm \alpha} - \bm \alpha^{(0)}) 
= (\bm \Gamma_\calC \bm \Sigma_\calC^{-1} \bm \Gamma_\calC)^{-1} \bm \Gamma_\calC \bm \Sigma_\calC^{-1}
\tilde{\bm E}_{1\calC}^T + \bm o_p(1).
\]
Under \Cref{assumption:nc-factor}, if $|\calC| \to \infty$, the maximum eigenvalue of 
$(\bm \Gamma_\calC \bm \Sigma_\calC^{-1} \bm \Gamma_\calC)^{-1}$ goes to $0$, thus \eqref{eq:asy-hat-alpha} holds 
for the negative control scenario.

For the sparsity scenario, in the proof of \Cref{thm:asymptotics-rr}, we have shown that 
\[
 \sqrt n(\hat{\bm \alpha}^\RR - \bm \alpha^{(0)}) = -\sqrt n \left[ \nabla 
 \bm{\Psi}_p(\bm{\alpha}^{(0)}) + \bm o_p(1) \right]^{-1} 
 \bm{\Psi}_p(\bm \alpha^{(0)})
\]
Thus, because of \Cref{lem:asymptotics-rr-bound}, \eqref{eq:asy-hat-alpha} also holds for the sparsity scenario.

Finally, combining \eqref{eq:asy-alpha-0} and \eqref{eq:asy-hat-alpha}, \Cref{thm:alpha-test} holds.

\subsection{Proof of \Cref{lem:part-asym}}
\label{app:lem-multiple}

First, note that by the strong law of large numbers
$\frac{1}{n}
\begin{pmatrix}
  \bm{X}_0 & \bm{X}_1 \\
\end{pmatrix}^T
\begin{pmatrix}
  \bm{X}_0 & \bm{X}_1 \\
\end{pmatrix}
\overset{a.s.}{\to} \bm{\Sigma}_{\bm{X}}$. Using the
QR decomposition of
$
\begin{pmatrix}
  \bm{X}_0 & \bm{X}_1 \\
\end{pmatrix} = \bm{Q} \bm{U}
$ and writing $\bm{U} =
\begin{pmatrix}
  \bm{V} \\
  \bm{0} \\
\end{pmatrix}
$ and $\bm{V} =
\begin{pmatrix}
  \bm{U}_{00} & \bm{U}_{01} \\
  \bm{0} & \bm{U}_{11} \\
\end{pmatrix}
$,
it's clear that $\frac{1}{n} \bm{V}^T \bm{V}
\overset{a.s.}{\rightarrow} \bm{\Sigma}_{\bm{X}}$.
Since $\bm{\Sigma}_{\bm{X}}$ is nonsingular, both
$\bm{U}_{00}$ and $\bm{U}_{11}$ are full rank
square matrices with probability $1$. Thus using the block matrix inversion formula, we have
$\bm{V}^{-1} =
\begin{pmatrix}
  \star & \star \\
    \bm{0} & \bm{U}_{11}^{- 1}
\end{pmatrix}$ where $\star$ represents some $d_0 \times d_0$ or $d_0
\times d_1$ matrix.
Therefore the right bottom block of $n \bm{V}^{-1} \bm{V}^{-T}$
is $n \bm{U}_{11}^{-1} \bm{U}_{11}^{-T}$ and converges to
$\bm{\Omega}_{11}$ almost surely.

\subsection{Proof of \Cref{thm:multiple-rr}}
\label{app:thm-multiple}
First, for the known zero indices scenario,
$\hat{\bm{\Alpha}}_1^{\mathrm{NC}}$ has the following formula, which is similar to
\eqref{eq:alpha-nc}:
\begin{equation}\label{eq:alpha-nc-ext}
\hat{\bm{\Alpha}}_1^{\mathrm{NC}} = (\hat{\bm{\Gamma}}_{\mathcal{C}}^T \hat{\bm{\Sigma}}_{\mathcal{C}}^{-1}
  \hat{\bm{\Gamma}}_{\mathcal{C}})^{-1} \hat{\bm{\Gamma}}_{\mathcal{C}}^T \hat{\bm{\Sigma}}_{\mathcal{C}}^{-1}
  \tilde{\bm{Y}}_{1,\mathcal{C}}^T \bm{U}_{11}^{-T}
\end{equation}
which implies a similar formula as \eqref{eq:beta-nc-asymptotics-intermediate}:
\begin{equation}
\label{eq:beta-nc-asymptotics-intermediate-ext}
\begin{split}
\sqrt{n} (\hat{\bm{\Beta}}^{\mathrm{NC}}_{1,S} -
\bm{\Beta}_{1,S}) =& \sqrt n\tilde{\bm{E}}_{1,S}^T\bm{U}_{11}^{-T}  -
\sqrt n\cdot\bm{\Gamma}_{S} (\bm{\Gamma}_{\mathcal{C}}^T  \bm{\Sigma}_\mathcal{C}^{-1}
\bm{\Gamma}_{\mathcal{C}})^{-1} \bm{\Gamma}_{\mathcal{C}}^T \bm{\Sigma}_\mathcal{C}^{-1} \tilde{\bm{E}}_{1,C}^T\bm{U}_{11}^{-T} \\
  +& \sqrt{n} \cdot (\bm{\Gamma}_{S}^{(0)}  -
\hat{\bm{\Gamma}}_{S} ) {\bm{\Alpha}}_1^{(0)} \\
  +& \sqrt{n} \cdot \bm{\Gamma}_{S} (\bm{\Gamma}_{\mathcal{C}}^T
  \bm{\Sigma}_\mathcal{C}^{-1}  \bm{\Gamma}_{\mathcal{C}})^{-1} \bm{\Gamma}_{\mathcal{C}}^T \bm{\Sigma}_\mathcal{C}^{-1}
  (\hat{\bm{\Gamma}}_{\mathcal{C}}  -  \bm{\Gamma}_{\mathcal{C}}^{(0)})
  {\bm{\Alpha}}_1^{(0)} + \bm{o}_p(1),
\end{split}
\end{equation}
where ${\bm{\Alpha}}_1^{(0)} = \bm{R}^{-1}(\bm{A}_1 + \bm{\mathrm{U}}\bm{U}_{11}^{-T})$.
Following
the proof of \Cref{thm:asymptotics-nc} by using \Cref{lem:part-asym}, we get
\eqref{eq:nc-asym-ext}.

For the unknown zero indices scenario, \Cref{lem:part-asym} guarantees the consistency of
each column of $\hat{\bm{\Alpha}}_1^{\mathrm{RR}}$ by using \Cref{thm:consistency-rr}.
Then the Taylor expansion used in the proof of \Cref{thm:asymptotics-rr}
still works at each column of $\bm{\Alpha}_1^{(0)}$. Similar to \eqref{eq:beta-expand},
we get
\begin{equation} \label{eq:beta-expand-ext}
\begin{split}
\sqrt{n}(\hat{\bm{\Beta}}_1^{\mathrm{RR}} - \bm{\Beta}_1) =&
  \sqrt{n} \tilde{\bm{E}}_1^T\bm{U}_{11}^{-T}
+ \sqrt{n} (\bm{\Gamma}^{(0)} -\hat{\bm{\Gamma}})
\hat{\bm{\Alpha}}_1^{\mathrm{RR}} \\
&+ \hat{\bm{\Gamma}}
\begin{pmatrix}\bm{g}_1 & \bm{g}_2 & \cdots & \bm{g}_{d_1}\end{pmatrix}
\end{split}
\end{equation}
where $ \bm{g}_i = \left[ \nabla
\bm{\Psi}_p(\bm{\Alpha}_{1, i}^{(0)}) \right]^{-1}
(\sqrt{n} \bm{\Psi}_p\big(\bm{\Alpha}_{1, i}^{(0)}) + \bm{o}_p(1)\big)$.
Following the proof of \Cref{thm:asymptotics-rr}, we get each $\bm{g}_i = \bm{o}_p(1)$.
Thus
$$
\sqrt{n}(\hat{\bm{\Beta}}_1^{\mathrm{RR}} - \bm{\Beta}_1) =
  \sqrt{n} \tilde{\bm{E}}_1^T\bm{U}_{11}^{-T}
+ \sqrt{n} \cdot(\bm{\Gamma}^{(0)} -\hat{\bm{\Gamma}})
\hat{\bm{\Alpha}}_1^{\mathrm{RR}} + \bm{o}_p(1)
$$
and \eqref{eq:rr-asym-ext} holds.


\section{Supplementary Figures and Tables}

\begin{figure}[hp]
  \centering
   \includegraphics[width = \textwidth]{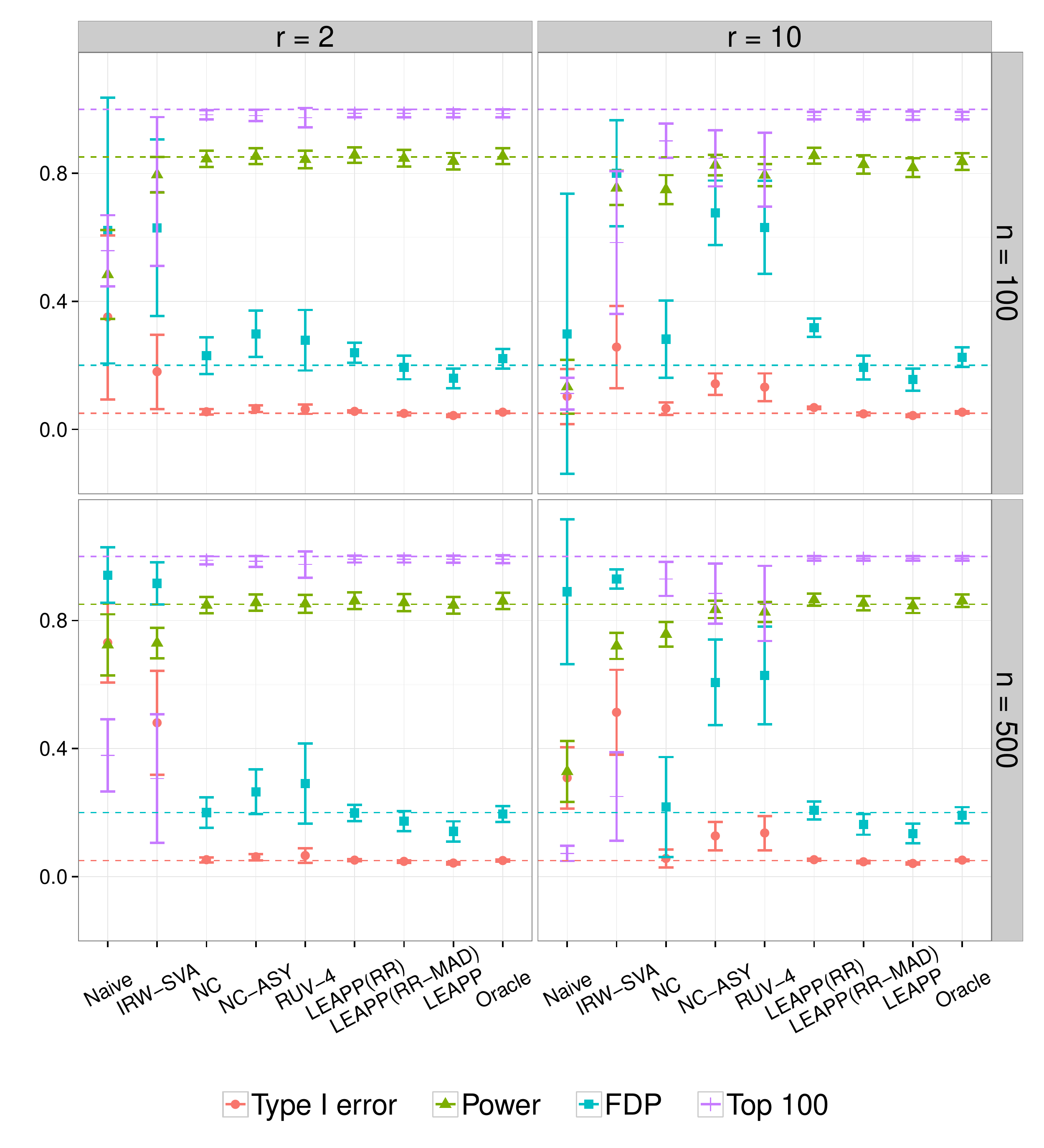}
  \caption{Compare the performance of nine different approaches when the variance of
    $\bm{X}$ explained by the confounding factors is $5\%$.   The error bars are one
  standard deviation over $100$ repeated simulations. The
  three dashed horizontal lines from bottom to top are the nominal significance level, FDR level,
 oracle power and the precision of the smallest $100$ p-values, respectively.}
  \label{fig:p-5000-methods05}
\end{figure}

\begin{figure}[hp]
  \centering
   \includegraphics[width = \textwidth]{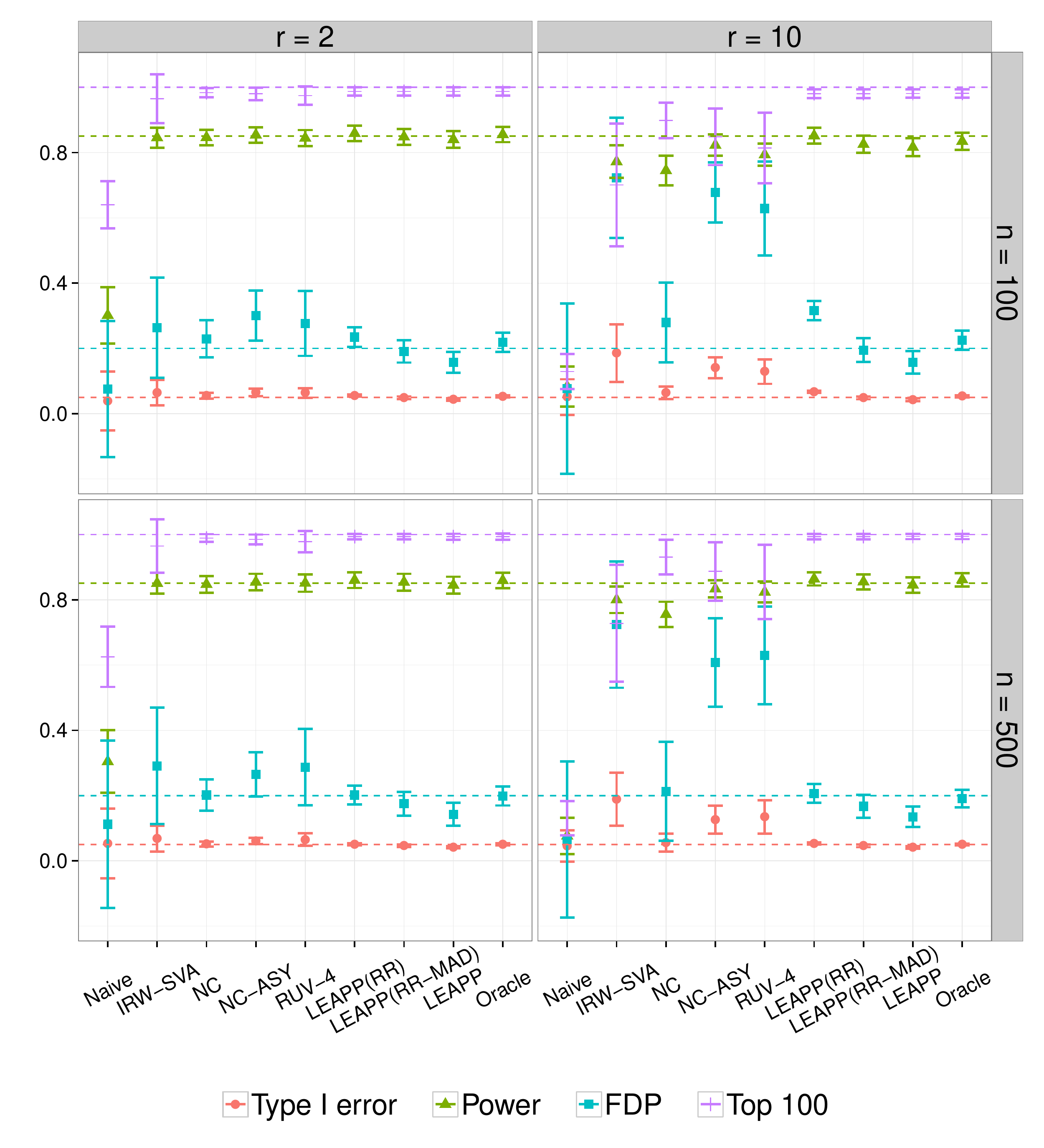}
  \caption{Compare the performance of nine different approaches when latent factors are
   unconfounding.   The error bars are one
  standard deviation over $100$ repeated simulations. The
  three dashed horizontal lines from bottom to top are the nominal significance level, FDR level,
 oracle power and the precision of the smallest $100$ p-values, respectively.}
  \label{fig:p-5000-methods05}
\end{figure}

\setlength{\tabcolsep}{3pt}
\begin{table}[h]
\centering
\footnotesize
\begin{tabular}{|$r|^r^r^r^r^r^r|^r^r^r^r|}
  \hline
r & mean & median & sd & mad & skewness & medc. & \#sig. & X/Y & top 100 & p-value \\
  \hline
0 & 0.077 & 0.14 & 1.25 & 1.04 & -0.949 & -0.064 & 605 & 97 & 58 &  NA\\
  1 & 0.19 & 0.21 & 1.37 &  1.2 & -0.556 & -0.02 & 458 & 90 & 72 & 0.013 \\
  2 & 0.15 & 0.19 & 1.41 & 1.23 & -0.464 & -0.027 & 457 & 91 & 74 & 0.039 \\
  3 & 0.015 & 0.055 & 1.38 & 1.18 & -0.442 & -0.035 & 509 & 97 & 75 & 0.00096 \\
  4 & 0.045 & 0.065 & 1.27 & 1.03 & -0.661 & -0.018 & 608 & 101 & 78 & 5.2e-07 \\
  5 & 0.044 & 0.067 &  1.3 & 1.06 & -0.573 & -0.019 & 612 & 100 & 76 & 1.8e-06 \\
  7 & 0.071 & 0.088 & 1.34 & 1.11 & -0.527 & -0.0097 & 572 & 99 & 76 & 2.7e-06 \\
  10 & 0.024 & 0.057 & 1.39 & 1.15 & -0.539 & -0.025 & 658 & 100 & 75 & 5.3e-07 \\
  15 & 0.097 & 0.12 & 1.48 & 1.23 & -0.619 & -0.018 & 659 & 102 & 76 & 2.4e-08 \\
  20 & 0.051 & 0.072 & 1.48 & 1.24 & -0.628 & -0.015 & 625 & 102 & 76 & 6.5e-11 \\
  30 & 0.021 & 0.038 & 1.58 & 1.28 & -0.709 & -0.01 & 748 & 109 & 81 & 5.6e-13 \\
  \rowstyle{\bfseries}
  33 & 0.032 & 0.052 & 1.63 & 1.33 & -0.669 & -0.013 & 751 & 109 & 79 & 2.7e-12 \\
  40 & 0.027 & 0.052 & 1.75 & 1.44 & -0.544 & -0.017 & 846 & 111 & 78 & 6.3e-11 \\
  50 & 0.034 & 0.054 & 1.93 & 1.59 & -0.389 & -0.0088 & 954 & 117 & 76 & 1.3e-09 \\
   \hline
\end{tabular}
\caption{\small Supplementary Dataset ($n = 143$, $p = 54675$). This
  is  same as dataset 1 except the primary variable is gender instead
  of COPD severity.}
  \label{table:supp_copd}
\end{table}

\bibliographystyle{imsart-nameyear}
\bibliography{reference/ref}

\begin{thebibliography}{65}

\bibitem[\protect\citeauthoryear{Alter, Brown and Botstein}{2000}]{alter2000}
\begin{barticle}[author]
\bauthor{\bsnm{Alter},~\bfnm{Orly}\binits{O.}},
  \bauthor{\bsnm{Brown},~\bfnm{Patrick~O}\binits{P.~O.}} \AND
  \bauthor{\bsnm{Botstein},~\bfnm{David}\binits{D.}}
(\byear{2000}).
\btitle{Singular value decomposition for genome-wide expression data processing
  and modeling}.
\bjournal{Proceedings of the National Academy of Sciences}
\bvolume{97}
\bpages{10101--10106}.
\end{barticle}
\endbibitem

\bibitem[\protect\citeauthoryear{Anderson and Rubin}{1956}]{anderson1956}
\begin{binproceedings}[author]
\bauthor{\bsnm{Anderson},~\bfnm{Theodore~W}\binits{T.~W.}} \AND
  \bauthor{\bsnm{Rubin},~\bfnm{Herman}\binits{H.}}
(\byear{1956}).
\btitle{Statistical inference in factor analysis}.
In \bbooktitle{Proceedings of the third Berkeley symposium on mathematical
  statistics and probability}
\bvolume{5}.
\end{binproceedings}
\endbibitem

\bibitem[\protect\citeauthoryear{Bai and Li}{2012a}]{bai2012}
\begin{barticle}[author]
\bauthor{\bsnm{Bai},~\bfnm{Jushan}\binits{J.}} \AND
  \bauthor{\bsnm{Li},~\bfnm{Kunpeng}\binits{K.}}
(\byear{2012}a).
\btitle{Statistical analysis of factor models of high dimension}.
\bjournal{The Annals of Statistics}
\bvolume{40}
\bpages{436--465}.
\end{barticle}
\endbibitem

\bibitem[\protect\citeauthoryear{Bai and Li}{2012b}]{bai2012supplement}
\begin{barticle}[author]
\bauthor{\bsnm{Bai},~\bfnm{Jushan}\binits{J.}} \AND
  \bauthor{\bsnm{Li},~\bfnm{Kunpeng}\binits{K.}}
(\byear{2012}b).
\btitle{Supplement to "Statistical analysis of factor models of high
  dimension."}.
\bdoi{10.1214/11-AOS966SUPP}
\end{barticle}
\endbibitem

\bibitem[\protect\citeauthoryear{Bai and Li}{2014}]{bai2014theory}
\begin{barticle}[author]
\bauthor{\bsnm{Bai},~\bfnm{Jushan}\binits{J.}} \AND
  \bauthor{\bsnm{Li},~\bfnm{Kunpeng}\binits{K.}}
(\byear{2014}).
\btitle{Theory and methods of panel data models with interactive effects}.
\bjournal{The Annals of Statistics}
\bvolume{42}
\bpages{142--170}.
\end{barticle}
\endbibitem

\bibitem[\protect\citeauthoryear{Bai and Li}{2015}]{bai2012approximate}
\begin{barticle}[author]
\bauthor{\bsnm{Bai},~\bfnm{Jushan}\binits{J.}} \AND
  \bauthor{\bsnm{Li},~\bfnm{Kunpeng}\binits{K.}}
(\byear{2015}).
\btitle{Maximum likelihood estimation and inference for approximate factor
  models of high dimension}.
\bjournal{The Review of Economics and Statistics}
\bpages{(to appear)}.
\end{barticle}
\endbibitem

\bibitem[\protect\citeauthoryear{Bai and Ng}{2002}]{bai2002}
\begin{barticle}[author]
\bauthor{\bsnm{Bai},~\bfnm{Jushan}\binits{J.}} \AND
  \bauthor{\bsnm{Ng},~\bfnm{Serena}\binits{S.}}
(\byear{2002}).
\btitle{Determining the Number of Factors in Approximate Factor Models}.
\bjournal{Econometrica}
\bvolume{70}
\bpages{191--221}.
\bdoi{10.1111/1468-0262.00273}
\end{barticle}
\endbibitem

\bibitem[\protect\citeauthoryear{Bai and Ng}{2006}]{bai2006}
\begin{barticle}[author]
\bauthor{\bsnm{Bai},~\bfnm{Jushan}\binits{J.}} \AND
  \bauthor{\bsnm{Ng},~\bfnm{Serena}\binits{S.}}
(\byear{2006}).
\btitle{Confidence Intervals for Diffusion Index Forecasts and Inference for
  Factor-Augmented Regressions}.
\bjournal{Econometrica}
\bvolume{74}
\bpages{1133--1150}.
\end{barticle}
\endbibitem

\bibitem[\protect\citeauthoryear{Benjamini and Hochberg}{1995}]{benjamini1995}
\begin{barticle}[author]
\bauthor{\bsnm{Benjamini},~\bfnm{Yoav}\binits{Y.}} \AND
  \bauthor{\bsnm{Hochberg},~\bfnm{Yosef}\binits{Y.}}
(\byear{1995}).
\btitle{Controlling the false discovery rate: a practical and powerful approach
  to multiple testing}.
\bjournal{Journal of the Royal Statistical Society. Series B (Methodological)}
\bvolume{51}
\bpages{289--300}.
\end{barticle}
\endbibitem

\bibitem[\protect\citeauthoryear{Benjamini and Yekutieli}{2001}]{benjamini2001}
\begin{barticle}[author]
\bauthor{\bsnm{Benjamini},~\bfnm{Yoav}\binits{Y.}} \AND
  \bauthor{\bsnm{Yekutieli},~\bfnm{Daniel}\binits{D.}}
(\byear{2001}).
\btitle{The control of the false discovery rate in multiple testing under
  dependency}.
\bjournal{Annals of statistics}
\bvolume{29}
\bpages{1165--1188}.
\end{barticle}
\endbibitem

\bibitem[\protect\citeauthoryear{Blalock et~al.}{2004}]{blalock2004}
\begin{barticle}[author]
\bauthor{\bsnm{Blalock},~\bfnm{Eric~M}\binits{E.~M.}},
  \bauthor{\bsnm{Geddes},~\bfnm{James~W}\binits{J.~W.}},
  \bauthor{\bsnm{Chen},~\bfnm{Kuey~Chu}\binits{K.~C.}},
  \bauthor{\bsnm{Porter},~\bfnm{Nada~M}\binits{N.~M.}},
  \bauthor{\bsnm{Markesbery},~\bfnm{William~R}\binits{W.~R.}} \AND
  \bauthor{\bsnm{Landfield},~\bfnm{Philip~W}\binits{P.~W.}}
(\byear{2004}).
\btitle{Incipient Alzheimer's disease: microarray correlation analyses reveal
  major transcriptional and tumor suppressor responses}.
\bjournal{Proceedings of the National Academy of Sciences of the United States
  of America}
\bvolume{101}
\bpages{2173--2178}.
\end{barticle}
\endbibitem

\bibitem[\protect\citeauthoryear{Bollen}{1989}]{bollen1989}
\begin{bbook}[author]
\bauthor{\bsnm{Bollen},~\bfnm{Kenneth~A}\binits{K.~A.}}
(\byear{1989}).
\btitle{Structural equations with latent variables.}
\bpublisher{John Wiley \& Sons}.
\end{bbook}
\endbibitem

\bibitem[\protect\citeauthoryear{Brys, Hubert and Struyf}{2004}]{brys2004}
\begin{barticle}[author]
\bauthor{\bsnm{Brys},~\bfnm{G}\binits{G.}},
  \bauthor{\bsnm{Hubert},~\bfnm{Mia}\binits{M.}} \AND
  \bauthor{\bsnm{Struyf},~\bfnm{A}\binits{A.}}
(\byear{2004}).
\btitle{A robust measure of skewness}.
\bjournal{Journal of Computational and Graphical Statistics}
\bvolume{13}
\bpages{996--1017}.
\end{barticle}
\endbibitem

\bibitem[\protect\citeauthoryear{Chandrasekaran, Parrilo and
  Willsky}{2012}]{chandrasekaran2012}
\begin{barticle}[author]
\bauthor{\bsnm{Chandrasekaran},~\bfnm{Venkat}\binits{V.}},
  \bauthor{\bsnm{Parrilo},~\bfnm{Pablo~A.}\binits{P.~A.}} \AND
  \bauthor{\bsnm{Willsky},~\bfnm{Alan~S.}\binits{A.~S.}}
(\byear{2012}).
\btitle{Latent variable graphical model selection via convex optimization}.
\bjournal{Ann. Statist.}
\bvolume{40}
\bpages{1935--1967}.
\bdoi{10.1214/11-AOS949}
\end{barticle}
\endbibitem

\bibitem[\protect\citeauthoryear{Clarke and Hall}{2009}]{clarke2009}
\begin{barticle}[author]
\bauthor{\bsnm{Clarke},~\bfnm{Sandy}\binits{S.}} \AND
  \bauthor{\bsnm{Hall},~\bfnm{Peter}\binits{P.}}
(\byear{2009}).
\btitle{Robustness of multiple testing procedures against dependence}.
\bjournal{The Annals of Statistics}
\bvolume{37}
\bpages{332--358}.
\end{barticle}
\endbibitem

\bibitem[\protect\citeauthoryear{Craig et~al.}{2006}]{craig2006}
\begin{barticle}[author]
\bauthor{\bsnm{Craig},~\bfnm{Andrew}\binits{A.}},
  \bauthor{\bsnm{Cloarec},~\bfnm{Olivier}\binits{O.}},
  \bauthor{\bsnm{Holmes},~\bfnm{Elaine}\binits{E.}},
  \bauthor{\bsnm{Nicholson},~\bfnm{Jeremy~K}\binits{J.~K.}} \AND
  \bauthor{\bsnm{Lindon},~\bfnm{John~C}\binits{J.~C.}}
(\byear{2006}).
\btitle{Scaling and normalization effects in NMR spectroscopic metabonomic data
  sets}.
\bjournal{Analytical Chemistry}
\bvolume{78}
\bpages{2262--2267}.
\end{barticle}
\endbibitem

\bibitem[\protect\citeauthoryear{De~La~Fuente et~al.}{2004}]{de2004}
\begin{barticle}[author]
\bauthor{\bsnm{De~La~Fuente},~\bfnm{Alberto}\binits{A.}},
  \bauthor{\bsnm{Bing},~\bfnm{Nan}\binits{N.}},
  \bauthor{\bsnm{Hoeschele},~\bfnm{Ina}\binits{I.}} \AND
  \bauthor{\bsnm{Mendes},~\bfnm{Pedro}\binits{P.}}
(\byear{2004}).
\btitle{Discovery of meaningful associations in genomic data using partial
  correlation coefficients}.
\bjournal{Bioinformatics}
\bvolume{20}
\bpages{3565--3574}.
\end{barticle}
\endbibitem

\bibitem[\protect\citeauthoryear{Desai and Storey}{2012}]{desai2012cross}
\begin{barticle}[author]
\bauthor{\bsnm{Desai},~\bfnm{Keyur~H}\binits{K.~H.}} \AND
  \bauthor{\bsnm{Storey},~\bfnm{John~D}\binits{J.~D.}}
(\byear{2012}).
\btitle{Cross-dimensional inference of dependent high-dimensional data}.
\bjournal{Journal of the American Statistical Association}
\bvolume{107}
\bpages{135--151}.
\end{barticle}
\endbibitem

\bibitem[\protect\citeauthoryear{Efron}{2007}]{efron2007}
\begin{barticle}[author]
\bauthor{\bsnm{Efron},~\bfnm{Bradley}\binits{B.}}
(\byear{2007}).
\btitle{Correlation and large-scale simultaneous significance testing}.
\bjournal{Journal of the American Statistical Association}
\bvolume{102}
\bpages{93-103}.
\end{barticle}
\endbibitem

\bibitem[\protect\citeauthoryear{Efron}{2010}]{efron2010}
\begin{barticle}[author]
\bauthor{\bsnm{Efron},~\bfnm{Bradley}\binits{B.}}
(\byear{2010}).
\btitle{Correlated z-values and the accuracy of large-scale statistical
  estimates}.
\bjournal{Journal of the American Statistical Association}
\bvolume{105}
\bpages{1042--1055}.
\end{barticle}
\endbibitem

\bibitem[\protect\citeauthoryear{Fan, Han and Gu}{2012}]{fan2012}
\begin{barticle}[author]
\bauthor{\bsnm{Fan},~\bfnm{Jianqing}\binits{J.}},
  \bauthor{\bsnm{Han},~\bfnm{Xu}\binits{X.}} \AND
  \bauthor{\bsnm{Gu},~\bfnm{Weijie}\binits{W.}}
(\byear{2012}).
\btitle{Estimating false discovery proportion under arbitrary covariance
  dependence}.
\bjournal{Journal of the American Statistical Association}
\bvolume{107}
\bpages{1019--1035}.
\end{barticle}
\endbibitem

\bibitem[\protect\citeauthoryear{Fan and Han}{2013}]{fan2013}
\begin{barticle}[author]
\bauthor{\bsnm{Fan},~\bfnm{Jianqing}\binits{J.}} \AND
  \bauthor{\bsnm{Han},~\bfnm{Xu}\binits{X.}}
(\byear{2013}).
\btitle{Estimation of false discovery proportion with unknown dependence}.
\bjournal{arXiv:1305.7007}.
\end{barticle}
\endbibitem

\bibitem[\protect\citeauthoryear{Fare et~al.}{2003}]{fare2003}
\begin{barticle}[author]
\bauthor{\bsnm{Fare},~\bfnm{Thomas~L}\binits{T.~L.}},
  \bauthor{\bsnm{Coffey},~\bfnm{Ernest~M}\binits{E.~M.}},
  \bauthor{\bsnm{Dai},~\bfnm{Hongyue}\binits{H.}},
  \bauthor{\bsnm{He},~\bfnm{Yudong~D}\binits{Y.~D.}},
  \bauthor{\bsnm{Kessler},~\bfnm{Deborah~A}\binits{D.~A.}},
  \bauthor{\bsnm{Kilian},~\bfnm{Kristopher~A}\binits{K.~A.}},
  \bauthor{\bsnm{Koch},~\bfnm{John~E}\binits{J.~E.}},
  \bauthor{\bsnm{LeProust},~\bfnm{Eric}\binits{E.}},
  \bauthor{\bsnm{Marton},~\bfnm{Matthew~J}\binits{M.~J.}},
  \bauthor{\bsnm{Meyer},~\bfnm{Michael~R}\binits{M.~R.}} \betal{et~al.}
(\byear{2003}).
\btitle{Effects of atmospheric ozone on microarray data quality}.
\bjournal{Analytical chemistry}
\bvolume{75}
\bpages{4672--4675}.
\end{barticle}
\endbibitem

\bibitem[\protect\citeauthoryear{Fisher}{1935}]{fisher1935}
\begin{bbook}[author]
\bauthor{\bsnm{Fisher},~\bfnm{Ronald~Aylmer}\binits{R.~A.}}
(\byear{1935}).
\btitle{The design of experiments.}
\bpublisher{Oliver \& Boyd}.
\end{bbook}
\endbibitem

\bibitem[\protect\citeauthoryear{Friguet, Kloareg and
  Causeur}{2009}]{friguet2009}
\begin{barticle}[author]
\bauthor{\bsnm{Friguet},~\bfnm{Chlo{\'e}}\binits{C.}},
  \bauthor{\bsnm{Kloareg},~\bfnm{Maela}\binits{M.}} \AND
  \bauthor{\bsnm{Causeur},~\bfnm{David}\binits{D.}}
(\byear{2009}).
\btitle{A factor model approach to multiple testing under dependence}.
\bjournal{Journal of the American Statistical Association}
\bvolume{104}
\bpages{1406--1415}.
\end{barticle}
\endbibitem

\bibitem[\protect\citeauthoryear{Gagnon-Bartsch, Jacob and
  Speed}{2013}]{gagnon2013}
\begin{btechreport}[author]
\bauthor{\bsnm{Gagnon-Bartsch},~\bfnm{J}\binits{J.}},
  \bauthor{\bsnm{Jacob},~\bfnm{L}\binits{L.}} \AND
  \bauthor{\bsnm{Speed},~\bfnm{TP}\binits{T.}}
(\byear{2013}).
\btitle{Removing unwanted variation from high dimensional data with negative
  controls}
\btype{Technical Report},
\bpublisher{Technical Report 820, Department of Statistics, University of
  California, Berkeley}.
\end{btechreport}
\endbibitem

\bibitem[\protect\citeauthoryear{Gagnon-Bartsch and Speed}{2012}]{gagnon2012}
\begin{barticle}[author]
\bauthor{\bsnm{Gagnon-Bartsch},~\bfnm{Johann~A}\binits{J.~A.}} \AND
  \bauthor{\bsnm{Speed},~\bfnm{Terence~P}\binits{T.~P.}}
(\byear{2012}).
\btitle{Using control genes to correct for unwanted variation in microarray
  data}.
\bjournal{Biostatistics}
\bvolume{13}
\bpages{539--552}.
\end{barticle}
\endbibitem

\bibitem[\protect\citeauthoryear{Gasch et~al.}{2000}]{gasch2000}
\begin{barticle}[author]
\bauthor{\bsnm{Gasch},~\bfnm{Audrey~P}\binits{A.~P.}},
  \bauthor{\bsnm{Spellman},~\bfnm{Paul~T}\binits{P.~T.}},
  \bauthor{\bsnm{Kao},~\bfnm{Camilla~M}\binits{C.~M.}},
  \bauthor{\bsnm{Carmel-Harel},~\bfnm{Orna}\binits{O.}},
  \bauthor{\bsnm{Eisen},~\bfnm{Michael~B}\binits{M.~B.}},
  \bauthor{\bsnm{Storz},~\bfnm{Gisela}\binits{G.}},
  \bauthor{\bsnm{Botstein},~\bfnm{David}\binits{D.}} \AND
  \bauthor{\bsnm{Brown},~\bfnm{Patrick~O}\binits{P.~O.}}
(\byear{2000}).
\btitle{Genomic expression programs in the response of yeast cells to
  environmental changes}.
\bjournal{Molecular biology of the cell}
\bvolume{11}
\bpages{4241--4257}.
\end{barticle}
\endbibitem

\bibitem[\protect\citeauthoryear{Greenland, Robins and
  Pearl}{1999}]{greenland1999}
\begin{barticle}[author]
\bauthor{\bsnm{Greenland},~\bfnm{Sander}\binits{S.}},
  \bauthor{\bsnm{Robins},~\bfnm{James~M}\binits{J.~M.}} \AND
  \bauthor{\bsnm{Pearl},~\bfnm{Judea}\binits{J.}}
(\byear{1999}).
\btitle{Confounding and collapsibility in causal inference}.
\bjournal{Statistical Science}
\bvolume{14}
\bpages{29--46}.
\end{barticle}
\endbibitem

\bibitem[\protect\citeauthoryear{Grzebyk, Wild and
  Chouani{\`e}re}{2004}]{grzebyk2004}
\begin{barticle}[author]
\bauthor{\bsnm{Grzebyk},~\bfnm{Michel}\binits{M.}},
  \bauthor{\bsnm{Wild},~\bfnm{Pascal}\binits{P.}} \AND
  \bauthor{\bsnm{Chouani{\`e}re},~\bfnm{Dominique}\binits{D.}}
(\byear{2004}).
\btitle{On identification of multi-factor models with correlated residuals}.
\bjournal{Biometrika}
\bvolume{91}
\bpages{141--151}.
\end{barticle}
\endbibitem

\bibitem[\protect\citeauthoryear{Irizarry et~al.}{2003}]{irizarry2003}
\begin{barticle}[author]
\bauthor{\bsnm{Irizarry},~\bfnm{Rafael~A}\binits{R.~A.}},
  \bauthor{\bsnm{Hobbs},~\bfnm{Bridget}\binits{B.}},
  \bauthor{\bsnm{Collin},~\bfnm{Francois}\binits{F.}},
  \bauthor{\bsnm{Beazer-Barclay},~\bfnm{Yasmin~D}\binits{Y.~D.}},
  \bauthor{\bsnm{Antonellis},~\bfnm{Kristen~J}\binits{K.~J.}},
  \bauthor{\bsnm{Scherf},~\bfnm{Uwe}\binits{U.}},
  \bauthor{\bsnm{Speed},~\bfnm{Terence~P}\binits{T.~P.}} \betal{et~al.}
(\byear{2003}).
\btitle{Exploration, normalization, and summaries of high density
  oligonucleotide array probe level data}.
\bjournal{Biostatistics}
\bvolume{4}
\bpages{249--264}.
\end{barticle}
\endbibitem

\bibitem[\protect\citeauthoryear{Jin}{2012}]{jin2012comment}
\begin{barticle}[author]
\bauthor{\bsnm{Jin},~\bfnm{Jiashun}\binits{J.}}
(\byear{2012}).
\btitle{Comment}.
\bjournal{Journal of the American Statistical Association}
\bvolume{107}
\bpages{1042--1045}.
\end{barticle}
\endbibitem

\bibitem[\protect\citeauthoryear{Kish}{1959}]{kish1959}
\begin{barticle}[author]
\bauthor{\bsnm{Kish},~\bfnm{Leslie}\binits{L.}}
(\byear{1959}).
\btitle{Some statistical problems in research design}.
\bjournal{American Sociological Review}
\bvolume{24}
\bpages{328--338}.
\end{barticle}
\endbibitem

\bibitem[\protect\citeauthoryear{Korn et~al.}{2004}]{korn2004}
\begin{barticle}[author]
\bauthor{\bsnm{Korn},~\bfnm{Edward~L}\binits{E.~L.}},
  \bauthor{\bsnm{Troendle},~\bfnm{James~F}\binits{J.~F.}},
  \bauthor{\bsnm{McShane},~\bfnm{Lisa~M}\binits{L.~M.}} \AND
  \bauthor{\bsnm{Simon},~\bfnm{Richard}\binits{R.}}
(\byear{2004}).
\btitle{Controlling the number of false discoveries: application to
  high-dimensional genomic data}.
\bjournal{Journal of Statistical Planning and Inference}
\bvolume{124}
\bpages{379--398}.
\end{barticle}
\endbibitem

\bibitem[\protect\citeauthoryear{Kuroki and Pearl}{2014}]{kuroki2014}
\begin{barticle}[author]
\bauthor{\bsnm{Kuroki},~\bfnm{Manabu}\binits{M.}} \AND
  \bauthor{\bsnm{Pearl},~\bfnm{Judea}\binits{J.}}
(\byear{2014}).
\btitle{Measurement Bias and Effect Restoration in Causal Inference}.
\bjournal{Biometrika}
\bvolume{101}
\bpages{423--437}.
\end{barticle}
\endbibitem

\bibitem[\protect\citeauthoryear{Lan and Du}{2014}]{lan2014}
\begin{barticle}[author]
\bauthor{\bsnm{Lan},~\bfnm{Wei}\binits{W.}} \AND
  \bauthor{\bsnm{Du},~\bfnm{Lilun}\binits{L.}}
(\byear{2014}).
\btitle{A Factor-Adjusted Multiple Testing Procedure with Application to Mutual
  Fund Selection}.
\bjournal{arXiv:1407.5515}.
\end{barticle}
\endbibitem

\bibitem[\protect\citeauthoryear{Lazar et~al.}{2013}]{lazar2013}
\begin{barticle}[author]
\bauthor{\bsnm{Lazar},~\bfnm{Cosmin}\binits{C.}},
  \bauthor{\bsnm{Meganck},~\bfnm{Stijn}\binits{S.}},
  \bauthor{\bsnm{Taminau},~\bfnm{Jonatan}\binits{J.}},
  \bauthor{\bsnm{Steenhoff},~\bfnm{David}\binits{D.}},
  \bauthor{\bsnm{Coletta},~\bfnm{Alain}\binits{A.}},
  \bauthor{\bsnm{Molter},~\bfnm{Colin}\binits{C.}},
  \bauthor{\bsnm{Weiss-Sol{\'\i}s},~\bfnm{David~Y}\binits{D.~Y.}},
  \bauthor{\bsnm{Duque},~\bfnm{Robin}\binits{R.}},
  \bauthor{\bsnm{Bersini},~\bfnm{Hugues}\binits{H.}} \AND
  \bauthor{\bsnm{Now{\'e}},~\bfnm{Ann}\binits{A.}}
(\byear{2013}).
\btitle{Batch effect removal methods for microarray gene expression data
  integration: a survey}.
\bjournal{Briefings in bioinformatics}
\bvolume{14}
\bpages{469--490}.
\end{barticle}
\endbibitem

\bibitem[\protect\citeauthoryear{Leek and Storey}{2007}]{leek2007}
\begin{barticle}[author]
\bauthor{\bsnm{Leek},~\bfnm{Jeffrey~T}\binits{J.~T.}} \AND
  \bauthor{\bsnm{Storey},~\bfnm{John~D}\binits{J.~D.}}
(\byear{2007}).
\btitle{Capturing heterogeneity in gene expression studies by surrogate
  variable analysis}.
\bjournal{PLoS genetics}
\bvolume{3}
\bpages{1724-1735}.
\end{barticle}
\endbibitem

\bibitem[\protect\citeauthoryear{Leek and Storey}{2008}]{leek2008}
\begin{barticle}[author]
\bauthor{\bsnm{Leek},~\bfnm{Jeffrey~T}\binits{J.~T.}} \AND
  \bauthor{\bsnm{Storey},~\bfnm{John~D}\binits{J.~D.}}
(\byear{2008}).
\btitle{A general framework for multiple testing dependence}.
\bjournal{Proceedings of the National Academy of Sciences}
\bvolume{105}
\bpages{18718--18723}.
\end{barticle}
\endbibitem

\bibitem[\protect\citeauthoryear{Leek et~al.}{2010}]{leek2010}
\begin{barticle}[author]
\bauthor{\bsnm{Leek},~\bfnm{Jeffrey~T}\binits{J.~T.}},
  \bauthor{\bsnm{Scharpf},~\bfnm{Robert~B}\binits{R.~B.}},
  \bauthor{\bsnm{Bravo},~\bfnm{H{\'e}ctor~Corrada}\binits{H.~C.}},
  \bauthor{\bsnm{Simcha},~\bfnm{David}\binits{D.}},
  \bauthor{\bsnm{Langmead},~\bfnm{Benjamin}\binits{B.}},
  \bauthor{\bsnm{Johnson},~\bfnm{W~Evan}\binits{W.~E.}},
  \bauthor{\bsnm{Geman},~\bfnm{Donald}\binits{D.}},
  \bauthor{\bsnm{Baggerly},~\bfnm{Keith}\binits{K.}} \AND
  \bauthor{\bsnm{Irizarry},~\bfnm{Rafael~A}\binits{R.~A.}}
(\byear{2010}).
\btitle{Tackling the widespread and critical impact of batch effects in
  high-throughput data}.
\bjournal{Nature Reviews Genetics}
\bvolume{11}
\bpages{733--739}.
\end{barticle}
\endbibitem

\bibitem[\protect\citeauthoryear{Li and Zhong}{2014}]{li2014rate}
\begin{barticle}[author]
\bauthor{\bsnm{Li},~\bfnm{Jun}\binits{J.}} \AND
  \bauthor{\bsnm{Zhong},~\bfnm{Ping-Shou}\binits{P.-S.}}
(\byear{2014}).
\btitle{A rate optimal procedure for sparse signal recovery under dependence}.
\bjournal{arXiv preprint arXiv:1410.2839}.
\end{barticle}
\endbibitem

\bibitem[\protect\citeauthoryear{Lin et~al.}{2006}]{lin2006}
\begin{barticle}[author]
\bauthor{\bsnm{Lin},~\bfnm{Daniel~W}\binits{D.~W.}},
  \bauthor{\bsnm{Coleman},~\bfnm{Ilsa~M}\binits{I.~M.}},
  \bauthor{\bsnm{Hawley},~\bfnm{Sarah}\binits{S.}},
  \bauthor{\bsnm{Huang},~\bfnm{Chung~Y}\binits{C.~Y.}},
  \bauthor{\bsnm{Dumpit},~\bfnm{Ruth}\binits{R.}},
  \bauthor{\bsnm{Gifford},~\bfnm{David}\binits{D.}},
  \bauthor{\bsnm{Kezele},~\bfnm{Philip}\binits{P.}},
  \bauthor{\bsnm{Hung},~\bfnm{Hau}\binits{H.}},
  \bauthor{\bsnm{Knudsen},~\bfnm{Beatrice~S}\binits{B.~S.}},
  \bauthor{\bsnm{Kristal},~\bfnm{Alan~R}\binits{A.~R.}} \betal{et~al.}
(\byear{2006}).
\btitle{Influence of surgical manipulation on prostate gene expression:
  implications for molecular correlates of treatment effects and disease
  prognosis}.
\bjournal{Journal of clinical oncology}
\bvolume{24}
\bpages{3763--3770}.
\end{barticle}
\endbibitem

\bibitem[\protect\citeauthoryear{Maronna, Martin and Yohai}{2006}]{maronna2006}
\begin{bbook}[author]
\bauthor{\bsnm{Maronna},~\bfnm{Ricardo~A.}\binits{R.~A.}},
  \bauthor{\bsnm{Martin},~\bfnm{Douglas~R.}\binits{D.~R.}} \AND
  \bauthor{\bsnm{Yohai},~\bfnm{Victor~J.}\binits{V.~J.}}
(\byear{2006}).
\btitle{Robust statistics: Theory and Methods}.
\bpublisher{John Wiley \& Sons, Chichester}.
\end{bbook}
\endbibitem

\bibitem[\protect\citeauthoryear{Onatski}{2010}]{onatski2010determining}
\begin{barticle}[author]
\bauthor{\bsnm{Onatski},~\bfnm{Alexei}\binits{A.}}
(\byear{2010}).
\btitle{Determining the number of factors from empirical distribution of
  eigenvalues}.
\bjournal{The Review of Economics and Statistics}
\bvolume{92}
\bpages{1004--1016}.
\end{barticle}
\endbibitem

\bibitem[\protect\citeauthoryear{Owen}{2005}]{owen2005}
\begin{barticle}[author]
\bauthor{\bsnm{Owen},~\bfnm{Art~B}\binits{A.~B.}}
(\byear{2005}).
\btitle{Variance of the number of false discoveries}.
\bjournal{Journal of the Royal Statistical Society: Series B (Statistical
  Methodology)}
\bvolume{67}
\bpages{411--426}.
\end{barticle}
\endbibitem

\bibitem[\protect\citeauthoryear{Owen and Wang}{2016}]{owen2015}
\begin{barticle}[author]
\bauthor{\bsnm{Owen},~\bfnm{Art~B}\binits{A.~B.}} \AND
  \bauthor{\bsnm{Wang},~\bfnm{Jingshu}\binits{J.}}
(\byear{2016}).
\btitle{Bi-cross-validation for factor analysis}.
\bjournal{Statistical Science}
\bpages{(to appear)}.
\end{barticle}
\endbibitem

\bibitem[\protect\citeauthoryear{Pearl}{2009}]{pearl2009}
\begin{bbook}[author]
\bauthor{\bsnm{Pearl},~\bfnm{Judea}\binits{J.}}
(\byear{2009}).
\btitle{Causality: models, reasoning and inference}.
\bpublisher{Cambridge Univ Press}.
\end{bbook}
\endbibitem

\bibitem[\protect\citeauthoryear{Perry and Pillai}{2013}]{perry2013degrees}
\begin{barticle}[author]
\bauthor{\bsnm{Perry},~\bfnm{P.~O.}\binits{P.~O.}} \AND
  \bauthor{\bsnm{Pillai},~\bfnm{N.~S.}\binits{N.~S.}}
(\byear{2013}).
\btitle{Degrees of freedom for combining regression with factor analysis}.
\bjournal{arXiv preprint arXiv:1310.7269}.
\end{barticle}
\endbibitem

\bibitem[\protect\citeauthoryear{Pesaran}{2004}]{pesaran2004}
\begin{btechreport}[author]
\bauthor{\bsnm{Pesaran},~\bfnm{M.~H.}\binits{M.~H.}}
(\byear{2004}).
\btitle{{‘General Diagnostic Tests for Cross Section Dependence in
  Panels’}}
\btype{Cambridge Working Papers in Economics} No. \bnumber{0435},
\bpublisher{Faculty of Economics, University of Cambridge}.
\end{btechreport}
\endbibitem

\bibitem[\protect\citeauthoryear{Price et~al.}{2006}]{price2006}
\begin{barticle}[author]
\bauthor{\bsnm{Price},~\bfnm{Alkes~L}\binits{A.~L.}},
  \bauthor{\bsnm{Patterson},~\bfnm{Nick~J}\binits{N.~J.}},
  \bauthor{\bsnm{Plenge},~\bfnm{Robert~M}\binits{R.~M.}},
  \bauthor{\bsnm{Weinblatt},~\bfnm{Michael~E}\binits{M.~E.}},
  \bauthor{\bsnm{Shadick},~\bfnm{Nancy~A}\binits{N.~A.}} \AND
  \bauthor{\bsnm{Reich},~\bfnm{David}\binits{D.}}
(\byear{2006}).
\btitle{Principal components analysis corrects for stratification in
  genome-wide association studies}.
\bjournal{Nature genetics}
\bvolume{38}
\bpages{904--909}.
\end{barticle}
\endbibitem

\bibitem[\protect\citeauthoryear{Ransohoff}{2005}]{ransohoff2005}
\begin{barticle}[author]
\bauthor{\bsnm{Ransohoff},~\bfnm{David~F}\binits{D.~F.}}
(\byear{2005}).
\btitle{Bias as a threat to the validity of cancer molecular-marker research}.
\bjournal{Nature Reviews Cancer}
\bvolume{5}
\bpages{142--149}.
\end{barticle}
\endbibitem

\bibitem[\protect\citeauthoryear{Rhodes and Chinnaiyan}{2005}]{rhodes2005}
\begin{barticle}[author]
\bauthor{\bsnm{Rhodes},~\bfnm{Daniel~R}\binits{D.~R.}} \AND
  \bauthor{\bsnm{Chinnaiyan},~\bfnm{Arul~M}\binits{A.~M.}}
(\byear{2005}).
\btitle{Integrative analysis of the cancer transcriptome}.
\bjournal{Nature genetics}
\bvolume{37}
\bpages{S31--S37}.
\end{barticle}
\endbibitem

\bibitem[\protect\citeauthoryear{Schwartzman}{2010}]{schwartzman2010comment}
\begin{barticle}[author]
\bauthor{\bsnm{Schwartzman},~\bfnm{Armin}\binits{A.}}
(\byear{2010}).
\btitle{Comment}.
\bjournal{Journal of the American Statistical Association}
\bvolume{105}
\bpages{1059--1063}.
\end{barticle}
\endbibitem

\bibitem[\protect\citeauthoryear{Schwartzman, Dougherty and
  Taylor}{2008}]{schwartzman2008}
\begin{barticle}[author]
\bauthor{\bsnm{Schwartzman},~\bfnm{Armin}\binits{A.}},
  \bauthor{\bsnm{Dougherty},~\bfnm{Robert~F}\binits{R.~F.}} \AND
  \bauthor{\bsnm{Taylor},~\bfnm{Jonathan~E}\binits{J.~E.}}
(\byear{2008}).
\btitle{False discovery rate analysis of brain diffusion direction maps}.
\bjournal{The Annals of Applied Statistics}
\bvolume{2}
\bpages{153--175}.
\end{barticle}
\endbibitem

\bibitem[\protect\citeauthoryear{She and Owen}{2011}]{she2011}
\begin{barticle}[author]
\bauthor{\bsnm{She},~\bfnm{Yiyuan}\binits{Y.}} \AND
  \bauthor{\bsnm{Owen},~\bfnm{Art~B}\binits{A.~B.}}
(\byear{2011}).
\btitle{Outlier detection using nonconvex penalized regression}.
\bjournal{Journal of the American Statistical Association}
\bvolume{106}
\bpages{626-639}.
\end{barticle}
\endbibitem

\bibitem[\protect\citeauthoryear{Singh et~al.}{2011}]{singh2011}
\begin{barticle}[author]
\bauthor{\bsnm{Singh},~\bfnm{Dave}\binits{D.}},
  \bauthor{\bsnm{Fox},~\bfnm{Steven~M}\binits{S.~M.}},
  \bauthor{\bsnm{Tal-Singer},~\bfnm{Ruth}\binits{R.}},
  \bauthor{\bsnm{Plumb},~\bfnm{Jonathan}\binits{J.}},
  \bauthor{\bsnm{Bates},~\bfnm{Stewart}\binits{S.}},
  \bauthor{\bsnm{Broad},~\bfnm{Peter}\binits{P.}},
  \bauthor{\bsnm{Riley},~\bfnm{John~H}\binits{J.~H.}} \AND
  \bauthor{\bsnm{Celli},~\bfnm{Bartolome}\binits{B.}}
(\byear{2011}).
\btitle{Induced sputum genes associated with spirometric and radiological
  disease severity in {COPD} ex-smokers}.
\bjournal{Thorax}
\bvolume{66}
\bpages{489--495}.
\end{barticle}
\endbibitem

\bibitem[\protect\citeauthoryear{Storey, Taylor and
  Siegmund}{2004}]{storey2004}
\begin{barticle}[author]
\bauthor{\bsnm{Storey},~\bfnm{John~D}\binits{J.~D.}},
  \bauthor{\bsnm{Taylor},~\bfnm{Jonathan~E}\binits{J.~E.}} \AND
  \bauthor{\bsnm{Siegmund},~\bfnm{David}\binits{D.}}
(\byear{2004}).
\btitle{Strong control, conservative point estimation and simultaneous
  conservative consistency of false discovery rates: a unified approach}.
\bjournal{Journal of the Royal Statistical Society: Series B (Statistical
  Methodology)}
\bvolume{66}
\bpages{187--205}.
\end{barticle}
\endbibitem

\bibitem[\protect\citeauthoryear{Sun}{2011}]{sun2011}
\begin{bphdthesis}[author]
\bauthor{\bsnm{Sun},~\bfnm{Yunting}\binits{Y.}}
(\byear{2011}).
\btitle{On latent systemic effects in multiple hypotheses}
\btype{PhD thesis},
\bpublisher{Stanford University}.
\end{bphdthesis}
\endbibitem

\bibitem[\protect\citeauthoryear{Sun and Cai}{2009}]{sun2009}
\begin{barticle}[author]
\bauthor{\bsnm{Sun},~\bfnm{Wenguang}\binits{W.}} \AND
  \bauthor{\bsnm{Cai},~\bfnm{Tony}\binits{T.}}
(\byear{2009}).
\btitle{Large-scale multiple testing under dependence}.
\bjournal{Journal of the Royal Statistical Society: Series B (Statistical
  Methodology)}
\bvolume{71}
\bpages{393--424}.
\end{barticle}
\endbibitem

\bibitem[\protect\citeauthoryear{Sun, Zhang and Owen}{2012}]{sun2012}
\begin{barticle}[author]
\bauthor{\bsnm{Sun},~\bfnm{Yunting}\binits{Y.}},
  \bauthor{\bsnm{Zhang},~\bfnm{Nancy~R}\binits{N.~R.}} \AND
  \bauthor{\bsnm{Owen},~\bfnm{Art~B}\binits{A.~B.}}
(\byear{2012}).
\btitle{Multiple hypothesis testing adjusted for latent variables, with an
  application to the agemap gene expression data}.
\bjournal{The Annals of Applied Statistics}
\bvolume{6}
\bpages{1664--1688}.
\end{barticle}
\endbibitem

\bibitem[\protect\citeauthoryear{Tusher, Tibshirani and Chu}{2001}]{tusher2001}
\begin{barticle}[author]
\bauthor{\bsnm{Tusher},~\bfnm{Virginia~Goss}\binits{V.~G.}},
  \bauthor{\bsnm{Tibshirani},~\bfnm{Robert}\binits{R.}} \AND
  \bauthor{\bsnm{Chu},~\bfnm{Gilbert}\binits{G.}}
(\byear{2001}).
\btitle{Significance analysis of microarrays applied to the ionizing radiation
  response}.
\bjournal{Proceedings of the National Academy of Sciences}
\bvolume{98}
\bpages{5116--5121}.
\end{barticle}
\endbibitem

\bibitem[\protect\citeauthoryear{Vawter et~al.}{2004}]{vawter2004}
\begin{barticle}[author]
\bauthor{\bsnm{Vawter},~\bfnm{Marquis~P}\binits{M.~P.}},
  \bauthor{\bsnm{Evans},~\bfnm{Simon}\binits{S.}},
  \bauthor{\bsnm{Choudary},~\bfnm{Prabhakara}\binits{P.}},
  \bauthor{\bsnm{Tomita},~\bfnm{Hiroaki}\binits{H.}},
  \bauthor{\bsnm{Meador-Woodruff},~\bfnm{Jim}\binits{J.}},
  \bauthor{\bsnm{Molnar},~\bfnm{Margherita}\binits{M.}},
  \bauthor{\bsnm{Li},~\bfnm{Jun}\binits{J.}},
  \bauthor{\bsnm{Lopez},~\bfnm{Juan~F}\binits{J.~F.}},
  \bauthor{\bsnm{Myers},~\bfnm{Rick}\binits{R.}},
  \bauthor{\bsnm{Cox},~\bfnm{David}\binits{D.}} \betal{et~al.}
(\byear{2004}).
\btitle{Gender-specific gene expression in post-mortem human brain:
  localization to sex chromosomes}.
\bjournal{Neuropsychopharmacology}
\bvolume{29}
\bpages{373-384}.
\end{barticle}
\endbibitem

\bibitem[\protect\citeauthoryear{Wang, Cui and Li}{2015}]{wang2015}
\begin{barticle}[author]
\bauthor{\bsnm{Wang},~\bfnm{Shaoping}\binits{S.}},
  \bauthor{\bsnm{Cui},~\bfnm{Guowei}\binits{G.}} \AND
  \bauthor{\bsnm{Li},~\bfnm{Kunpeng}\binits{K.}}
(\byear{2015}).
\btitle{Factor-augmented regression models with structural change}.
\bjournal{Economics Letters}
\bvolume{130}
\bpages{124--127}.
\end{barticle}
\endbibitem

\bibitem[\protect\citeauthoryear{Wang et~al.}{2015}]{wang2015supplement}
\begin{barticle}[author]
\bauthor{\bsnm{Wang},~\bfnm{Jingshu}\binits{J.}},
  \bauthor{\bsnm{Zhao},~\bfnm{Qingyuan}\binits{Q.}},
  \bauthor{\bsnm{Hastie},~\bfnm{Trevor}\binits{T.}} \AND
  \bauthor{\bsnm{Owen},~\bfnm{Art~B.}\binits{A.~B.}}
(\byear{2015}).
\btitle{Supplement to "Confounder Adjustment in Multiple Hypothesis Testing"}.
\end{barticle}
\endbibitem

\bibitem[\protect\citeauthoryear{Yohai}{1987}]{yohai1987high}
\begin{barticle}[author]
\bauthor{\bsnm{Yohai},~\bfnm{Victor~J}\binits{V.~J.}}
(\byear{1987}).
\btitle{High breakdown-point and high efficiency robust estimates for
  regression}.
\bjournal{The Annals of Statistics}
\bpages{642--656}.
\end{barticle}
\endbibitem

\end{thebibliography}


\begin{thebibliography}{2}

\bibitem[\protect\citeauthoryear{Bai and Li}{2012a}]{bai2012}
\begin{barticle}[author]
\bauthor{\bsnm{Bai},~\bfnm{Jushan}\binits{J.}} \AND
  \bauthor{\bsnm{Li},~\bfnm{Kunpeng}\binits{K.}}
(\byear{2012}a).
\btitle{Statistical analysis of factor models of high dimension}.
\bjournal{The Annals of Statistics}
\bvolume{40}
\bpages{436--465}.
\end{barticle}
\endbibitem

\bibitem[\protect\citeauthoryear{Bai and Li}{2012b}]{bai2012supplement}
\begin{barticle}[author]
\bauthor{\bsnm{Bai},~\bfnm{Jushan}\binits{J.}} \AND
  \bauthor{\bsnm{Li},~\bfnm{Kunpeng}\binits{K.}}
(\byear{2012}b).
\btitle{Supplement to "Statistical analysis of factor models of high
  dimension."}.
\bdoi{10.1214/11-AOS966SUPP}
\end{barticle}
\endbibitem

\end{thebibliography}


\begin{thebibliography}{0}

\end{thebibliography}

\end{document}